\newcommand{\SC}{\Delta_{SC}}
\newcommand{\CS}{\Delta_{CS}}
\newcommand{\Vint}{\mathring{V}}
\newcommand{\bt}{\mathcal{BT}^*}
\newcommand{\fb}{T^{fb}}
\newcommand{\gfb}{T^{gfb}}
\newcommand{\mb}{T^{mb}}
\newcommand{\cat}{T^{cat}}
\theoremstyle{thmstyleone}
\newtheorem{theorem}{Theorem}
\newtheorem{proposition}{Proposition}
\newtheorem{corollary}{Corollary}
\newtheorem{lemma}{Lemma}
\theoremstyle{thmstyletwo}
\newtheorem{remark}{Remark}
\theoremstyle{thmstylethree}
\begin{document}

\title{Revealing the building blocks of tree balance: fundamental units of the Sackin and Colless Indices}

\affil[1]{Institute of Mathematics and Computer Science, University of Greifswald, Walther-Rathenau-Str. 47, 17487 Greifswald, Germany}

\author[1]{\fnm{Linda} \sur{Kn{\"u}ver}}

\author*[1]{\fnm{Mareike} \sur{Fischer}}\email{mareike.fischer@uni-greifswald.de,email@mareikefischer.de}

\abstract{(Im)balance indices can be used to quantify the (im)balance of trees by assigning numerical scores to them. An easy way to generate a new index is to construct a compound index, e.g., a linear combination of established indices. Two of the most prominent and widely used imbalance indices are the Sackin index and the Colless index. In this study, we show that these classic indices are themselves compound in nature: they can be decomposed into more elementary components that independently satisfy the defining properties of a tree (im)balance index. 
We further show that the difference Colless minus Sackin results in another imbalance index that is minimized (amongst others) by all Colless minimal trees. Conversely, the difference Sackin minus Colless forms a balance index. Finally, we compare the building blocks of which the Sackin and the Colless indices consist to these indices as well as to the stairs2 index, which is another index from the literature. Our results suggest that the elementary building blocks we identify are not only foundational to established indices but also valuable tools for analyzing disagreement among indices when comparing the balance of different trees. Along the way, we investigate the so-called echelon tree, which plays an important role for several (im)balance indices, and present the first non-recursive algorithm to construct it.}

\keywords{tree balance, Sackin index, Colless index, binary echelon tree}

\maketitle

\section{Introduction}\label{sec:Intro}
Quantifying the balance of phylogenetic trees is a central task in evolutionary biology and related fields, with applications ranging from model selection to the inference of diversification dynamics \cite{Fischer2023,Kersting2025}. Over the years, more than two dozen tree balance indices have been introduced \cite{Fischer2023}, and even some infinite families of such indices are known \cite{cleary2025}. These indices differ not only in construction but also in their statistical and biological behavior, such as their sensitivity to particular tree shapes or their ability to distinguish between different generative models of tree evolution \cite{Blum2005,Kersting2025}.

This variability has led to the continuing development of new indices, often with the aim of capturing specific aspects of tree balance more effectively. A common approach in the design of new indices is to combine established ones into compound indices, for instance via simple linear combinations \cite[Chapter 25]{Fischer2023}, \cite{kersting2020,Cardona2013,Matsen2006}. Among the most commonly used base indices for such compound construction are the Sackin index \cite{Sackin1972,Fischer2021}, which measures the total of all subtree sizes, and the Colless index \cite{Colless1982,coronado2020}, which measures subtree size imbalances across all internal nodes. Both have been used extensively in theoretical studies and practical applications \cite{Blum2005,Blum2006,Wicke2020,Fischer2023, Kayondo2021}, and they often serve as components in the construction of new indices \cite{Fischer2023,Cardona2013,kersting2020,Matsen2006}.

In this work, we take a step back and ask: How elementary are the Sackin and Colless indices themselves? Surprisingly, we find that both indices can be decomposed into even more basic components (in the sense of using less information of the underlying tree) that, on their own, fulfill the defining criteria of  (im)balance indices. In order to see this,  for each internal node $v$ in a binary rooted tree, we consider the sizes $n_{v_a}$ and $n_{v_b}$ of its two maximal pending subtrees (with $n_{v_a}\geq n_{v_b}$), i.e., of the two subtrees rooted at the children $a$ and $b$ of $v$. We show that the total sum $N_a$ of all $n_{v_a}$ values, $N_a=\sum\limits_{v\in \mathring{V}}n_{v_a}$,  defines an imbalance index, while the corresponding sum $N_b$ of all $n_{v_b}$ values, $N_b=\sum\limits_{v\in \mathring{V}}n_{v_b}$,  defines a balance index. Then, it can be easily shown that the Sackin index is the sum of $N_a$ and $N_b$, while the Colless index is the difference of these values. These decompositions reveal that both Sackin and Colless themselves are in fact mere compound indices built from these more elementary units, which themselves  are (im)balance indices.

Building on this decomposition of the Sackin and Colless indices, we even identify two more  indices, which are new to the literature and which turn out to be closely related to $N_a$ and $N_b$: the difference Colless-Sackin, $\Delta_{CS}$, forms an imbalance index that remains minimal (amongst others) for the same class of balanced trees that minimize Colless, while the difference Sackin-Colless, $\Delta_{SC}$, constitutes a balance index. Concerning all newly discovered indices, i.e., $N_a$, $N_b$, $\Delta_{CS}$ and $\Delta_{SC}$, we are able to fully characterize their extremal trees for all $n$, and in this characterization, the so-called echelon tree plays a decisive role. The echelon tree has so far mainly been known from the context of two other tree balance indices from the literature, namely the so-called stairs2 and $B_1$ balance indices \cite{Currie2024,fischer2026}. In our study, we also present some new insight into this tree and present, for instance, the first non-recursive way to generate and define it.

We compare all new indices mentioned above with the original Sackin and Colless indices, as well as the above mentioned stairs2 index. We argue that the newly identified building blocks -- the sums $N_a$ and $N_b$ -- can serve as insightful tools for investigating cases where different indices disagree on the relative balance of two trees, thereby offering new perspectives for index development and evaluation. 

\section{Preliminaries} \label{sec:Prelim}

\subsection{Definitions and notation}\label{prelim:def}
In this section, we provide the required preliminary knowledge needed to establish our results later on. We start with some basic definitions.\\

\noindent\textbf{Basic tree concepts.} 
The most important concept for the present manuscript is that of a \emph{rooted binary tree} (or simply \emph{tree} for short), which is a directed graph $T = (V(T),E(T))$, with vertex set $V(T)$ and edge set $E(T)$, containing precisely one vertex of in-degree zero, namely the \emph{root} (denoted by $\rho$), such that for every $v \in V(T)$ there exists a unique path from $\rho$ to $v$, and such that each vertex either has has out-degree 2 (such vertices will be referred to as \emph{inner vertices}) or 0 (such vertices will be referred to as \emph{leaves}). For every $n \in \mathbb{N}_{\geq 1}$, we denote by $\bt_n$ the set of (isomorphism classes of) rooted binary trees with $n$ leaves. 

In the following, we use $V^1(T) \subseteq V(T)$ to refer to the leaf set of $T$ (i.e., $V^1(T) = \{v \in V(T): \text{out-degree}(v)=0\}$), and we use $\mathring{V}(T)$ to denote the set of inner vertices of $T$ (i.e., $\mathring{V}(T) = V(T) \setminus V^1(T)$). Moreover, we use $n$ to denote the number of leaves of $T$, i.e., $n = \vert V^1(T) \vert$. Note that $\rho \in \mathring{V}(T)$ if $n \geq 2$. If $n=1$, $T$ consists of only one vertex, which is at the same time the root and its only leaf.

Now, as we only consider binary trees, recall that we can decompose each one of them into its two maximal pending subtrees, i.e., $T=(T_a, T_b)$, where $T_a$ and $T_b$ denote the two subtrees adjacent to the root of $T$. This \emph{standard decomposition} can also be considered for pending subtrees of $T$, i.e., for subtrees $T_v$ rooted at an inner vertex $v$ of $T$ and containing all \emph{descendants} of $v$. The number of leaves in this tree will be denoted by $n_v$. Note that a descendant of $v$ is a vertex $w$ of $T$ for which $v$ lies on the unique path from the root $\rho$ of $T$ to $w$. Likewise, if $w$ is a descendant of $v$, $v$ is an \emph{ancestor} of $w$. If the distance between an ancestor $v$ and its descendant $w$ is precisely one, we also call $v$ the \emph{parent} of $w$ and $w$ the \emph{child} of $v$, 
and two leaves $x$ and $y$ with the same parent are said to form a \emph{cherry} $[x,y]$.

Recall that whenever you have a rooted tree $T$, you can define the \emph{depth} $\delta_v$ of a vertex $v$ of $T$ as the length of the unique path from the root $\rho$ of $T$ to $v$. The \emph{height} $h(T)$ of $T$ is then simply the maximum of all leaf depths, i.e., $h(T)=\max_{x \in V^1(T)} \delta_x$. Given a subset $X'$ of leaves of $T$, the \emph{lowest common ancestor} or \emph{lca} of $X'$ is the unique ancestor of all leaves in $X'$ with the largest depth.\\

\noindent\textbf{ Special trees.}
We now define some specific trees which play crucial roles in the tree balance literature.

We start with the \emph{caterpillar tree} or simply \emph{caterpillar}, denoted by $\cat_n$. It is the rooted binary tree with $n$ leaves that fulfills that either $n=1$, or $n \geq 2$ and then, additionally, the tree has exactly one cherry. It is well known that the caterpillar is the only rooted binary tree in which each inner vertex is adjacent to at least one leaf.

For technical reasons, in the following we sometimes also need to consider the \emph{empty tree} $T^\emptyset$. $T^\emptyset$ is the unique tree with $n=0$ and $h=0$, which consists of no vertices and no edges.

Next, the \emph{maximally balanced tree}, denoted by $\mb_n$, is the rooted binary tree with $n$ leaves in which all inner vertices are \emph{balanced}, i.e., we have $|n_{v_a}-n_{v_b}|\leq 1$ for all $v\in\Vint(T)$, where $v_a$ and $v_b$ denote the children of $v$. It is known for $T_n^{mb}$ that its standard decomposition is $T_n^{mb}=\left(T^{mb}_{\left\lceil \frac{n}{2}\right\rceil},T^{mb}_{\left\lfloor \frac{n}{2}\right\rfloor}\right)$. In particular, for $T_n^{mb}$ we have $n_a=\left\lceil \frac{n}{2}\right\rceil$ and $n_b=\left\lfloor \frac{n}{2}\right\rfloor$, where $n_a$ and $n_b$ denote the sizes of the maximum pending subtrees of $T_n^{mb}$ with $n_a\geq n_b$. 

We continue with the \emph{greedy from the bottom tree}, denoted by $\gfb_n$. It is the rooted binary tree with $n$ leaves that results from greedily clustering trees of minimal leaf numbers, starting with $n$ single vertices and proceeding until only one tree is left as described by~\cite[Algorithm 2]{coronado2020}.

We now consider a special tree that only exists for numbers of leaves $n$ which are a power of two, i.e., in case we have $n=2^h$ for some $h \in \mathbb{N}$. This tree is the \emph{fully balanced tree of height $h$}, denoted by $\fb_h$, which is the rooted binary tree with $n=2^h$ leaves in which all leaves have depth precisely $h$. Note that for $h \geq 1$, we have $\fb_h = \left(\fb_{h-1}, \fb_{h-1}\right)$. Moreover, for $h \in \mathbb{N}_{\geq 0}$, $\fb_h= \mb_{2^h} = \gfb_{2^h}$. 

Another tree that is of high relevance in our manuscript is the \emph{binary echelon tree} or \emph{echelon tree} for short, which is denoted by $T_n^{be}$. This tree was recursively defined in \cite{Currie2024} as follows: If $n=0$, $T_n^{be}$ is the empty tree $T^{\emptyset}$,  and if $n=1$, $T_n^{be}$ consists of a single vertex. Now, for $n>1$, let $k_n=\left\lceil \log_2n \right\rceil$. Then, $T_n^{be}$ contains a maximal pending subtree $T_a$ which equals $T_{k_n-1}^{fb}$ and a maximal pending subtree $T_b$ which equals $T_{n-2^{k_n-1}}^{be}$. Thus, the binary echelon tree is uniquely defined (up to isomorphism) for all $n$. See Figure~\ref{fig:moreminima} for an exemplary depiction of $T_9^{be}$ and a comparison to $T_9^{gfb}$.

Last but not least, note that all special trees introduced here -- $T_n^{mb}$, $T_n^{gfb}$, $T_n^{be}$, $T_{k_n}^{fb}$ and $T_n^{cat}$ -- have the property that for each of their inner vertices $v\in \mathring{V}$, the induced pending subtree $T_v$ is a special tree of the same kind (e.g., all induced pending subtrees of a caterpillar tree are also caterpillar trees, etc.). \\

\noindent\textbf{ Tree balance concepts.}
In phylogenetics, but also in other research areas, tree balance plays a vital role. Generally, balance and imbalance indices are distinguished. However, note that every balance index can be converted into an imbalance index by inverting its sign (and vice versa). 

In order to define (im)balance indices, however, we first need to consider tree shape statistics: A \emph{tree shape statistic} is a function $t: \bt_n\rightarrow \mathbb{R}$. Based on this, a tree imbalance index is defined as follows: A tree shape statistic $t$ is called an \emph{imbalance index} if and only if
	\begin{enumerate}[(i)]
		\item the caterpillar $\cat_n$ is the unique tree maximizing $t$ on its domain $\bt_n$ for all $n \geq 1$ and
		\item the fully balanced tree $\fb_h$ is the unique tree minimizing $t$ on $\bt_n$ for all $n = 2^h$ with $h \in \mathbb{N}$.
	\end{enumerate}

    Similarly, a tree shape statistic $t$ is called a \emph{balance index} if and only if
	\begin{enumerate}[(i)]
		\item the caterpillar $\cat_n$ is the unique tree minimizing $t$ on its domain $\bt_n$ for all $n \geq 1$ and
		\item the fully balanced tree $\fb_h$ is the unique tree maximizing $t$ on $\bt_n$ for all $n = 2^h$ with $h \in \mathbb{N}$.
	\end{enumerate}

Following, for instance, \cite{Fischer2023}, we call a balance or imbalance index \emph{local} if it fulfills the following criterion: If two trees differ only in one rooted subtree, the difference between their indices is equal to the difference between the indices of these differing subtrees \cite{Fischer2023,Mir2013}. More precisely, a balance or imbalance index $t$ is local, if
    $t(T)-t(T')=t(T_v)-t(T_v')$
    for $T$, $T'\in\bt_n$ with $T'$ obtained from $T$ by replacing the pending subtree $T_v$ by a subtree $T_v'$ of the same size.  
Following \cite{Fischer2023,Matsen2007}, a \emph{binary recursive tree shape statistic} of length $x$ is an ordered pair $(\lambda,r)$ consisting of a vector $\lambda\in\mathbb{R}^x$ and an $x$-vector $r$ of symmetric functions $r_i:\mathbb{R}^x\times\mathbb{R}^x\rightarrow \mathbb{R}$, also called \emph{recursions}. Here, $x$ is called \emph{length} of $r$ and denotes the number of recursions. Moreover, $\lambda$ with $\lambda_i = r_i(T)$ for $T$ with $n=1$ contains the base case for each recursion, for $i=1,\ldots,x$. Note that even though each recursion $r_i$ operates on 2 vectors of length $x$, namely $(r_1(T_a),\ldots,r_x(T_a))$ and $(r_1(T_b),\ldots,r_x(T_b))$, that contain real numbers, following \cite{Fischer2023} we use the shorthand $r_i(T_a,T_b)$ instead of $r_i((r_1(T_a),\ldots,r_x(T_a)),(r_1(T_b),\ldots,r_x(T_b)))$. Here, $T=(T_a,T_b)$ is the standard decomposition of a binary tree $T\in\bt_n$.

As stated before, more than 25 (families of) balance and imbalance indices are known in the literature \cite{Fischer2023}. However, in this manuscript, we mainly focus on two of the most famous ones, namely on the Sackin and Colless index, both of which are imbalance indices.

\begin{itemize}
\item The \emph{Sackin index} \cite{Sackin1972, Fischer2021, Fischer2023} of a rooted binary tree $T$ is defined as 
    \[S(T) = \sum\limits_{v\in\Vint(T)}n_v=\sum\limits_{x \in V^1(T)}\delta_x.\] Note that the two sums can easily be  shown to be equal (cf. \cite{Fischer2021}). Also note that whenever $n=0$, i.e., when we consider the empty tree $T^\emptyset$ with $\mathring{V}(T^\emptyset)=\emptyset$, the sum of the Sackin index is empty. Thus, by convention, in this case we have $S(T^\emptyset)=0$.

    Using the convention $n_{v_a}\geq n_{v_b}$ for all $v\in \mathring{V}(T)$, the Sackin index can also be re-written as:  $$S(T)=\sum\limits_{v\in \mathring{V}(T)}n_v=\sum\limits_{v\in\mathring{V}(T)} (n_{v_a}+n_{v_b})=\sum\limits_{v\in\mathring{V}(T)} n_{v_a}+\sum\limits_{v\in\mathring{V}(T)} n_{v_b}=N_a(T)+N_b(T),$$ where for each $v\in \mathring{V}(T)$, $v_a$ and $v_b$ denote the children of $v$, and where $N_a(T):=\sum\limits_{v\in\mathring{V}(T)}n_{v_a}$ and $N_b(T):=\sum\limits_{v\in\mathring{V}(T)}n_{v_b}$. 
\item The \emph{Colless index} \cite{Colless1990,Fischer2023} of a rooted binary tree $T$ is defined as \[C(T) = \sum\limits_{v\in\Vint(T)}|n_{v_a}-n_{v_b}|,\]
where $n_{v_a}$ and $n_{v_b}$ denote the sizes of the maximal pending subtree of $v \in \mathring{V}(T)$, respectively. Using the convention $n_{v_a}\geq n_{v_b}$ for all $v\in \mathring{V}(T)$, this can be re-written as:  $$C(T)=\sum\limits_{v\in\mathring{V}(T)} (n_{v_a}-n_{v_b})=\sum\limits_{v\in\mathring{V}(T)} n_{v_a}-\sum\limits_{v\in\mathring{V}(T)} n_{v_b}=N_a(T)-N_b(T),$$ where $N_a(T)$ and $N_b(T)$ are as defined above. Note again that whenever $n=0$, i.e., when we consider $T^\emptyset$, the sum of the Colless index is empty, implying that we have $C(T^\emptyset)=0$.
\end{itemize}

In this manuscript, we will analyze the sums $N_a$ and $N_b$ more in-depth. We will show that while $N_a$ is an imbalance index, $N_b$ is a balance index. In this context, the difference $C(T)-S(T)$ will also play an important role (and turn out to be an imbalance index, too). Therefore, we now also introduce the following two notions:

\begin{itemize}
\item The $\Delta_{CS}$ index of a rooted binary tree $T$ is defined as: $\Delta_{CS}(T)=C(T)-S(T)$, and
\item the $\Delta_{SC}$ index of a rooted binary tree $T$ is defined as: $\Delta_{SC}(T)=S(T)-C(T)=-\Delta_{CS}(T)$.
\end{itemize}

Note that we already here refer to $\Delta_{CS}$ and $\Delta_{SC}$ as indices rather than mere tree shape statistics, because we will show subsequently that they are indeed (im)balance indices. 

Next, we turn our attention to some results known from the literature.

\subsection{Prior results}

In this subsection, we remind the reader of some results from the literature which will be used later on to derive our own results. The results as presented here are mainly taken from \cite{Fischer2023}, but can also be found, for instance, in \cite{Fischer2021} (for the Sackin index), in \cite{coronado2020} (for the Colless index and $T_n^{gfb}$) and in \cite{cleary2025} (for $T_n^{gfb}$).

\subsubsection{Known results on \texorpdfstring{$T_n^{gfb}$}{Tgfb}}\label{sec:prelim_gfb}
In our manuscript, we will exploit the following decomposition of $T_n^{gfb}$, which is based on a related result from \cite{coronado2020}, but which -- in the exact form we are using it in -- was more recently stated in \cite{cleary2025}:

For $n \geq 2$, we let $T_n^{gfb}=(T_a,T_b)$, where $n_a$ and $n_b$ denote the sizes of $T_a$ and $T_b$, respectively. Let $k_n=\lceil\log_2(n)\rceil$. Then, we have:
\begin{enumerate}
\item If $2^{k_n-1} < n \leq 3\cdot 2^{k_{n}-2}$, then $n_a=n-2^{k_n-2}$, $n_b=2^{k_n-2}$ and $T_b=T_{k_n-2}^{fb}$.
\item If $3 \cdot 2^{{k_{n}}-2}\leq n\leq 2^{k_{n}}$, then $ n_a=2^{k_{n}-1}$, $n_b=n-2^{k_{n}-1}$ and $T_a=T_{k_n-1}^{fb}$.
\end{enumerate}

\subsubsection{Known results on \texorpdfstring{$T_n^{be}$}{Tbe}}\label{sec:prelim_be}
The recursive definition of the echelon tree already shows that in $T_n^{be}=(T_a^{be},T_b^{be})$, where $T_a^{be}$ has $n_a=2^{k_{n}-1}$ many leaves and $T_b^{be}$ has $n_b=n-2^{k_{n}-1}$ many leaves, we have $(n_a,n_b)=(2^{k_{n}-1},n-2^{k_{n}-1})$, where $k_n=\lceil \log_2 n \rceil$. We will refer to this pair $(n_a,n_b)$ as the \emph{echelon partition} of $n$. 
     It is important to note that when $n=2^{k_n}$, we have $T_n^{be}=T_n^{mb}=T_n^{gfb}=T_{k_n}^{fb}$.
Furthermore, note that it has already been observed in the literature that the echelon tree is a so-called rooted binary weight tree  \cite{Currie2024} as introduced in \cite{Kersting2021}. However, a non-recursive construction of the echelon tree has so far not been stated in the literature, and as for most values of $n$ there are several rooted binary weight trees with $n$ leaves, it is not obvious which one coincides with the echelon tree. We will therefore formally introduce rooted binary weight trees and give an explicit construction for the echelon tree in Section~\ref{sec:echelon_results}.

\subsubsection{Known results on the Sackin index} \label{sec:prelim_sackin}

The \emph{Sackin index} is an imbalance index, and it was shown in \cite{Fischer2021} that $S(T)\geq -2^{k_n}+n\cdot(k_n+1)$, where $k_n=\lceil \log_2(n) \rceil$, and that the given bound is tight. The trees that achieve this  minimum have been fully characterized in \cite{Fischer2021}, and it was shown there that the set of minimal trees consists precisely of the ones in which each pair of leaves $x$ and $y$ has the property that $|\delta_x-\delta_y|\leq 1$. In particular, this implies that for all $n \in \mathbb{N}$, $\mb_n$ and $\gfb_n$ are contained in the set of Sackin-minimal trees. Moreover, from Section~\ref{sec:prelim_gfb} it can be easily seen that if a tree  
$T=(T_a, T_b)$, with $T_a\in \bt_{n_a}$ and $T_b\in \bt_{n_b}$, is a minimal Sackin tree with $n$ leaves, then ${n}_a - {n}_b\leq n^{gfb}_a-n^{gfb}_b$, where $n^{gfb}_a$ and $n^{gfb}_b$ denote the leaf numbers of the maximal pending subtrees of $T_n^{gfb}$, respectively, with $n^{gfb}_a \geq n^{gfb}_b$. This is due to the fact that otherwise, we could find two leaves $x$, $y$ in $T$ with $|\delta_x-\delta_y|\geq 2$. Also, for each pair $(n_a,n_b)$ with $n_a+n_b=n$ and $n_a-n_b\leq n_a^{gfb}-n_b^{gfb}$ there is a Sackin minimal tree $T$ with $T=(T_a,T_b)$ and $T_a$ of size $n_a$ and $T_b$ of size $n_b$ (as these minima can be constructed from attaching cherries to leaves of $T_{k_n-1}^{fb}$, cf. \cite{Fischer2021,Fischer2023}). We therefore refer to such pairs $(n_a,n_b)$ with $n_a+n_b=n$ and $n_a-n_b\leq n_a^{gfb}-n_b^{gfb}$ as \emph{Sackin partitions} of $n$.

Note that for $\fb_n$, we have $S(\fb_{k_n})=2^{k_n}\cdot k_n$, and $\fb_{k_n}$ is the unique minimum of the Sackin index for $n=2^{k_n}$.

Concerning the maximum, for every binary tree $T \in \bt_n$, the Sackin index fulfills
\[
S(T) \leq \frac{n\cdot (n+1)}{2}-1.
\]
This bound is tight for all \( n \in \mathbb{N}_{\geq 1} \) and is achieved precisely by \( \cat_n \). 

Moreover, for every binary tree \( T \in \bt_n \), the Sackin index \( S(T) \) can be computed in time \( O(n) \), and the Sackin index is local. For $T=(T_a,T_b)$ with $n$ leaves, the Sackin index fulfills $S(T)=S(T_a)+S(T_b)+n$ and is thus also recursive  \cite{Fischer2021,Fischer2023}. 

\subsubsection{Known results on the Colless index}\label{sec:prelim_Colless}

The Colless index is an imbalance index \cite{coronado2020,Fischer2023}. Several explicit formulas to express $c_n$ for each value of $n\in \mathbb{N}$, where $c_n$ denotes the minimum value of the Colless index of a tree in $\bt_n$, have been established \cite{hamoudi2017,coronado2020}, but we will exemplarily use only the following one: Let $\breve{s}(x)$ denote the distance from $x\in\mathbb{R}$ to its nearest integer, which implies $\breve{s}(x)=\min_{z \in \mathbb{Z}}|x-z|$. Then, \[ C(T)\geq \sum\limits_{j=1}^{k_n-1} 2^j\cdot \breve{s}(2^{-j}\cdot n), \] where $k_n=\lceil\log_2(n)\rceil$. The bound is tight for all $n\in\mathbb{N}_{\geq 1}$, and it equals  $C(\fb_{k_n})=0$ for $n=2^k$ for all $k\in \mathbb{N}$. The trees that achieve the minimum for general values of $n$ have been fully characterized in \cite{coronado2020}, and it was in particular shown there that for each $n$, $\mb_n$ and $\gfb_n$ are contained in the set of Colless-minimal trees. As it is somewhat complex, we refrain from re-stating the characterization of all Colless minima here, but we point out that it is based on a full characterization of the set $QB(n)$ of pairs $(n_a,n_b)$ with $n_a\geq n_b$ and $n_a+n_b=n$ such that $T=(T_a,T_b)$ -- where $T_a$ is a Colless minimal tree on $n_a$ leaves and $T_b$ is a Colless minimal tree on $n_b$ leaves -- is also Colless minimal. The set $QB(n)$ will also play a role in our manuscript, and we will refer to pairs $(n_a,n_b) \in QB(n)$ as \emph{Colless partitions} of $n$. Note that it was shown in \cite{coronado2020} that if $(n_a,n_b)$ is a Colless partition,  then it is also a Sackin partition. Moreover, it was shown in the same manuscript that all Colless-minimal trees are also Sackin-minimal (but not vice versa). 
    
    Note that the Sackin index is an upper bound for the Colless index. In particular, we have $C(T)\leq S(T)$ for all trees $T \in \bt_n$, with equality if and only if $n=1$. The latter is true as in case $T$ consists only of a single leaf, we have $C(T)=S(T)=0$, as both indices by definition sum up over all inner vertices, but since $T$ has no inner vertex, these sums are both empty. Moreover, when $n>1$, we may denote for each $v \in \mathring{V}(T)$ the children of $v$ with $v_a$ and $v_b$ such that $n_{v_a}\geq n_{v_b}\geq 1$. Then, we have $C(T)=\sum\limits_{v\in \mathring{V(T)}}n_{v_a}-n_{v_b}< \sum\limits_{v\in \mathring{V(T)}}n_{v_a}+n_{v_b}=S(T)$, where the inequality is true as $n_{v_b}\geq 1$ for all $v$. This shows that indeed $C(T) \leq S(T)$.
\par\vspace{0.3cm}
Concerning the maximum, for every binary tree $T\in\bt_n$, the Colless index fulfills \[ C(T)\leq \frac{(n-1)(n-2)}{2}.  \] This bound is tight for all $n\in\mathbb{N}_{\geq 1}$ and is achieved precisely by $\cat_n$.  

Moreover, for every binary tree $T\in\bt_n$, the Colless index $C(T)$ is local and can be computed in time $O(n)$. Additionally, for a tree $T=(T_a,T_b)$, where $n_a\geq n_b$, the Colless index fulfills $C(T)=C(T_a)+C(T_b)+n_a-n_b$ and is thus a recursive tree shape statistic \cite{Fischer2023,coronado2020}.

\section{Results} \label{sec:results}

The main goal of this section is twofold: First, we want to show that the Sackin and Colless indices are merely compound imbalance indices as they are linear combinations of $N_a$ and $N_b$, which we will show to be (im)balance indices of their own. Along the way, we will also investigate $\Delta_{CS}$ and $\Delta_{SC}$,
 which will also turn out to be (im)balance indices as they are closely related to $N_b$. Second, we will show that the echelon tree plays an important role for three of these new (im)balance indices, and we will investigate its properties further in Section \ref{sec:echelon_results}. We start our investigation with $\Delta_{CS}$, $\Delta_{SC}$ and $N_b$.   

\subsection{ \texorpdfstring{$\Delta_{CS}$, $\Delta_{SC}$ and $N_b$: three new (im)balance indices on $\bt_n$}{DeltaCS, DeltaSC and Nb: three new (im)balance indices on BTnstar}}

The main aim of this subsection is to prove that $\Delta_{CS}=C-S$ is an imbalance index and  both $N_b$ and $\Delta_{SC}$ are  balance indices on $\bt_n$. We will show this by exploiting their close relationship, which we will also elaborate on.

\begin{theorem}\label{thm:c-s_imbalance} Let $n\in \mathbb{N}$. Then, $\Delta_{CS}$ is an imbalance index, and $N_b$ as well as $\Delta_{SC}$ are balance indices on $\bt_n$.
\end{theorem}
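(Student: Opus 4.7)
The plan is to reduce all three assertions of the theorem to the single statement that $N_b$ is a balance index, and then prove that. From the identities $S=N_a+N_b$ and $C=N_a-N_b$ one obtains immediately $\Delta_{SC}(T)=S(T)-C(T)=2\,N_b(T)$ and $\Delta_{CS}(T)=-\Delta_{SC}(T)=-2\,N_b(T)$. Since multiplying an index by a positive constant preserves both its extremizers and its (im)balance type, while multiplying by $-1$ preserves extremizers and swaps the two types, the balance-index property of $N_b$ immediately implies the balance-index property of $\Delta_{SC}$ and the imbalance-index property of $\Delta_{CS}$. Hence it suffices to prove that $N_b$ is a balance index.

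For the caterpillar minimum, I would simply observe that for every inner vertex $v$ one has $n_{v_b}\ge 1$, and since $|\Vint(T)|=n-1$, this gives $N_b(T)\ge n-1$ with equality iff $n_{v_b}=1$ for every $v$, i.e., iff every inner vertex is adjacent to at least one leaf. By the characterization recalled in Section~\ref{prelim:def}, this forces $T=\cat_n$, and in particular $N_b(\cat_n)=n-1$.

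For the maximum on $\bt_{2^h}$, I would prove the stronger global statement
\[
2\,N_b(T)\ \le\ S_{\min}(n)\qquad\text{for all }n\ge 1\text{ and all }T\in\bt_n,
\]
where $S_{\min}(n)=-2^{k_n}+n(k_n+1)$ is the Sackin minimum recalled in Section~\ref{sec:prelim_sackin}, with equality iff $n=2^h$ and $T=T^{fb}_h$. For $n=2^h$ one has $S_{\min}(2^h)=2^h h$, so the statement specialises to $N_b(T)\le 2^{h-1}h$ with equality iff $T=T^{fb}_h$, which is exactly the required maximum. I would proceed by induction on $n$, with the base case $n=1$ trivial. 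For the inductive step, write $T=(T_a,T_b)$ with $n_a\ge n_b\ge 1$ and use the recursion $N_b(T)=n_b+N_b(T_a)+N_b(T_b)$ together with the inductive hypothesis on the two subtrees; this reduces everything to the arithmetical inequality
\[
2\,n_b+S_{\min}(n_a)+S_{\min}(n_b)\ \le\ S_{\min}(n).
\]

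The hard part will be proving this arithmetical inequality together with a sharp characterisation of its equality case. Substituting the explicit formula for $S_{\min}$ and cancelling, it rearranges to
\[
n_a(k_n-k_{n_a})+n_b(k_n-k_{n_b})\ \ge\ 2\,n_b+2^{k_n}-2^{k_{n_a}}-2^{k_{n_b}},
\]
which I would verify by splitting into the cases $k_{n_a}=k_n$ (which forces $k_{n_b}\le k_n-1$ because $n_a+n_b\le 2^{k_n}$) and $k_{n_a}\le k_n-1$, in each using $n_b\le 2^{k_{n_b}}$ to close the bounds. A careful tracking of when each sub-bound is sharp then shows that equality can hold only when $(n,n_a,n_b)=(2^{k_n},2^{k_n-1},2^{k_n-1})$; combined with the equality case of the inductive hypothesis, this forces $T=T^{fb}_{k_n}$. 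With $N_b$ thus established as a balance index, the corresponding statements for $\Delta_{SC}$ and $\Delta_{CS}$ follow from the reduction above.
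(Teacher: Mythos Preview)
Your overall strategy is sound and yields a considerably shorter proof of the theorem than the paper's. The paper establishes the $\fb_h$ case by first deriving the exact recursion $d_{2n}=2d_n-2n$, $d_{2n+1}=d_{n+1}+d_n-2n$ for the minimum of $\Delta_{CS}$ (a lengthy case analysis), then showing $T_n^{mb}$ realises it, then proving via a swap argument that any minimiser with $n=2^{k_n}$ must have $n_a\le 2^{k_n-1}$, and only then concluding uniqueness. You bypass all of this with a single global bound $2N_b(T)\le S_{\min}(n)$ proved by direct induction; for the purposes of Theorem~\ref{thm:c-s_imbalance} this is strictly more economical. The price is that your bound is tight only at powers of two, so it does not give the exact maximum of $N_b$ for general $n$, whereas the paper's longer route does (and this is needed for their subsequent Theorem~\ref{thm:Dmin} and Theorem~\ref{thm:characterization}).

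There is, however, one genuine slip in your equality analysis. The arithmetical inequality
\[
2\,n_b+S_{\min}(n_a)+S_{\min}(n_b)\ \le\ S_{\min}(n)
\]
is \emph{not} sharp only at $(n,n_a,n_b)=(2^{k_n},2^{k_n-1},2^{k_n-1})$: it is sharp precisely when $n_a=n_b$, for any even $n$. (For instance $n=6$, $n_a=n_b=3$ gives $6+5+5=16=S_{\min}(6)$; more generally, using $k_{2m}=k_m+1$ one checks that $2m+2S_{\min}(m)=S_{\min}(2m)$ identically.) In your case split, when $k_{n_a}=k_{n_b}=k_n-1$ the inequality collapses to $n_a\ge n_b$, where the bound $n_b\le 2^{k_{n_b}}$ is never invoked, so tracking sharpness of that sub-bound does not see this family of equality cases.

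The repair is easy and in fact already implicit in what you wrote: argue the two equality cases in the opposite order. Equality in the inductive hypothesis forces $n_a$ and $n_b$ to be powers of two with $T_a,T_b$ fully balanced; then, \emph{under that constraint}, equality in the arithmetical inequality forces $n_a=n_b$ (a short computation: if $n_a=2^p$, $n_b=2^q$ with $p\ge q$, the gap equals $2^{q}(p-q)$). Hence $n=2^{k_n}$ and $T=T^{fb}_{k_n}$ as required.
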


Before we can prove this main result, we need some preliminary results. We begin with the following lemma, which provides a simple but important property of $\Delta_{CS}$. In fact, it establishes the close link between $\Delta_{CS}$ and $N_b$. 

\begin{lemma}\label{lem:D=sumnb} Let $T \in \bt_n$. Then, we have: $\Delta_{CS}(T)=-2\sum\limits_{v\in \mathring{V}(T)} n_{v_b}=-2N_b(T)$, where $n_{v_b}$ denotes the size of the \emph{smaller} pending subtree adjacent to $v$, i.e., if $v_a$ and $v_b$ denote the children of $v$, we have $n_{v_b}\leq n_{v_a}$. 
\end{lemma}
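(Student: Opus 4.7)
The plan is to derive the identity by direct algebraic manipulation, using the decompositions of $S(T)$ and $C(T)$ in terms of $N_a(T)$ and $N_b(T)$ that were established in Section~\ref{prelim:def}. Recall that for the convention $n_{v_a} \geq n_{v_b}$ at every $v \in \mathring{V}(T)$, we have the identities $S(T) = N_a(T) + N_b(T)$ and $C(T) = N_a(T) - N_b(T)$.

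Given these two representations, I would subtract them to get $\Delta_{CS}(T) = C(T) - S(T) = (N_a(T) - N_b(T)) - (N_a(T) + N_b(T))$. The $N_a$ terms cancel, leaving $-2N_b(T) = -2\sum_{v\in \mathring{V}(T)} n_{v_b}$, which is exactly the claimed identity. One should also briefly address the degenerate case $n=0$ (the empty tree $T^\emptyset$), where all involved sums are empty by convention, so both sides of the equation equal $0$.

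I do not anticipate any genuine obstacle here: the lemma is essentially a bookkeeping statement that repackages the two decompositions of $S$ and $C$ into a single equation. The only thing one must be careful about is to apply the convention $n_{v_a} \geq n_{v_b}$ consistently at every inner vertex, since both the $S$-decomposition and the $C$-decomposition rely on this choice to align the summands. Given this, the proof reduces to a two-line calculation.
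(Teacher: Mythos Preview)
Your proposal is correct and follows essentially the same approach as the paper's own proof: both use the convention $n_{v_a}\geq n_{v_b}$ to write $C(T)=N_a(T)-N_b(T)$ and $S(T)=N_a(T)+N_b(T)$, then subtract so that the $N_a$ terms cancel. The only cosmetic difference is that the paper spells out the sums over $\mathring{V}(T)$ explicitly rather than invoking the $N_a$, $N_b$ notation directly.
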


\begin{proof} Let $T \in \bt_n$. Then, for all $v \in \mathring{V}(T)$ we may assume without loss of generality that for the children $v_a$ and $v_b$ of $v$ we have  $n_{v_a}\geq n_{v_b}$. Note that this way, we can drop the absolute value in the definition of the Colless index. Thus, by definition and using the fact that $n_v=n_{v_a}+n_{v_b}$, we have:

\begin{align*}
\Delta_{CS}(T)=C(T)-S(T) 
&= \left(\sum\limits_{v \in \mathring{V}(T)} n_{v_a} - \sum\limits_{v \in \mathring{V}(T)} n_{v_b}\right) - \left(\sum\limits_{v \in \mathring{V}(T)} n_{v_a} + \sum\limits_{v \in \mathring{V}(T)} n_{v_b}\right) \\
&= -2 \sum\limits_{v \in \mathring{V}(T)} n_{v_b}=-2N_b(T).
\end{align*}
This completes the proof.

\end{proof}

\begin{remark}\label{rem:SCandNb} Note that Lemma~\ref{lem:D=sumnb} implies that $\Delta_{CS}$ is maximized whenever $N_b$ is minimized and vice versa. Since $\Delta_{CS}=-\Delta_{SC}$, it also shows that $\Delta_{SC}=2N_b$, implying that $\Delta_{SC}$ and $N_b$ have precisely the same minima and maxima. Therefore, in the following we mainly focus on $\CS$. We only directly mention $\SC$ and $N_b$ in the main theorems or whenever we give specific values.
\end{remark}

Next, we derive a recursion for $\Delta_{CS}$. This will be one of the most important ingredients needed in the analysis of the minimum value of $\Delta_{CS}$ as well as the trees that achieve it.

\begin{proposition}\label{prop:rec} Let $n \in \mathbb{N}_{\geq 2}$ and $T\in \bt_n$ with standard decomposition $T=(T_a,T_b)$, where $n_a$, $n_b$ denote the number of leaves of $T_a$ and $T_b$, respectively, such that $n_a\geq n_b$. Then, we have: 

$$\Delta_{CS}(T)=\Delta_{CS}(T_a)+\Delta_{CS}(T_b)-2n_b.$$
\end{proposition}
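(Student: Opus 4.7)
The plan is to derive the recursion directly from the two well-known recursions for the Sackin and Colless indices that are already recorded in the preliminaries. Specifically, Section~\ref{sec:prelim_sackin} states that for $T=(T_a,T_b)$ with $n$ leaves one has $S(T)=S(T_a)+S(T_b)+n$, and Section~\ref{sec:prelim_Colless} states that, under the convention $n_a\geq n_b$, one has $C(T)=C(T_a)+C(T_b)+n_a-n_b$. Since $\Delta_{CS}$ is by definition the difference $C-S$, the desired recursion should fall out of a one-line subtraction.

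Concretely, I would start the proof by fixing $T=(T_a,T_b)\in\bt_n$ with $n_a\geq n_b$, and then just compute
\[
\Delta_{CS}(T)=C(T)-S(T)=\bigl(C(T_a)+C(T_b)+n_a-n_b\bigr)-\bigl(S(T_a)+S(T_b)+n\bigr).
\]
Regrouping the $C$- and $S$-terms gives $\Delta_{CS}(T_a)+\Delta_{CS}(T_b)$, and using $n=n_a+n_b$ collapses the remaining constant to $(n_a-n_b)-(n_a+n_b)=-2n_b$, which is exactly the claim.

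There is essentially no obstacle here: the statement is really a bookkeeping identity, not a structural result. The only mild care needed is to match the convention $n_a\geq n_b$ so that the absolute value in the Colless recursion can be dropped and so that the $-2n_b$ on the right-hand side really refers to the smaller of the two subtree sizes (consistent with Lemma~\ref{lem:D=sumnb}). As a sanity check, the identity is also immediate from that lemma: $\Delta_{CS}(T)=-2N_b(T)$, and $N_b$ splits as a sum over inner vertices of $T$ into the root contribution $n_b$ plus the inner-vertex sums over $T_a$ and $T_b$, which gives $N_b(T)=N_b(T_a)+N_b(T_b)+n_b$; multiplying by $-2$ recovers the proposition. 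I would present the first derivation in the paper since it only invokes the previously stated recursions and keeps the proof to a couple of lines.
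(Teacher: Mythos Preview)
Your proof is correct and essentially identical to the paper's own argument: both subtract the known recursions $S(T)=S(T_a)+S(T_b)+n$ and $C(T)=C(T_a)+C(T_b)+n_a-n_b$, regroup into $\Delta_{CS}(T_a)+\Delta_{CS}(T_b)$, and use $n=n_a+n_b$ to obtain the $-2n_b$ term.
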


\begin{proof} Basically, $\Delta_{CS}$
 inherits its recursive property from $C$ and $S$ as follows: 

 \begin{align*}
 \Delta_{CS}(T)&=C(T)-S(T) \\
 &=(C(T_a)+C(T_b)+n_a-n_b)-(S(T_a)+S(T_b)+n) \\
 &= (C(T_a)-S(T_a))+(C(T_b)-S(T_b))+n_a-n_b-\underbrace{(n_a+n_b)}_{=n}\\
 &= \Delta_{CS}(T_a)+\Delta_{CS}(T_b)-2n_b.
 \end{align*}
\end{proof}

Proposition~\ref{prop:rec} immediately leads to the following important corollary. 

\begin{corollary}\label{cor:T_min_then_Ta_Tb_min}
    Let $T\in\bt_n$ with standard decomposition $T=(T_a,T_b)$ minimize $\Delta_{CS}$ in $\bt_n$. Then, $T_a$ minimizes $\Delta_{CS}$ in $\bt_{n_a}$ and $T_b$ minimizes $\Delta_{CS}$ in $\bt_{n_b}$.
\end{corollary}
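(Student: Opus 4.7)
The plan is to prove the corollary by a standard swap argument based on the recursion in Proposition~\ref{prop:rec}. Concretely, I would assume for contradiction that one of the pending subtrees (say $T_a$) fails to minimize $\Delta_{CS}$ on $\bt_{n_a}$, produce a strictly better replacement, reassemble it into a tree on $n$ leaves, and invoke the recursion to derive a strictly smaller value of $\Delta_{CS}$ on $\bt_n$, contradicting the minimality of $T$.

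In more detail, suppose $T_a$ does not minimize $\Delta_{CS}$ on $\bt_{n_a}$. Then there exists $T_a' \in \bt_{n_a}$ with $\Delta_{CS}(T_a') < \Delta_{CS}(T_a)$. Form $T' = (T_a', T_b) \in \bt_n$. Since $T_a'$ and $T_a$ have the same number of leaves $n_a \geq n_b$, the pair $(T_a', T_b)$ is a legitimate standard decomposition (i.e., the size ordering is preserved). Proposition~\ref{prop:rec} now gives
\[
\Delta_{CS}(T') = \Delta_{CS}(T_a') + \Delta_{CS}(T_b) - 2n_b < \Delta_{CS}(T_a) + \Delta_{CS}(T_b) - 2n_b = \Delta_{CS}(T),
\]
contradicting the minimality of $T$ on $\bt_n$. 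The argument for $T_b$ is entirely symmetric: replacing $T_b$ by any $T_b' \in \bt_{n_b}$ with $\Delta_{CS}(T_b') < \Delta_{CS}(T_b)$ yields $T'' = (T_a, T_b') \in \bt_n$ whose leaf-size ordering is still $n_a \geq n_b$, and Proposition~\ref{prop:rec} again produces $\Delta_{CS}(T'') < \Delta_{CS}(T)$.

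There is essentially no hard step here; the only point requiring a brief sanity check is that swapping one subtree for another of the \emph{same} leaf count does not disturb the convention $n_a \geq n_b$ assumed by the standard decomposition, so the constant term $-2n_b$ in Proposition~\ref{prop:rec} remains unchanged. Once that observation is in place, the recursion does all the work and the corollary follows immediately.
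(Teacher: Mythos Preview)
Your proof is correct and follows essentially the same approach as the paper: both assume a subtree is non-minimal, swap in a strictly better replacement of the same size, and invoke Proposition~\ref{prop:rec} to derive a contradiction. Your extra remark about the size ordering $n_a \geq n_b$ being preserved under the swap is a nice sanity check that the paper leaves implicit.
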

\begin{proof}
 Let $T\in\bt_n$ with standard decomposition $T=(T_a,T_b)$ minimize $\Delta_{CS}$ in $\bt_n$ and assume $T_a$ is not minimal in $\bt_{n_a}$, implying there is a tree $T_a'\in\bt_{n_a}$ with $\Delta_{CS}(T_a')<\Delta_{CS}(T_a)$. Furthermore, let $T'=(T_a',T_b)$. Thus, by Proposition~\ref{prop:rec},
 \begin{align*}
     \Delta_{CS}(T) = \Delta_{CS}(T_a)+\Delta_{CS}(T_b)-2n_b>\Delta_{CS}(T_a')+\Delta_{CS}(T_b)-2n_b= \Delta_{CS}(T'),&
 \end{align*} a contradiction to the minimality of $T$.
 The case for $T_b$ works analogously. This completes the proof.
\end{proof}

\subsubsection{Maximizing \texorpdfstring{$\Delta_{CS}$}{DeltaCS} and minimizing \texorpdfstring{$N_b$}{Nb} as well as \texorpdfstring{$\Delta_{SC}$}{DeltaSC}}

 It is the aim of this subsection to show that $\cat_n$ is the unique tree maximizing $\Delta_{CS}$ and to derive the maximum value of $\Delta_{CS}$.

 \begin{proposition}\label{prop:D_cat}
Let $T\in\bt_n$. Then, we have $\Delta_{CS}(T) \leq -2n+2$, $\Delta_{SC}(T) \geq 2n-2$ and $N_b(T)\geq n-1$. These bounds are tight and are all precisely achieved if $T=\cat_n$.
 \end{proposition}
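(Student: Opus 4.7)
The plan is to reduce all three bounds to a single statement about $N_b$ and then prove that one directly. By Lemma~\ref{lem:D=sumnb} we have $\Delta_{CS}(T) = -2N_b(T)$, and by Remark~\ref{rem:SCandNb} we have $\Delta_{SC}(T) = 2N_b(T)$. Consequently, the inequalities $\Delta_{CS}(T) \leq -2n+2$, $\Delta_{SC}(T) \geq 2n-2$, and $N_b(T) \geq n-1$ are equivalent (with the same trees realizing equality). So the whole proposition follows once I establish $N_b(T) \geq n-1$ with equality iff $T = \cat_n$.

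For the lower bound on $N_b$, I would argue directly from the definition $N_b(T) = \sum_{v \in \mathring{V}(T)} n_{v_b}$. Since $T$ is a rooted binary tree, every inner vertex $v$ has two children, each rooting a pending subtree with at least one leaf; in particular $n_{v_b} \geq 1$ for every $v \in \mathring{V}(T)$. Using the standard fact that a rooted binary tree with $n$ leaves has exactly $n-1$ inner vertices, we get
\[
N_b(T) \;=\; \sum_{v \in \mathring{V}(T)} n_{v_b} \;\geq\; |\mathring{V}(T)| \;=\; n-1.
\]

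For the equality characterization, note that equality in the chain above holds if and only if $n_{v_b} = 1$ for every $v \in \mathring{V}(T)$, i.e., every inner vertex has a leaf as one of its children. As recalled in the \enquote{Special trees} paragraph of Section~\ref{prelim:def}, the caterpillar $\cat_n$ is the \emph{unique} rooted binary tree in $\bt_n$ with this property. A direct computation then confirms the bound is tight: for $\cat_n$, the $n-1$ inner vertices have smaller-subtree sizes $1,1,\dots,1$, so $N_b(\cat_n) = n-1$, which in turn gives $\Delta_{CS}(\cat_n) = -2(n-1) = -2n+2$ and $\Delta_{SC}(\cat_n) = 2n-2$.

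There is no real obstacle here; the only step that requires a pointer to prior material is the equality case, which rests on the classical characterization of the caterpillar as the unique binary tree in which every inner vertex is adjacent to a leaf. As an alternative, one could run a straightforward induction on $n$ using the recursion $\Delta_{CS}(T) = \Delta_{CS}(T_a) + \Delta_{CS}(T_b) - 2n_b$ from Proposition~\ref{prop:rec}, which forces $n_b = 1$ (and both $T_a, T_b$ caterpillars) at the maximum, but the direct counting argument above is cleaner and avoids the induction.
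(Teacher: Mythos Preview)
Your proof is correct and follows essentially the same route as the paper: both reduce everything to $N_b$ via Lemma~\ref{lem:D=sumnb} and then bound $N_b(T)=\sum_{v\in\mathring{V}(T)} n_{v_b}\geq |\mathring{V}(T)|=n-1$, with equality forcing $n_{v_b}=1$ at every inner vertex and hence $T=\cat_n$ by the caterpillar characterization recalled in Section~\ref{prelim:def}. The only cosmetic difference is the order of presentation (you reduce first and then bound $N_b$, whereas the paper computes $N_b(\cat_n)$ first and then argues strictness for non-caterpillars).
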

\begin{proof} We start by considering the caterpillar tree. $\cat_n$ is by definition the unique tree in which each inner vertex is adjacent to a leaf. Following the convention of $n_a\geq n_b$, we thus have $n_{v_b}=1$ for each $v\in \mathring{V}(\cat_n)$. Thus, $N_b(\cat_n)=\sum_{v\in \mathring{V}(\cat_n)} n_{v_b}=|\mathring{V}(\cat_n)|=n-1$, where the latter equality stems from the fact that $\cat_n$ is a rooted binary tree with $n$ leaves. However, in every non-caterpillar tree $T$ there is at least one inner node $v$ which is \emph{not} adjacent to a leaf, implying that at least one summand in $N_b(T)=\sum_{v\in \mathring{V}(T)} n_{v_b}$ is strictly larger than 1. All other summands are at least one, and the number of summands is again $n-1$. Hence, in total, we get $N_b(T)=\sum_{v\in \mathring{V}(T)} n_{v_b}>n-1=\sum\limits_{v\in \mathring{V}(\cat_n)} n_{v_b}=N_b(\cat_n)$. This completes the proof for $N_b$.

Concerning $\Delta_{CS}$, recall that by Lemma~\ref{lem:D=sumnb} we know $\Delta_{CS}(T)=-2N_b(T)$ and that $\Delta_{CS}(T)$ is maximal if and only if $N_b(T)$ is minimal, which in turn is the case if and only if $T$ equals $\cat_n$. Moreover, we have $\Delta_{CS}(\cat_n)=-2N_b(\cat_n)=-2\cdot(n-1)=-2n+2$. The statement for $\Delta_{SC}$ now follows by $\Delta_{CS}=-\Delta_{SC}$. This completes the proof.
\end{proof}

We now turn our attention to the more intricate minimum of $\Delta_{CS}$.

\subsubsection{Minimizing \texorpdfstring{$\Delta_{CS}$}{DeltaCS} and maximizing \texorpdfstring{$N_b$}{Nb} as well as  \texorpdfstring{$\Delta_{SC}$}{DeltaSC}}

The main aim of this section is to give a full characterization of all trees that minimize $\Delta_{CS}$. An important step in this direction will be to show that every tree that minimizes the Colless index also minimizes $\Delta_{CS}=C-S$. Note that this is highly non-obvious, because every Colless minimum also minimizes the Sackin index, cf. Section~\ref{sec:Prelim}, so the term $S(T)$ that is subtracted from $C(T)$ also gets minimal in this case. However, the following recurrence will be  an essential tool used to tackle the minimum in the following.

\begin{proposition}\label{prop:d_rec}
    Let $d_n$ be the minimal $\Delta_{CS}$-value for a tree in $\bt_n$. Then, we have $d_0=d_1=0$ and  
    \begin{align*}
        &d_{2n} =2d_n-2n\\
        &d_{2n+1} = d_{n+1}+d_n-2n
    \end{align*} for all $n\in\mathbb{N}_{\geq 1}$.
\end{proposition}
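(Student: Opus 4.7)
The plan is to prove the recursion by strong induction on $n$. The base cases $d_0 = d_1 = 0$ are immediate: neither $T^\emptyset$ nor the single-vertex tree contains an inner vertex, so $\Delta_{CS}$ is an empty sum evaluating to $0$ by the conventions set up in Section~\ref{prelim:def}.

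For the inductive step at $n \geq 2$, the first move is to combine Proposition~\ref{prop:rec} with Corollary~\ref{cor:T_min_then_Ta_Tb_min} to obtain
$$d_n \;=\; \min_{\substack{n_a + n_b = n \\ n_a \geq n_b \geq 1}} \bigl(d_{n_a} + d_{n_b} - 2 n_b\bigr).$$
The proposition is therefore equivalent to the claim that this minimum is attained at the balanced partition, namely $(n_a,n_b) = (\lceil n/2\rceil, \lfloor n/2\rfloor)$, since substituting that partition into the right-hand side above yields exactly $2d_m - 2m$ when $n=2m$ and $d_{m+1}+d_m-2m$ when $n=2m+1$. The upper bound $d_n \leq d_{\lceil n/2\rceil} + d_{\lfloor n/2\rfloor} - 2\lfloor n/2\rfloor$ is then immediate, because the balanced partition is one admissible choice in the minimum. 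Hence the entire content of the proposition reduces to the matching lower bound: for every valid partition $(n_a, n_b)$,
$$d_{n_a} + d_{n_b} - 2n_b \;\geq\; d_{\lceil n/2\rceil} + d_{\lfloor n/2\rfloor} - 2\lfloor n/2\rfloor.$$

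The main obstacle is establishing this lower bound. A naive one-step exchange argument -- showing that moving from $(n_a, n_b)$ to $(n_a + 1, n_b - 1)$ never decreases the quantity -- does not suffice: small numerical experiments (already visible at $n = 12$, where both $(6,6)$ and $(8,4)$ achieve the minimum) show that $n_b \mapsto d_{n-n_b}+d_{n_b}-2n_b$ is not monotone as $n_b$ decreases from $\lfloor n/2\rfloor$, and several distinct partitions may tie for the minimum. My plan is therefore to translate the problem via Lemma~\ref{lem:D=sumnb} into the equivalent maximization problem for $M(n) := -d_n/2 = \max_{T\in\bt_n}N_b(T)$, where the required inequality becomes
$$n_b + M(n_a) + M(n_b) \;\leq\; \lfloor n/2\rfloor + M(\lceil n/2\rceil) + M(\lfloor n/2\rfloor).$$
This I would establish by strong induction on $n$, unfolding the recursion for $M(n_a)$ and $M(n_b)$ one or two levels and case-splitting on the parities of $n_a$ and $n_b$, until the imbalance contributed by taking $n_a > \lceil n/2\rceil$ is absorbed into a subtree comparison already handled by the induction hypothesis. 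This combinatorial case analysis -- reflecting the binary-expansion structure of $n$ inherited from the maximally balanced tree -- is the delicate part of the argument.
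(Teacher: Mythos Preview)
Your outline is essentially the same strategy as the paper's, and it is the right one: strong induction, reduce to the minimum formula $d_n=\min_{n_a+n_b=n,\,n_a\geq n_b\geq 1}(d_{n_a}+d_{n_b}-2n_b)$, take the balanced partition for the upper bound, and handle the lower bound by unfolding the induction hypothesis once with a parity split. The translation to $M(n)=-d_n/2$ is cosmetic; the paper works directly with $d_n$.

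What you leave undone is exactly the part that contains all the work, and your description of it is too vague to count as a proof. The key observation you are missing (and which the paper makes explicit) is this: when you apply the induction hypothesis to $d_{n_a}$ and $d_{n_b}$ with the parity split, the resulting expression for $d_{n_a}+d_{n_b}-2n_b$ reorganizes into a sum of terms each of the exact form $d_{n-i}+d_i-2i$ (respectively $d_{n+1-i}+d_i-2i$), for some $i$ in the admissible range $1\leq i\leq\lfloor n/2\rfloor$ (respectively $1\leq i\leq\lfloor(n+1)/2\rfloor$). Each such term is therefore $\geq d_n$ (respectively $\geq d_{n+1}$) by the minimum characterization of $d_n$ and $d_{n+1}$ themselves. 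One boundary case (odd $j=n$ in the $d_{2n}$ computation) overshoots the admissible range for $i$ by one and produces only $\geq d_n-2$, but an extra $+2$ is sitting in the expression to compensate. This is what you should be spelling out instead of the phrase ``until the imbalance \ldots\ is absorbed into a subtree comparison already handled by the induction hypothesis,'' which as written does not name the mechanism.
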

\begin{proof}
   Before proving the recurrences for $d_{2n}$ and $d_{2n+1}$ for $n\in\mathbb{N}_{\geq 1}$, we briefly show that the initial values $d_0=d_1=0$ are correct. Note that in $\bt_0$ and $\bt_1$ there is only one tree, respectively, namely $T^{\emptyset}$ and the tree consisting of only one vertex. In both cases, we have no inner vertices in the tree, implying that the sums defining $C(T)$ and $S(T)$ are empty, which shows that indeed we have $d_0=d_1=0-0=0$. 
   
\par\vspace{0.5cm}
   
    We now turn our attention to the recurrences. To show that $d_{2n} =2d_n-2n$ and $d_{2n+1} = d_{n+1}+d_n-2n$ for all $n\geq 1$, we perform a proof by induction on $n$. In the base case $n=1$, we need to calculate $d_{2\cdot 1}=d_2$ and $d_{2\cdot 1+1}=d_3$. However, note that there is only one tree each in $\bt_2$ and $\bt_3$ ($\cat_2$ and  $\cat_3$, respectively), which come with Sackin indices $S(\cat_2)=2$ and $S(\cat_3)=5$ and Colless indices $C(\cat_2)=0$ and $C(\cat_3)=1$. Thus, we have:  $$d_2=\Delta_{CS}(\cat_2)=C(\cat_2)-S(\cat_2)=0-2=2\cdot d_1-2\cdot 1,$$  as well as  $$d_3=\Delta_{CS}(\cat_3)=C(\cat_3)-S(\cat_3)=1-5=-4=d_2+d_1-2\cdot 1.$$

This completes the proof of the base case.
    
    We proceed with the induction step, i.e., we assume that the recurrences hold for all trees with $\tilde{n}<n$ leaves and now  consider trees with $n$ leaves. We  first show that $d_{2n} \leq 2d_{n}-2n$ and $d_{2n+1} \leq d_{n+1}+d_{n}-2n$ for all $n\in\mathbb{N}_{\geq 1}$.
    By Proposition~\ref{prop:rec} and Corollary~\ref{cor:T_min_then_Ta_Tb_min},  we have: 
    \begin{equation}\label{eq:d_2n_geq}
        \begin{aligned}
            d_{2n} &= \min\{d_{n_a}+d_{n_b}-2n_b|n_a \geq n_b\geq 1,n_a+n_b = 2n\}\\
            &=\min\limits_{j\in\{1,\ldots,n\}}\{d_{2n-j}+d_j-2j\} \leq 2d_{n}-2n, 
        \end{aligned}
    \end{equation}

    where the latter inequality can easily be verified by considering $j=n$.

Similarly, we have: 
    \begin{equation}\label{eq:d_2n+1_geq}
        \begin{aligned}
            d_{2n+1} &= \min\{d_{n_a}+d_{n_b}-2n_b|n_a \geq n_b\geq 1,n_a+n_b = 2n+1\}\\&= \min\limits_{j\in\{1,\ldots,n\}}\{d_{2n+1-j}+d_j-2j\}
            \leq d_{n+1}+d_{n}-2n,
        \end{aligned}
    \end{equation}

    where again the latter inequality can be verified by considering $j=n$. 
    
    To obtain equality in the recurrences, it remains to show that $d_{2n} \geq 2d_{n}-2n$ and $d_{2n+1} \geq d_{n+1}+d_{n}-2n$ holds as well. 
    
    First, consider $d_{2n}$. By applying the induction hypothesis to $d_{2n-j}$ and $d_j$, we get 
    \begin{equation}\label{eq:d_2n}
        \begin{aligned}
            d_{2n} &=\min\limits_{j\in\{1,\ldots,n\}}\{d_{2n-j}+d_j-2j\}\\
            &=\min\limits_{j\in\{1,\ldots,n\}}\left.\begin{cases}
               2\overbrace{\left(d_{n-\frac{j}{2}}+d_{\frac{j}{2}}-2\frac{j}{2}\right)}^{=:A}-2n & \text{for $j$ even}\\
                \underbrace{\left(d_{n-\frac{j-1}{2}}+d_{\frac{j-1}{2}}-2\left(\frac{j-1}{2}\right)\right)}_{=:B}+ \underbrace{\left(d_{n-\frac{j+1}{2}}+d_{\frac{j+1}{2}}-2\left(\frac{j+1}{2}\right)\right)}_{=:C}-2n+2 & \text{for $j$ odd}
            \end{cases}\right\}.
        \end{aligned}
    \end{equation}

    We want to show that $A$, $B$ and $C$ are all at least as large as $d_n$, except for $C$ in case $j$ is odd and $j=n$, in which case we have $C\geq d_n-2$. This will show that in all cases we have $2A-2n\geq 2d_n-2n$ and $B+C-2n+2\geq 2d_n-2n$.  In order to see this, we first  re-write the minimum defining $d_n$ as follows: 

\begin{equation}\label{eq:d_n}
    \begin{aligned}
          d_{n} &= \min\{d_{n_a}+d_{n_b}-2n_b|n_a\geq n_b\geq 1, n_a+n_b=n\}= \min\left\{d_{n-i}+d_i-2i|1\leq i\leq \left\lfloor\frac{n}{2}\right\rfloor\right\}.
    \end{aligned} 
    \end{equation}

Note that in Equation~\eqref{eq:d_n}, $i$ ranges from 1 (which is the minimum size of $n_b$) to $\left\lfloor \frac{n}{2}\right\rfloor$, due to $n_b\leq n_a$ and $n_a+n_b=n$, implying $n_b\leq \frac{n}{2}$. The floor function then stems from the fact that $n_b$ can only assume integer values, and it ensures that the bound is tight regardless if $n$ is even or odd.

We now use Equation~\eqref{eq:d_n} to analyze $A$, $B$, and $C$.  \begin{itemize} \item We start with term $A$, i.e.,  we consider the case that $j\in \{1,\ldots,n\}$ and $j$ is even, implying $j\geq 2$. In this case, we thus have $1\leq \frac{j}{2}\leq \lfloor\frac{n}{2}\rfloor$ (where the floor function again is due to the the fact that $j$ is an integer), and thus, for every even  $j$ in $\{2,\ldots,n\}$, we have $A=d_{n-\frac{j}{2}}+d_{\frac{j}{2}}-2\frac{j}{2}\geq d_n$ by Equation~\eqref{eq:d_n}. Note that this in particular implies that if $j$ is even, we have  $2A-2n \geq 2d_n-2n$. 

\item We continue with terms $B$ and $C$, which only play a role when $j$ is odd, so we consider the case that $j\in \{1,\ldots,n\}$ and $j$ is odd. We now consider three subcases.

    \begin{itemize}
    \item We first analyze the case $j=1$. In this case, obviously $B=d_n$, and $C= d_{n-1}+d_1-2$. Thus, it is obvious that $ C\geq d_n$ by Equation~\eqref{eq:d_n} (which can be seen using $i=1$). 
\item If $j$ is odd and $j=n$, then, in particular, $n$ is odd, too. In this case, using $i=\frac{j-1}{2}=\frac{n-1}{2}=\left\lfloor \frac{n}{2}\right\rfloor$ (where the latter equality is true as $n$ is odd), we see, by Equation~\eqref{eq:d_n}, that $B\geq d_n$. Moreover, for $C$ we get: \begin{align*}C&= \underbrace{d_{n-\frac{n-1}{2}}+d_{\frac{n-1}{2}}-2\left(\frac{n-1}{2}\right)}_{\overset{\mbox{\tiny Eq.~\eqref{eq:d_n}}}{\geq}d_n} -2 \geq d_n-2.\end{align*} 
\item Lastly, if $j$ is odd and $3\leq j\leq n-1$, then $1\leq \left\lfloor\frac{j}{2}\right\rfloor=\frac{j-1}{2}<\frac{j+1}{2}=\left\lceil\frac{j}{2}\right\rceil\leq\left\lceil \frac{n-1}{2}\right\rceil \leq\left\lfloor\frac{n}{2}\right\rfloor$. Thus, we can use $i=\frac{j-1}{2}$ in Equation~\eqref{eq:d_n} to derive that $B\geq d_n$. Similarly, we can use $i=\frac{j+1}{2}$ in Equation~\eqref{eq:d_n} to derive that $C\geq d_n$. 
\end{itemize}
Hence, in case $j$ is odd and $1\leq j\leq n$, we have $B+C-2n+2\geq 2d_n-2n$.
\end{itemize}

In summary, since the minimum in Equation~\eqref{eq:d_2n} is taken over values $2A-2n$ and $B+C-2n+2$, which all are at least as large as $2d_n-2n$, we can conclude $d_{2n}\geq 2d_n-2n$, which completes the proof for $d_{2n}$.

It remains to consider $d_{2n+1}$.
    Note that if $j$ is even, then $2n+1-j$ is odd and vice versa.
    Hence, by applying the induction hypothesis to $d_{2n+1-j}$ and $d_j$, we obtain
    \begin{equation}\label{eq:d_2n+1}
        \begin{aligned}
            d_{2n+1} &= \min\limits_{j\in\{1,\ldots,n\}}\{d_{2n+1-j}+d_j-2j\}\\
            &=\min\limits_{j\in\{1,\ldots,n\}}\left.\begin{cases}
                \overbrace{\left(d_{n+1-\frac{j}{2}}+d_{\frac{j}{2}}-2\frac{j}{2}\right)}^{=:D}+\overbrace{\left(d_{n-\frac{j}{2}}+d_{\frac{j}{2}}-2\frac{j}{2}\right)}^{=:E}-2n & \text{for $j$ even}\\
                 \underbrace{\left(d_{n-\frac{j+1}{2}+1}+d_{\frac{j+1}{2}}-2\left(\frac{j+1}{2}\right)\right)}_{=:F}+\underbrace{\left(d_{n-\frac{j-1}{2}}+d_{\frac{j-1}{2}}-2\left(\frac{j-1}{2}\right)\right)}_{=:G}-2n& \text{for $j$ odd}
            \end{cases}\right\}.
        \end{aligned}
    \end{equation}

    In the following, it is our aim to show that both $D+E$ as well as $F+G$ are at least large as $d_{n+1}+d_n$. In order to see this, first recall the representation of $d_n$ in Equation~\eqref{eq:d_n}. Analogously, we have for $d_{n+1}$: 
     \begin{equation}\label{eq:d_n+1}
    \begin{aligned}
          d_{n+1} &= \min\{d_{n_a}+d_{n_b}-2n_b|n_a\geq n_b\geq 1, n_a+n_b=n+1\}\\
          &= \min\left\{d_{n+1-i}+d_i-2i|1\leq i\leq \left\lfloor\frac{n+1}{2}\right\rfloor\right\}.
    \end{aligned} 
    \end{equation}

Note that in Equation~\eqref{eq:d_n+1}, $i$ ranges from 1 (which is the minimum size of $n_b$) to $\left\lfloor \frac{n+1}{2}\right\rfloor$, due to $n_b\leq n_a$ and $n_a+n_b=n+1$, implying $n_b\leq \frac{n+1}{2}$. The floor function then stems from the fact that $n_b$ can only assume integer values, and it ensures that the bound is tight regardless if $n$ is even or odd.

We now use Equation~\eqref{eq:d_n+1} to analyze $D$, $E$, $F$, and $G$. 

\begin{itemize}
\item We start with terms $D$ and $E$, i.e.,  we consider the case in which $j\in \{1,\ldots,n\}$ is even. In this case, we have $j\geq 2$ and thus $\frac{j}{2}\geq 1$, as well as $\frac{j}{2}\leq \frac{n}{2}\leq \left\lfloor \frac{n+1}{2}\right\rfloor$. We now consider $D$ and $E$ separately.

\begin{itemize} \item As we have $1\leq \frac{j}{2} \leq \left\lfloor \frac{n+1}{2}\right\rfloor$, we can apply Equation~\eqref{eq:d_n+1} using $i=\frac{j}{2}$ to conclude that $D\geq d_{n+1}$.

\item Next, if $n$ is even, we have $1\leq \frac{j}{2} \leq \left\lfloor \frac{n+1}{2}\right\rfloor=\left\lfloor \frac{n}{2}\right\rfloor$. 

If, however, $n$ is odd, we have 
$j\leq n-1$ (as $j\leq n$, where $n$ is odd and $j$ is even), and thus $\frac{j}{2}\leq \frac{n-1}{2}\leq \left\lfloor \frac{n}{2} \right\rfloor$. Therefore, in both cases we can apply Equation~\eqref{eq:d_n} to conclude that $E\geq d_n$.

\end{itemize}

\item We continue with terms $F$ and $G$, meaning we now consider the case in which $j$ is odd. We distinguish two cases.

\begin{itemize}
\item If $j=1$, we have $F=d_n+d_1-2$, which, by Equation~\eqref{eq:d_n+1} is at least $d_{n+1}$ (which can be seen using $i=1$). Thus, in this case we have $F\geq d_{n+1}$. 
Moreover, if $j=1$, we immediately get $G=d_n$.
\item If $j>1$ and $j$ is odd, we have $1<\frac{j}{2}<\frac{j+1}{2}\leq \frac{n+1}{2}$. Furthermore, we either have $j\leq n-1$ if $n$ is even or $\frac{n+1}{2}=\left\lfloor \frac{n+1}{2}\right\rfloor$ if $n$ is odd. In both cases $1\leq \frac{j+1}{2}\leq\lfloor\frac{n+1}{2}\rfloor$ and we can apply Equation~\eqref{eq:d_n+1} using $i=\frac{j+1}{2}$ to conclude that $F\geq d_{n+1}$. The fact that $\frac{j+1}{2}\leq\lfloor\frac{n+1}{2}\rfloor$ also gives us $1\leq \frac{j-1}{2}=\frac{j+1}{2}-1\leq \lfloor\frac{n+1}{2}\rfloor-1 \leq \lfloor\frac{n}{2}\rfloor.$ Therefore, we can apply Equation~\eqref{eq:d_n} using $i= \frac{j-1}{2}$ to $G$ to conclude that $G\geq d_n$.
\end{itemize}

\end{itemize}

So in all cases, we have $D\geq d_{n+1}$ and $E\geq d_n$, implying that $D+E-2n \geq d_{n+1}+d_n-2n$. Similarly, we have in all cases that $F\geq d_{n+1}$ and $G\geq d_n$, implying that $F+G-2n \geq d_{n+1}+d_n-2n$. In summary, using Equation~\eqref{eq:d_2n+1}, this shows that $d_{2n+1}\geq d_{n+1}+d_n-2n$, which completes the proof.
\end{proof}

The following theorem provides a family of $\Delta_{CS}$ minimal trees, namely all Colless minima. As a bonus, we can then easily derive the minimal value for $\Delta_{CS}$ and the maximal values for $N_b$ and $\Delta_{SC}$ from the minimal Colless and Sackin values. 

\begin{theorem} \label{thm:Dmin} Let $n\in \mathbb{N}$ and $T\in\bt_n$. Moreover, let $\breve{s}(x)=\min_{z \in \mathbb{Z}}|x-z|$ for $x\in\mathbb{R}$. Let $k_n=\lceil \log_2(n) \rceil$.  Then, we have: \begin{itemize}\item  $\Delta_{CS}(T) \geq C(\mb_n)-S(\mb_n)=\left(\sum\limits_{j=1}^{k_n-1} 2^j\cdot \breve{s}(2^{-j}\cdot n)\right)+2^{k_n}-n\cdot(k_n+1)$,
\item  $\Delta_{SC}(T) \leq S(\mb_n)-C(\mb_n)=-2^{k_n}+n\cdot(k_n+1)-\left(\sum\limits_{j=1}^{k_n-1} 2^j\cdot \breve{s}(2^{-j}\cdot n)\right)$ 
and
\item $N_b(T)\leq N_b(T_n^{mb})=-\frac{1}{2}\cdot \left(\left(\sum\limits_{j=1}^{k_n-1} 2^j\cdot \breve{s}(2^{-j}\cdot n)\right)+2^{k_n}-n\cdot(k_n+1)\right).$
\end{itemize} All three bounds are tight. Moreover, the bounds are achieved (among others) by every tree which is Colless minimal, i.e., by every tree $T$ for which we have $C(T)=\sum\limits_{j=1}^{k_n-1} 2^j\cdot \breve{s}(2^{-j}\cdot n)$. 
\end{theorem}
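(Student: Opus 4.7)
The plan is to identify the minimum value $d_n$ of $\Delta_{CS}$ over $\bt_n$ with the quantity $m_n := C(\mb_n) - S(\mb_n)$, and then to use the fact that every Colless-minimal tree is also Sackin-minimal to conclude that all Colless-minimal trees achieve this minimum.

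First, I would show by induction on $n$ that $m_n$ satisfies exactly the same recursion and initial conditions as $d_n$ in Proposition~\ref{prop:d_rec}. Using the standard decomposition $\mb_{2n} = (\mb_n, \mb_n)$ and $\mb_{2n+1} = (\mb_{n+1}, \mb_n)$ together with the recursive formulas $C(T) = C(T_a) + C(T_b) + (n_a - n_b)$ and $S(T) = S(T_a) + S(T_b) + n$ recalled in Sections~\ref{sec:prelim_sackin} and~\ref{sec:prelim_Colless}, a short calculation yields
$$m_{2n} = 2\,m_n - 2n \quad \text{and} \quad m_{2n+1} = m_{n+1} + m_n - 2n,$$
matching Proposition~\ref{prop:d_rec}. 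Since $m_0 = m_1 = 0 = d_0 = d_1$, induction gives $d_n = m_n$ for all $n$. This shows that $\mb_n$ is $\Delta_{CS}$-minimal and that the minimum value equals $C(\mb_n) - S(\mb_n)$.

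To extend the statement to all Colless-minimal trees, I would invoke the result from Section~\ref{sec:prelim_Colless} stating that every Colless-minimal tree is Sackin-minimal. Hence for any Colless-minimal $T\in\bt_n$ we have simultaneously $C(T) = C(\mb_n)$ and $S(T) = S(\mb_n)$, so $\Delta_{CS}(T) = C(\mb_n) - S(\mb_n) = d_n$. Therefore every Colless-minimal tree is $\Delta_{CS}$-minimal. Substituting the closed forms $S(\mb_n) = -2^{k_n} + n(k_n+1)$ and $C(\mb_n) = \sum_{j=1}^{k_n-1} 2^j\,\breve{s}(2^{-j}n)$ from Sections~\ref{sec:prelim_sackin} and~\ref{sec:prelim_Colless} then yields the explicit expression claimed for the $\Delta_{CS}$ bound. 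The corresponding bounds for $\Delta_{SC}$ and $N_b$ follow immediately from $\Delta_{SC} = -\Delta_{CS}$ and $N_b = -\tfrac{1}{2}\Delta_{CS}$ (Lemma~\ref{lem:D=sumnb} and Remark~\ref{rem:SCandNb}).

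The conceptual heavy lifting has already been done in Proposition~\ref{prop:d_rec}; once the recursion for $d_n$ is in hand, matching it to the recursion for $m_n$ is essentially bookkeeping. The subtle point worth stressing -- and the reason the theorem is not automatic -- is that minimizing the \emph{difference} $C-S$ has, a priori, no reason to be achieved by the same trees that minimize $C$ individually. It is only because Colless-minima happen to be Sackin-minima as well that the minimum of the difference coincides with the difference of the minima, and this coincidence is precisely what the recursion match encodes.
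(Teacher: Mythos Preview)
Your proposal is correct and follows essentially the same route as the paper: match the recursion of Proposition~\ref{prop:d_rec} against the recursion satisfied by $\Delta_{CS}(\mb_n)$ (the paper does this via Proposition~\ref{prop:rec} and a minimal-counterexample argument, you do it by direct induction using the recursions for $C$ and $S$ separately), then use that all Colless minima are also Sackin minima to transfer the result, and finally derive the statements for $\Delta_{SC}$ and $N_b$ from Lemma~\ref{lem:D=sumnb}.
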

Our proof strategy for $\Delta_{CS}$ will be as follows: We first consider $\Delta_{CS}(T_n^{mb})$ and show that this term follows the same recurrences as the ones given by Proposition~\ref{prop:d_rec}. This proves that $\Delta_{CS}(T_n^{mb})=d_n$ for all $n\in \mathbb{N}$. As $T_n^{mb}$ is known to be Colless-minimal (cf. Section~\ref{sec:prelim_Colless}), this immediately shows that \emph{all} Colless-minimal trees also achieve the minimal value $d_n$ of $\Delta_{CS}$, because obviously all Colless minima have the same Colless value. Also, it is known from the literature that all Colless minima are also Sackin minima  (cf. Section~\ref{sec:prelim_Colless}). As the minimal values both for Colless and Sackin are also already known (cf. Sections~\ref{sec:prelim_sackin} and \ref{sec:prelim_Colless}), this automatically gives us a value for $d_n=\Delta_{CS}(T_n^{mb})$.

\begin{proof}[Proof of Theorem \ref{thm:Dmin}]
We start by considering $\Delta_{CS}(T_n^{mb})$ and prove that $\Delta_{CS}(T_n^{mb})=d_n$ by induction over $n$. For $n=0$ and $n=1$ we have $S(T_n^{mb})=C(T_n^{mb})=0$ and thus $\Delta_{CS}(T_n^{mb})=C(T_n^{mb})-S(T_n^{mb})=0-0=0=d_n$ for $n\in\{0,1\}$. 

Now, let $n\geq 1$ and assume $\CS(T_{n'}^{mb}) = d_{n'}$ for all $n' < n$. By Proposition~\ref{prop:rec} we know that: $$\Delta_{CS}(T_n^{mb})=\Delta_{CS}(T_{n_a}^{mb})+\Delta_{CS}(T_{n_b}^{mb})-2n_b.$$ Moreover, we know that for the maximally balanced tree we have  $n_a=\left\lceil \frac{n}{2}\right\rceil$ and $n_b=\left\lfloor \frac{n}{2}\right\rfloor$ (cf. Section~\ref{prelim:def}), leading to:

\begin{align*}\Delta_{CS}(T_n^{mb})&=\Delta_{CS}\left(T_{\left\lceil \frac{n}{2}\right\rceil}^{mb}\right)+\Delta_{CS}\left(T_{\left\lfloor \frac{n}{2}\right\rfloor}^{mb}\right)-2\left\lfloor \frac{n}{2}\right\rfloor\\ &\overset{\left\lceil \frac{n}{2}\right\rceil,\left\lfloor \frac{n}{2}\right\rfloor<n}{=} d_{\left\lceil \frac{n}{2}\right\rceil} + d_{\left\lfloor \frac{n}{2}\right\rfloor} -2\left\lfloor \frac{n}{2}\right\rfloor
\overset{\text{Prop.~\ref{prop:d_rec}}}{=}
d_n.
\end{align*}

This shows that indeed, $T_n^{mb}$, which is a known Colless minimal tree, also minimizes $\Delta_{CS}$. As stated above, every Colless minimum is also a Sackin minimum, and the minimum values for Colless and Sackin are known (and identical for all trees assuming these values). Thus, we derive for every Colless minimal tree $T$ the term $d_n=\Delta_{CS}(T)=C(T)-S(T)=C(T_n^{mb})-S(T_n^{mb})$. Inserting the terms for the minimal values of the Sackin and Colless indices given by Sections~\ref{sec:prelim_sackin} and \ref{sec:prelim_Colless} completes the statement for $\Delta_{CS}$. 

The respective statement  for $\Delta_{SC}$ now follows by $\Delta_{SC}=-\Delta_{CS}$, and the one for $N_b$ follows by using $\Delta_{CS}=-2N_b$, which we know from Lemma~\ref{lem:D=sumnb}. This completes the proof.
\end{proof}

Hence, by Theorem~\ref{thm:Dmin}, we know that all Colless minimal trees also minimize $\Delta_{CS}$. However, these trees are not the only ones leading to these extrema. In order to derive a characterization of all $\Delta_{CS}$ minimal trees, an important step is given by the following proposition, which shows that the echelon tree $T_n^{be}$ also plays another important role for $\Delta_{CS}$, even though it is not generally an extremal tree neither for Sackin nor for Colless.

\begin{proposition}\label{prop:echelon} For all $n\in\mathbb{N}$, the echelon tree $T_n^{be}$ minimizes $\Delta_{CS}$ (and maximizes $N_b$ and $\Delta_{SC}$). 
\end{proposition}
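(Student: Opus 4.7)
The plan is to prove Proposition~\ref{prop:echelon} by strong induction on $n$, first for $\Delta_{CS}$; the statements for $N_b$ and $\Delta_{SC}$ will then follow immediately via Lemma~\ref{lem:D=sumnb} and Remark~\ref{rem:SCandNb}, since both of those indices are affine transformations of $\Delta_{CS}$. The base cases $n\in\{0,1\}$ are trivial: $T_n^{be}$ has no inner vertex and thus $\Delta_{CS}(T_n^{be})=0=d_n$. For the inductive step, when $n=2^{k_n}$ is a power of two, the observation from Section~\ref{sec:prelim_be} that $T_n^{be}=T_n^{mb}$, combined with Theorem~\ref{thm:Dmin} (every Colless minimum is also a $\Delta_{CS}$ minimum), yields $\Delta_{CS}(T_n^{be})=d_n$ at once.

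The interesting case is $2^{k_n-1}<n<2^{k_n}$. Here the recursive definition of the echelon tree reads $T_n^{be}=(T^{fb}_{k_n-1},\,T^{be}_{n-2^{k_n-1}})$, with the larger maximal pending subtree $T^{fb}_{k_n-1}$ of size $2^{k_n-1}$ on the left. Applying Proposition~\ref{prop:rec} together with the inductive hypothesis (invoked on $T^{be}_{n-2^{k_n-1}}$ and, via the power-of-two case, on $T^{fb}_{k_n-1}=T^{be}_{2^{k_n-1}}$) yields
\[
\Delta_{CS}(T_n^{be}) \;=\; d_{2^{k_n-1}}+d_{n-2^{k_n-1}}-2\bigl(n-2^{k_n-1}\bigr).
\]
Hence the proposition reduces to the purely numerical identity
\[
d_n \;=\; d_{2^{k_n-1}}+d_{n-2^{k_n-1}}-2\bigl(n-2^{k_n-1}\bigr),
\]
i.e., the assertion that the echelon split of $n$ attains the minimum in the recursion of Proposition~\ref{prop:d_rec}, alongside the maximally balanced split.

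Establishing this identity is the main obstacle, and I would prove it by a second strong induction on $n$, driven entirely by Proposition~\ref{prop:d_rec}. For $n$ a power of two, the identity reduces directly to the even-case recursion $d_n=2d_{n/2}-n$. For $n$ not a power of two, write $n=2n'$ or $n=2n'+1$; then Proposition~\ref{prop:d_rec} unfolds $d_n$ on the left into $2d_{n'}-2n'$ or $d_{n'+1}+d_{n'}-2n'$ respectively, while on the right $d_{2^{k_n-1}}$ unfolds to $2d_{2^{k_n-2}}-2^{k_n-1}$ and $d_{n-2^{k_n-1}}$ unfolds analogously (since $n-2^{k_n-1}$ has the same parity as $n$). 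The inductive hypothesis of the identity, applied to $n'$ (and, in the odd case, also to $n'+1$), then rewrites $d_{n'}$ (resp.\ $d_{n'+1}$) in terms of $d_{2^{k_n-2}}$ and smaller $d$-values, after which both sides visibly coincide once the linear terms are collected. The two boundary subcases $n=2^{k_n-1}+1$ (where $n-2^{k_n-1}=1$ and the small echelon piece is a leaf) and $n=2^{k_n}-1$ (where $n-2^{k_n-1}=2^{k_n-1}-1$), in which the parity or $k$-value on one side shifts, need to be verified by a short direct computation using only Proposition~\ref{prop:d_rec} and the already established power-of-two case.

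Once the identity is proven, we conclude $\Delta_{CS}(T_n^{be})=d_n$. The statements for $N_b$ and $\Delta_{SC}$ follow immediately from $N_b=-\tfrac{1}{2}\Delta_{CS}$ (Lemma~\ref{lem:D=sumnb}) and $\Delta_{SC}=-\Delta_{CS}$ (Remark~\ref{rem:SCandNb}), since maximizing either of those indices is equivalent to minimizing $\Delta_{CS}$.
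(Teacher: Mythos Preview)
Your plan is correct and takes a genuinely different route from the paper. You reduce the proposition to the numerical identity $d_n = d_{2^{k_n-1}} + d_{n-2^{k_n-1}} - 2(n-2^{k_n-1})$ and argue by a second strong induction driven by Proposition~\ref{prop:d_rec}; the verification indeed goes through (your flagged boundary case $n = 2^{k_n-1}+1$ telescopes via $d_{m+1}-d_m=d_{m/2+1}-d_{m/2}$ down to $d_2-d_1=-2$, while $n=2^{k_n}-1$ actually needs no special treatment since the power-of-two form of the identity already coincides with the general one at $n'+1=2^{k_n-1}$). The paper instead argues by minimal counterexample, comparing $T_n^{be}$ directly with the known minimizer $T_n^{gfb}$: when $n\geq 3\cdot 2^{k_n-2}$ the echelon and gfb root partitions agree, so Proposition~\ref{prop:rec} together with the minimality of $n$ finishes at once; when $n<3\cdot 2^{k_n-2}$ the paper constructs an auxiliary tree $T'$ (a gfb-shaped tree whose larger maximal pending subtree has been replaced by a smaller echelon tree), shows $\Delta_{CS}(T')=\Delta_{CS}(T_n^{gfb})$ via the minimality hypothesis, and then verifies $\Delta_{CS}(T')=\Delta_{CS}(T_n^{be})$ by a direct comparison of the Sackin and Colless contributions at the two internal vertices where $T'$ and $T_n^{be}$ differ. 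Your argument is more self-contained, relying only on Propositions~\ref{prop:rec} and~\ref{prop:d_rec} and Theorem~\ref{thm:Dmin}, and never touching the gfb decomposition from Section~\ref{sec:prelim_gfb}; the paper's combinatorial route, by contrast, makes the structural kinship between the echelon and gfb trees explicit, which foreshadows the tree-surgery arguments used later in Lemma~\ref{lem:more_than_echelon} and Theorem~\ref{thm:characterization}.
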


\begin{proof} Our proof strategy is to show that for all $n$, we have $\Delta_{CS}(T_n^{be})=\Delta_{CS}(T_n^{gfb})$. The rest of the statement then follows by Theorem~\ref{thm:Dmin} and Lemma~\ref{lem:D=sumnb} as well as Remark~\ref{rem:SCandNb}.

We show $\Delta_{CS}(T_n^{be})=\Delta_{CS}(T_n^{gfb})$ by proof of induction over $n$. First, consider the base case $n=1$. Obviously, $T_1^{be}$ is equivalent to $T_1^{gfb}$, and thus, $\Delta_{CS}(T_1^{be})=\Delta_{CS}(T_1^{gfb})$.

Now, let $n>1$ and assume that $\Delta_{CS}(T_{n'}^{be})=\Delta_{CS}(T_{n'}^{gfb})$ for all $n'<n$. Consider $T_n^{be}$ and let let $k_n=\lceil \log_2n\rceil$, in the following. Moreover, we denote by $T_n^{be}=(T_a^{be},T_b^{be})$ the standard decomposition of $T_n^{be}$ into its two maximal pending subtrees $T_a^{be}$ and $T_b^{be}$ with $n_a^{be}=2^{k_n-1}$ and $n_b^{be}$ (with $n_a^{be}\geq n_b^{be}$) leaves, respectively. Similarly, we denote by $T_n^{gfb}=(T_a^{gfb},T_b^{gfb})$ the standard decomposition of $T_n^{gfb}$ into its two maximal pending subtrees $T_a^{gfb}$ and $T_b^{gfb}$ with $n_a^{gfb}$ and $n_b^{gfb}$ (with $n_a^{gfb}\geq n_b^{gfb}$) leaves, respectively.

Note that by definition of $k_n$, we have $n \in \{2^{k_n-1}+1,\ldots,2^{k_n}\}$. Based on this, we now distinguish two cases based on the two possible sizes of $n_a^{gfb}$ and $n_b^{gfb}$ as stated in Section~\ref{sec:prelim_gfb}.

\begin{enumerate}
\item If $n\geq 3\cdot 2^{k_n-2}$, we have $n_a^{gfb}=2^{k_n-1}$ (cf. Section~\ref{sec:prelim_gfb}), which in turn equals $n_a^{be}$ by definition of the echelon tree. Thus, as $n_a^{gfb}+n_b^{gfb}=n_a^{be}+n_b^{be}=n$, we must also have $n_b^{gfb}=n_b^{be}=n-2^{k_n-1}$. As $T_a^{gfb}=T_a^{be}=T_{k_n}^{fb}$, we clearly have $\Delta_{CS}(T_a^{be})=\Delta_{CS}(T_a^{gfb})$. Moreover, as $n_b^{gfb}=n_b^{be}<n$ and by the induction hypothesis, we know that $\Delta_{CS}(T_b^{be})=\Delta_{CS}(T_b^{gfb})$ as both employ the same number of leaves. This leads to the following:

\begin{align*}
\Delta_{CS}(T_n^{be}) & \overset{\text{\tiny Prop.~\ref{prop:rec}}}{=} \Delta_{CS}(T_a^{be})+\Delta_{CS}(T_b^{be})-2n_b^{be}\\ &= \Delta_{CS}(T_a^{gfb})+\Delta_{CS}(T_b^{gfb})-2n_b^{gfb}
\overset{\text{\tiny Prop.~\ref{prop:rec}}}{=} \Delta_{CS}(T_n^{gfb}),
\end{align*}

where we know the latter term to be minimal by Theorem~\ref{thm:Dmin}. This completes the proof for the case $n \geq 3\cdot 2^{k_n-2}$.

\item If $n< 3\cdot 2^{k_n-2}$, we know by Section~\ref{sec:prelim_gfb} that $T_b^{gfb}=T_{k_n-2}^{fb}$, i.e., $n_b^{gfb}=2^{k_n-2}$ and $n_a=n-2^{k_n-2}$. However, $T_a^{be}=T_{k_n-1}^{fb}$ and $n_b^{be}=n-2^{k_n-1}$. $T_n^{gfb}$ and $T_n^{be}$ are schematically depicted in Figure~\ref{fig:echelon_opt}, together with a third tree, namely $T'$, which we consider here, too. Formally, $T'=(T_a',T_b')$ is defined to be the tree with $T_a'=\left(T_{n-2^{k_n-1}}^{be},T_{k_n-2}^{fb}\right)$ and $T'_b=T_{k_n-2}^{fb}$. Note that $T'$ is such that its root implies the gfb partition and one of its maximum pending subtrees coincides with the gfb tree's smaller one. Additionally, its larger maximum pending subtree's two maximum pending subtrees together with this smaller subtree are identical to the three subtrees employed by the echelon tree, cf. Figure \ref{fig:echelon_opt}. Thus, $T'$ can be regarded as a mixture of $T_n^{gfb}$ and $T_n^{be}$.

\begin{figure}[ht]
  \centering
  \includegraphics[scale=0.9]{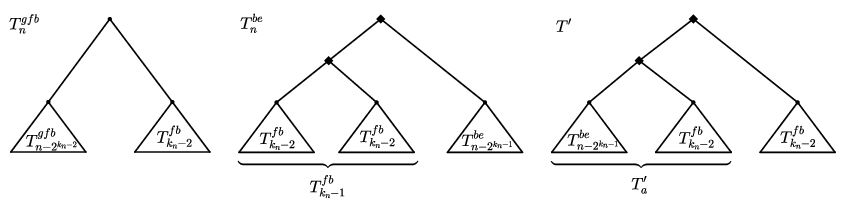}
  \caption{Trees $T_n^{gfb}$, $T_n^{be}$ and $T'$ as used in the proof of Proposition~\ref{prop:echelon}. Subtrees are depicted as triangles. Note that the diamond nodes highlight the nodes which differ between $T_n^{be}$ and $T'$. }
\label{fig:echelon_opt}
\end{figure}

Our proof strategy now is as follows. We first show that $T_a'$ is an echelon tree of size $n-2^{k_n-2}$ and that we therefore must have $\Delta_{CS}(T_n^{gfb})=\Delta_{CS}(T')$. We then show that $\Delta_{CS}(T')=\Delta_{CS}(T_n^{be})$, completing the proof.

We start by showing that $T_a'$ is an echelon tree.
Note that $n-2^{k_n-1}+2^{k_n-2}=n-2^{k_n-2}<2^{k_n-1}$ as $n<3\cdot 2^{k_n-2}$ by assumption. This shows that $2^{k_n-2}$ is the largest power of two fitting into the $n-2^{k_n-2}$ leaves of $T_a'$, and with $T_a'=\left( T_{k_n-2}^{fb},T_{n-2^{k_n-1}}^{be}\right)$ it is clear that $T_a'$ is indeed an echelon tree.

As $n_a'=n-2^{k_n-2}<n$, $\Delta_{CS}(T_a')=\Delta_{CS}(T_{n-2^{k_n-2}}^{gfb})$ by the induction hypothesis. Using Proposition~\ref{prop:rec}, this leads to:
\begin{align*}
\Delta_{CS}(T') \ &\overset{\text{\tiny Prop.~\ref{prop:rec}}}{=}\Delta_{CS}(T_a')+\Delta_{CS}(T_b')-2n_b'\\&= \Delta_{CS}(T_{n-2^{k_n-2}}^{gfb})+\Delta_{CS}(T_{k_n-2}^{fb})-2\cdot 2^{k_n-2}
\\&= \Delta_{CS}(T_{n-2^{k_n-2}}^{gfb})+\Delta_{CS}(T_{2^{k_n-2}}^{gfb})-2\cdot 2^{k_n-2}\\&\overset{\text{\tiny Prop.~\ref{prop:rec}}}{=}\Delta_{CS}(T_n^{gfb}).
\end{align*}

It remains to show that $\Delta_{CS}(T_n^{gfb})=\Delta_{CS}(T')=\Delta_{CS}(T_n^{be})$.

We now explicitly compare $\Delta_{CS}(T')=C(T')-S(T')$ with $\Delta_{CS}(T_n^{be})=C(T_n^{be})-S(T_n^{be})$ and show that $\Delta_{CS}(T')=\Delta_{CS}(T_n^{be})$. In order to make this comparison, consider Figure~\ref{fig:echelon_opt} again and note that $T_n^{be}$ and $T'$ only differ in the two vertices highlighted by diamonds. This immediately shows that:

\begin{align*}C(T')&=C(T_n^{be})-(2^{k_n-1}-(n-2^{k_n-1}))-(2^{k_n-2}-2^{k_n-2})\\
&\ +(n-2^{k_n-1}+2^{k_n-2}-2^{k_n-2})+|n-2^{k_n-1}-2^{k_n-2}| \\
&\overset{\star}{=}C(T_n^{be})+2n-2^{k_n}-2^{k_n-1}+(2^{k_n-2}-(n-2^{k_n-1}))\\
&=C(T_n^{be})+n-2^{k_n}+2^{k_n-2}\\
S(T')&=S(T_n^{be})-2^{k_n-1}+(n-2^{k_n-1}+2^{k_n-2})\\
&= S(T_n^{be})-2^{k_n}+n+2^{k_n-2},
\end{align*}

where the equality marked with the asterisk stems from the fact that we have already seen that $2^{k_n-2}$ is the largest power of two fitting into $n-2^{k_n-1}+2^{k_n-2}=n-2^{k_n-2}$, showing that $T_{k_n-2}^{fb}$ forms the larger maximum pending subtree in $T_{n-2^{k_n-1}+2^{k_n-2}}^{be}=T_{n-2^{k_n-2}}^{be}$ by the definition of echelon trees.

Now we can calculate $\Delta_{CS}(T')$:

\begin{align*}
\Delta_{CS}(T')&=C(T')-S(T')\\ &= \underbrace{(C(T_n^{be})+n-2^{kn}+2^{k_n-2})}_{=C(T_n^{be})}-\underbrace{(S(T_n^{be})-2^{k_n}+n+2^{k_n-2})}_{=S(T_n^{be})}=\Delta_{CS}(T_n^{be}).
\end{align*}

Hence, altogether we have $\Delta_{CS}(T_n^{gfb})=\Delta_{CS}(T')=\Delta_{CS}(T_n^{be})$. This completes the proof.
\end{enumerate}
\end{proof}

In order to show that $\Delta_{CS}$, $\Delta_{SC}$ and $N_b$ are indeed (im)balance indices, we need the following lemma.

\begin{lemma}\label{lem:more_than_echelon} Let $T=(T_a,T_b)$ be a rooted binary tree with $n=n_a+n_b$ leaves with $n_a\geq n_b$ being the numbers of leaves in $T_a$ and $T_b$, respectively. Let $k_n=\lceil \log_2 n\rceil $. Then, if $n_a>2^{k_n-1}$, $T$ is not $\Delta_{CS}$ minimal.
\end{lemma}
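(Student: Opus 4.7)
The plan is to reduce everything to a statement about $\Delta_{CS}$. Since $\Delta_{CS}=-2N_b=-\Delta_{SC}$ by Lemma~\ref{lem:D=sumnb} and Remark~\ref{rem:SCandNb}, a tree minimizes $\Delta_{CS}$ exactly when it maximizes $N_b$ and $\Delta_{SC}$, so it is enough to show $\Delta_{CS}(T) > d_n$, where $d_n$ denotes the minimum $\Delta_{CS}$-value on $\bt_n$. By Proposition~\ref{prop:rec}, $\Delta_{CS}(T)=\Delta_{CS}(T_a)+\Delta_{CS}(T_b)-2n_b \geq d_{n_a}+d_{n_b}-2n_b$, so I only need to prove $d_{n_a}+d_{n_b}-2n_b > d_n$ under the hypothesis $n_a > 2^{k_n-1}$.

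Set $a := n_a-2^{k_n-1} \geq 1$ and $m := n-2^{k_n-1} \geq 1$, so $n_b = m-a$ and $a \leq m-1 \leq 2^{k_n-1}-1$. The central observation is that the echelon tree lets me ``align'' the expressions for $d_{n_a}$ and $d_n$: because $n_a, n \in (2^{k_n-1}, 2^{k_n}]$, the echelon partitions of both start with the block $2^{k_n-1}$, so Proposition~\ref{prop:echelon} together with Proposition~\ref{prop:rec} (and the fact that the fully balanced tree $T_{k_n-1}^{fb}$ is itself $\Delta_{CS}$-minimal by Theorem~\ref{thm:Dmin}) yields
\[
d_{n_a} \;=\; d_{2^{k_n-1}} + d_a - 2a, \qquad d_n \;=\; d_{2^{k_n-1}} + d_m - 2m.
\]
Substituting these into $d_{n_a}+d_{n_b}-2n_b - d_n$ and using $n_b=m-a$, the $d_{2^{k_n-1}}$ terms cancel and everything collapses to the single inequality $d_a + d_{m-a} > d_m$.

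This last inequality is immediate from Proposition~\ref{prop:rec}: the tree on $m$ leaves obtained by joining a $\Delta_{CS}$-minimal tree on $\max(a,m-a)$ leaves to one on $\min(a,m-a)$ leaves has $\Delta_{CS}$-value equal to $d_a + d_{m-a} - 2\min(a,m-a)$, which must be at least $d_m$, giving $d_a+d_{m-a} \geq d_m + 2\min(a,m-a) \geq d_m+2 > d_m$. The conclusions for $N_b$ and $\Delta_{SC}$ then require no further work.

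I do not foresee a serious obstacle. The only detail to verify carefully is that the echelon partition of $n_a=2^{k_n-1}+a$ really is $(2^{k_n-1},a)$ and not something larger, but this is forced by $0<a\leq 2^{k_n-1}-1 < 2^{k_n-1}$, which entails $k_{n_a}=k_n$. The mild ``miracle'' making the argument short is that, because $n_a$ and $n$ lie in the same dyadic interval $(2^{k_n-1},2^{k_n}]$, their echelon splits share a common fully balanced sub-block, and the contribution $d_{2^{k_n-1}}$ cancels exactly -- reducing the whole claim to a one-line invocation of the $\Delta_{CS}$ recursion.
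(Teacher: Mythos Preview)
Your proof is correct and takes a genuinely different route from the paper. The paper argues by tree surgery: it replaces $T_a$ by the echelon tree $T_{n_a}^{be}$ to obtain $\widetilde{T}$ with the same $\Delta_{CS}$ value, then explicitly constructs a competitor $T'$ by moving the small echelon sub-block $T^\ast$ from the $T_a$-side to the $T_b$-side, and computes $\Delta_{CS}(T')-\Delta_{CS}(\widetilde{T})$ via a direct Sackin/Colless difference, handling two cases depending on whether $T^\ast$ is larger or smaller than $T_b$.

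Your argument is purely algebraic in the minimum values $d_k$: you observe that $n$ and $n_a$ lie in the same dyadic block, so Proposition~\ref{prop:echelon} gives $d_{n_a}=d_{2^{k_n-1}}+d_a-2a$ and $d_n=d_{2^{k_n-1}}+d_m-2m$ with a common summand that cancels, reducing everything to the one-line inequality $d_a+d_{m-a}>d_m$, which is forced by the recursion. This is cleaner and avoids both the explicit tree construction and the case split; the paper's approach, in exchange, is more self-contained in that it exhibits a concrete improving tree rather than working only at the level of optimal values. Both proofs ultimately hinge on the same ingredient, namely the $\Delta_{CS}$-minimality of the echelon tree (Proposition~\ref{prop:echelon}).
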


\begin{proof} Seeking a contradiction, assume there is a tree $T$ as described in the lemma which is indeed $\Delta_{CS}$ minimal. We consider the smallest $n$ for which such a tree $T$ exists. 
As $T$ minimizes $\Delta_{CS}$ by assumption, by Corollary~\ref{cor:T_min_then_Ta_Tb_min}, $T_a$ and $T_b$ must minimize $\Delta_{CS}$, too. We now consider tree $\widetilde{T}$ as depicted in Figure~\ref{fig:notworsethanechelon}, which is like $T$, but in which $T_a$ got exchanged by $T_{n_a}^{be}$. 

\begin{figure}[ht]
  \centering
  \includegraphics[scale=1.2]{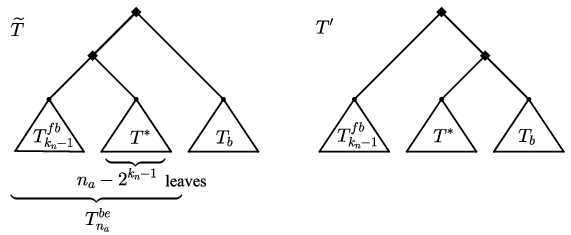}
  \caption{ Trees $\widetilde{T}$ and $T'$ as used in the proof of Lemma~\ref{lem:more_than_echelon}. Note that both trees employ the same pending subtrees and the only vertices in which they differ are highlighted with diamonds.}
\label{fig:notworsethanechelon}
\end{figure}

By Proposition~\ref{prop:echelon}, $T_{n_a}^{be}$ is, just like $T_a$, $\Delta_{CS}$ minimal, implying $\Delta_{CS}(T_a)=\Delta_{CS}(T_{n_a}^{be})$. Furthermore, as both trees share the same subtree $T_b$, we in summary have (using Proposition~\ref{prop:rec}) that $\Delta_{CS}(T)=\Delta_{CS}(\widetilde{T})$. However, note that $2^{k_n}\geq n>n_a>2^{k_n-1}$ by assumption, so clearly $2^{k_n-1}$ is the largest power of two fitting into $n_a$. Therefore, $T_{n_a}^{be}$ consists of a larger maximal pending subtree $T_{k_n-1}^{fb}$ and a smaller subtree $T^\ast$ of size $n_a-2^{k_n-1}$, cf. Figure~\ref{fig:notworsethanechelon}. Thus, we have $2^{k_n-1}>n_a-2^{k_n-1}$. 

We now consider $T'$, which is like $\widetilde{T}$ except that $T^\ast$ is moved from the larger maximum pending subtree of $\widetilde{T}$ to the smaller one, see again Figure~\ref{fig:notworsethanechelon}. Our aim now is to show that $\Delta_{CS}(T')<\Delta_{CS}(\widetilde{T})=\Delta_{CS}(T)$, which contradicts the assumption on $T$ and thus completes the proof.

In order to see this, we now explicitly analyze $\Delta_{CS}(T')=C(T')-S(T')$. First, note that $\widetilde{T}$ and $T'$ only differ in the two nodes highlighted in Figure~\ref{fig:notworsethanechelon} by diamonds. This way, we easily see that:

\begin{align*}C(T')&=C(\widetilde{T})-(2^{k_n-1}-(n_a-2^{k_n-1}))-(n_a-n_b)\\
& \ \ + |(n_a-2^{k_n-1})-n_b|+(2^{k_n-1}-(n_a-2^{k_n-1}+n_b))\\
&= C(\widetilde{T}) -n_a+|n_a-2^{k_n-1}-n_b|\\
S(T')&=S(\widetilde{T})-n_a+(n_a-2^{k_n-1}+n_b)=S(\widetilde{T})-2^{k_n-1}+n_b,
\end{align*}

which directly shows:

\begin{align*}
\Delta_{CS}(T')&=C(T')-S(T')\\
&= C(\widetilde{T})-S(\widetilde{T})-n_a-n_b+2^{k_n-1}+|n_a-2^{k_n-1}-n_b|\\
&=\Delta_{CS}(\widetilde{T})-(n_a+n_b-2^{k_n-1})+\max\{\underbrace{n_a -2^{k_n-1}-n_b}_{<n_a+n_b-2^{k_n-1}},\underbrace{-n_a+2^{k_n-1}+n_b}_{\substack{
<n_a+n_b-2^{k_n-1} \\
\mbox{ \small as } n_a>2^{k_n-1}
}}\}\\
&< \Delta_{CS}(\widetilde{T}).
\tag{\theequation}\stepcounter{equation}\label{eq:deltaCS}
\end{align*}

Hence, we have $\Delta_{CS}(T')<\Delta_{CS}(\widetilde{T})$, which is clearly a contradiction to $\Delta_{CS}(\widetilde{T})=\Delta_{CS}(T)$ being minimal by assumption. This completes the proof.
\end{proof}

Before we can finally show that $\Delta_{CS}$ is an imbalance index, we still need to prove the following proposition.

\begin{proposition} \label{prop:D_fb}
Let $n=2^{k_n}$ for some $k_n\in \mathbb{N}$. Then, $\fb_{k_n}$ is the unique tree minimizing $\Delta_{CS}$ and maximizing both $N_b$ and $\Delta_{SC}$, and the minimal value of $\Delta_{CS}$ is $-2^{k_n}\cdot k_n$ and the maximal values of $\Delta_{SC}$ and $N_b$ are $2^{k_n}\cdot k_n$  and $2^{k_n-1}\cdot k_n$, respectively. 
\end{proposition}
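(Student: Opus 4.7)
The plan is to prove the claim for $\Delta_{CS}$ first, because the statements for $N_b$ and $\Delta_{SC}$ then drop out immediately from $\Delta_{CS}=-2N_b$ (Lemma~\ref{lem:D=sumnb}) together with $\Delta_{SC}=-\Delta_{CS}$ and Remark~\ref{rem:SCandNb}. For the value, I would simply recall from Section~\ref{sec:prelim_sackin} that $\fb_{k_n}$ is the unique Sackin minimum for $n=2^{k_n}$ with $S(\fb_{k_n})=2^{k_n}\cdot k_n$, and from Section~\ref{sec:prelim_Colless} that $C(\fb_{k_n})=0$. Thus $\Delta_{CS}(\fb_{k_n})=0-2^{k_n}\cdot k_n=-2^{k_n}\cdot k_n$, and since $\fb_{k_n}$ is Colless-minimal, Theorem~\ref{thm:Dmin} already guarantees that this is the minimum value $d_n$ of $\Delta_{CS}$ on $\bt_n$. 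The corresponding extremal values for $N_b$ and $\Delta_{SC}$ are then $\tfrac{1}{2}\cdot 2^{k_n}\cdot k_n=2^{k_n-1}\cdot k_n$ and $2^{k_n}\cdot k_n$, respectively.

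All that remains is \emph{uniqueness}, which I would prove by induction on $k_n$. In the base case $k_n\in\{0,1\}$, $\bt_n$ contains only one tree, which is $\fb_{k_n}$ itself, so the claim is trivial. For the induction step assume the result for $k_n-1$ and let $T=(T_a,T_b)\in\bt_{2^{k_n}}$ with $n_a\geq n_b$ be an arbitrary $\Delta_{CS}$-minimizer. Since $n=2^{k_n}$, Lemma~\ref{lem:more_than_echelon} forces $n_a\leq 2^{k_n-1}$, and combined with $n_a+n_b=2^{k_n}$ and $n_a\geq n_b$ this yields $n_a=n_b=2^{k_n-1}$. By Corollary~\ref{cor:T_min_then_Ta_Tb_min}, both $T_a$ and $T_b$ then minimize $\Delta_{CS}$ in $\bt_{2^{k_n-1}}$, so by the induction hypothesis $T_a=T_b=\fb_{k_n-1}$. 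Using $\fb_{k_n}=(\fb_{k_n-1},\fb_{k_n-1})$ (cf.\ Section~\ref{prelim:def}) we conclude $T=\fb_{k_n}$, completing the uniqueness argument for $\Delta_{CS}$.

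Finally, Lemma~\ref{lem:D=sumnb} and Remark~\ref{rem:SCandNb} say that the $\Delta_{CS}$-minima, the $N_b$-maxima and the $\Delta_{SC}$-maxima coincide as subsets of $\bt_n$. Hence uniqueness and the respective extremal values transfer immediately to $N_b$ and $\Delta_{SC}$, which gives the statement as claimed.

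The only nontrivial step is the uniqueness argument, but essentially all the work has already been done in Lemma~\ref{lem:more_than_echelon} and Corollary~\ref{cor:T_min_then_Ta_Tb_min}; the role of this proposition is mainly to wrap up those results in the special case $n=2^{k_n}$, where every power-of-two decomposition collapses to the canonical balanced one, so I do not expect a genuine obstacle beyond correctly invoking the induction.
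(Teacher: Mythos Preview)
Your proof is correct and follows essentially the same approach as the paper: both arguments use Theorem~\ref{thm:Dmin} for optimality and the computed value, then establish uniqueness by combining Lemma~\ref{lem:more_than_echelon} (to force $n_a=n_b=2^{k_n-1}$) with Corollary~\ref{cor:T_min_then_Ta_Tb_min} and an inductive/minimal-counterexample step to conclude $T_a=T_b=\fb_{k_n-1}$. The only cosmetic difference is that you phrase the argument as explicit induction on $k_n$, whereas the paper phrases it as a smallest-counterexample contradiction.
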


\begin{proof}  Before we start proving uniqueness, note that we are guaranteed optimality of $T_k^{fb}$ for all $n=2^k$ by Theorem~\ref{thm:Dmin} as $T_k^{fb}$ is a Colless minimal tree. 

We now show uniqueness by proof of induction on $n$. In case $n=2^0=1$ or $n=2^1=2$, the only tree  coincides with $T^{fb}_0$ and $T^{fb}_1$, respectively.

 Now, let $k_n>1$ and assume that for every $k_{n'}<k_n$, $T^{fb}_{k_{n'}}$ is the unique minimizer of $\CS$ for trees with $2^{k_{n'}}$ leaves. Consider $T=(T_a,T_b)$ with $n=2^{k_n}$ leaves.
 Let $n_a$ and $n_b$ denote the numbers of leaves of $T_a$ and $T_b$, respectively, such that $n_a\geq n_b$. Together with $n_a+n_b=n$ this implies $n_a\geq \frac{n}{2}=2^{k_n-1}$. However, as $T$ minimizes $\Delta_{CS}$, we know from Lemma~\ref{lem:more_than_echelon} that $n_a\leq 2^{k_n-1}$. Thus, we must have $n_a=2^{k_n-1}$ and thus also $n_b=n-n_a=2^{k_n}-2^{k_n-1}=2^{k_n-1}$. 

Now, by Corollary~\ref{cor:T_min_then_Ta_Tb_min}, $T_a$ and $T_b$ must be $\Delta_{CS}$ minimal, too, and as $n_a$ and $n_b$ with  $n_a=n_b=2^{k_n-1}$ are powers of 2 that are smaller than $n$, by the induction hypothesis, $T_a$ and $T_b$ must both be fully balanced trees. Hence, $T_a=T_b=T_{k_n-1}^{fb}$. This shows that $T=(T_a,T_b)=(T_{k_n-1}^{fb},T_{k_n-1}^{fb})$ is a fully balanced tree. 

The value of $\Delta_{CS}(T_{k_n}^{fb})=C(T_{k_n}^{fb})-S(T_{k_n}^{fb})$ can be directly derived from the fact that $C(T_{k_n}^{fb})=0$ and $S(T_{k_n}^{fb})=2^{k_n}\cdot k_n$ (cf. Sections~\ref{sec:prelim_sackin} and \ref{sec:prelim_Colless}). The value for $\Delta_{SC}$ then follows by $\Delta_{SC}=-\Delta_{CS}$, and the one for $N_b$ is a consequence of Lemma~\ref{lem:D=sumnb}. This completes the proof. 
\end{proof}

Before we can characterize all $\Delta_{CS}$ minimal trees, we need one more lemma.

\begin{lemma}\label{lem:betweenBE&GFB}
Let $T=(T_a,T_b)$ be a rooted binary tree with $n=n_a+n_b$ leaves with $n_a\geq n_b$ being the numbers of leaves in $T_a$ and $T_b$, respectively. Let $k_n=\lceil \log_2(n)\rceil$ and let $(n_a^{gfb},n_b^{gfb})$ and $(n_a^{be}=2^{k_n-1},n_b^{be})$ denote the gfb and echelon partitions, respectively. Then, if $n_a^{gfb}<n_a<n_a^{be}=2^{k_n-1}$, $T$ is not $\Delta_{CS}$ minimal.
\end{lemma}

\begin{proof}
Seeking a contradiction, we assume there is a tree $T=(T_a,T_b)$ with $n_a^{gfb}<n_a<n_a^{be}=2^{k_n-1}$ that is $\CS$ minimal. First note that if $n \geq 3\cdot 2^{k_n-2}$, we know by Section \ref{sec:Prelim} that $n_a^{gfb}=2^{k_n-1}=n_a^{be}$, which shows that then $n_a^{gfb}<n_a<n_a^{be}$ cannot happen. So we must have $n<3\cdot 2^{k_n-2}$. 
Then, we have $n_b^{gfb}=2^{k_n-2}$ and thus $2^{k_n-1}>n_a>n_a^{gfb}\geq n_b^{gfb}=2^{k_n-2}>n_b$. This shows that $T_{n_a}^{be}$, the echelon tree with $n_a$ leaves, must contain $T_{k_n-2}^{fb}$ as a maximum pending subtree. We now construct tree $\widetilde{T}=(T_{n_a}^{be},T_b)$ as depicted in Figure~\ref{fig:characterization}. This tree has the same maximum pending subtree $T_b$ as $T$, but $T_a$ has been replaced by $T_{n_a}^{be}$. However, since $T$ is $\Delta_{CS}$ minimal, $T_a$ is $\Delta_{CS}$ minimal by Corollary~\ref{cor:T_min_then_Ta_Tb_min}. Furthermore, $T_{n_a}^{be}$ is also $\Delta_{CS}$ minimal by Proposition~\ref{prop:echelon} as it is an echelon tree. Therefore, we must have $\Delta_{CS}(T_a)=\Delta_{CS}(T_{n_a}^{be})$. This leads to:

\begin{align*}
\Delta_{CS}(T)&\overset{\text \tiny Prop.~\ref{prop:rec}}{=} \Delta_{CS}(T_a) + \Delta_{CS}(T_b) -2n_b = \Delta_{CS}(T_{n_a}^{be}) + \Delta_{CS}(T_b) -2n_b= \Delta_{CS}(\widetilde{T}).
\end{align*}

Thus, we know that $\Delta_{CS}(\widetilde{T})=\Delta_{CS}(T)$, implying $\widetilde{T}$ is $\Delta_{CS}$ minimal just like $T$.

\begin{figure}[ht]
  \centering
  \includegraphics[scale=1.2]{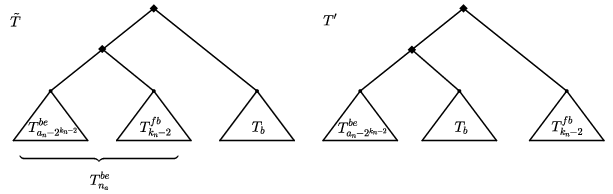}
  \caption{Trees $\widetilde{T}$ and $T'$ as needed in the proof of Lemma~\ref{lem:betweenBE&GFB}. Note that both trees employ the same pending subtrees and the only vertices in which they differ are highlighted with diamonds.}
\label{fig:characterization}
\end{figure}

We now consider tree $T'$ from Figure~\ref{fig:characterization}. $T'$ is like $\widetilde{T}$, but $T_b$ and $T_{k_n-2}^{fb}$ are interchanged. Note that $\widetilde{T}$ and $T'$ differ only in the two vertices marked with diamonds. This immediately leads to (calculations omitted as they are analogous to those of Lemma \ref{lem:more_than_echelon}): 

\begin{align*}C(T')&=C(\widetilde{T}) -2^{k_n}+n_a+2n_b+|n_a-n_b-2^{k_n-2}|\\
S(T')&=S(\widetilde{T})-2^{k_n-2}+n_b,
\end{align*}

which directly shows:
\begin{align*}
\Delta_{CS}(T')&=C(T')-S(T')\\&=\Delta_{CS}(\widetilde{T})+n_a+n_b-3\cdot 2^{k_n-2}+|n_a-n_b-2^{k_n-2}|
\\
&=\Delta_{CS}(\widetilde{T})+n_a+n_b-3\cdot 2^{k_n-2}+\max\{n_a-n_b-2^{k_n-2},-n_a+n_b+2^{k_n-2}\}\\
&=\Delta_{CS}(\widetilde{T})+\max\{2\underbrace{(n_a-2^{k_n-1})}_{<0},2\underbrace{(n_b-2^{k_n-2})}_{<0}\} < \Delta_{CS}(\widetilde{T}).
\end{align*}

Hence, we derive $\Delta_{CS}(T')<\Delta_{CS}(\widetilde{T})$, which contradicts the fact that, as $T$ is $\Delta_{CS}$ minimal by assumption and as we have $\Delta_{CS}(\widetilde{T})=\Delta_{CS}(T)$, $\widetilde{T}$ is $\Delta_{CS}$ minimal, too. 

\end{proof}

We continue with a full characterization of all $\Delta_{CS}$ minimal trees. Note that while we know from Theorem~\ref{thm:Dmin} that all Colless minima also minimize $\Delta_{CS}$, and by Proposition~\ref{prop:echelon} that the same is true for the echelon tree, these trees generally form only a subset of $\Delta_{CS}$ minimal trees, cf. Figure~\ref{fig:moreminima}. 

\begin{figure}[ht]
  \centering
  \includegraphics[scale=1.2]{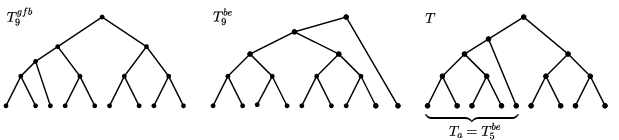}
  \caption{A depiction of all $\Delta_{CS}$ minimal trees for $n=9$: The set consists of $T_9^{gfb}$, $T_9^{be}$ and $T=(T_a,T_b)$. Note that at the root, $T$ has a Colless partition as $n_a=n_a^{gfb}=5$ and $n_b=n_b^{gfb}=4$. However, $T$ is not Colless minimal as it employs a non-Colless minimal subtree $T_a=T_5^{be}$. As this subtree is $\Delta_{CS}$ minimal, too, so is $T$. Note that $T$ can be regarded as a \enquote{combination} of the echelon minimality property and the Colless minimality property. Theorem~\ref{thm:characterization} shows that echelon trees, Colless minima and combinations thereof are the only $\Delta_{CS}$ minima.}
\label{fig:moreminima}
\end{figure}

However, it turns out that these trees are the basic building blocks of which all $\Delta_{CS}$ minimal trees consist. Our characterization of $\Delta_{CS}$ minimal trees is based on pairs $(n_a,n_b)$ which are either Colless partitions as defined in Section~\ref{sec:prelim_Colless} or echelon partitions as defined in Section~\ref{prelim:def}.

\begin{theorem} \label{thm:characterization} Let $n\in \mathbb{N}$. Let $T$ be a rooted binary tree with $n$ leaves. Then, we have: $T$ minimizes $\Delta_{CS}$ and maximizes $N_b$ and $\Delta_{SC}$ if and only if all pending subtrees $T_v$ rooted at inner nodes $v\in \mathring{V}(T)$ have maximal pending subtrees $T_v=(T_{{v_a}},T_{{v_b}}$) with $n_{v_a}\geq n_{v_b}$ many leaves, respectively, such that $(n_{v_a},n_{v_b})$ is either a Colless partition  or an echelon partition. 
\end{theorem}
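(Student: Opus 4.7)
The proof proceeds by strong induction on $n$; the base cases $n\in\{1,2\}$ are immediate because $|\bt_n|=1$.

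\emph{Sufficiency.} Assume every inner split of $T=(T_a,T_b)\in\bt_n$ is either a Colless partition or the echelon partition. By the inductive hypothesis applied to the two maximal pending subtrees, $\Delta_{CS}(T_a)=d_{n_a}$ and $\Delta_{CS}(T_b)=d_{n_b}$, so Proposition~\ref{prop:rec} gives $\Delta_{CS}(T)=d_{n_a}+d_{n_b}-2n_b$. This common value equals $d_n$: if $(n_a,n_b)$ is a Colless partition, any Colless-minimal tree with that root split is $\Delta_{CS}$-minimal by Theorem~\ref{thm:Dmin}, and its pending subtrees are already $\Delta_{CS}$-minimal, so Proposition~\ref{prop:rec} forces $d_n=d_{n_a}+d_{n_b}-2n_b$; if $(n_a,n_b)=(2^{k_n-1},n-2^{k_n-1})$, the same identity applied to $T_n^{be}=(T_{k_n-1}^{fb},T_{n-2^{k_n-1}}^{be})$ together with Proposition~\ref{prop:echelon} yields $d_n=d_{n_a}+d_{n_b}-2n_b$. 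Hence $T$ is $\Delta_{CS}$-minimal, and the statements for $N_b$ and $\Delta_{SC}$ then follow from Lemma~\ref{lem:D=sumnb} and Remark~\ref{rem:SCandNb}.

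\emph{Necessity.} Let $T=(T_a,T_b)$ minimize $\Delta_{CS}$. By Corollary~\ref{cor:T_min_then_Ta_Tb_min} both $T_a$ and $T_b$ are $\Delta_{CS}$-minimal, and the inductive hypothesis already covers their inner splits; it suffices to check that the root split $(n_a,n_b)$ is a Colless or echelon partition. Lemma~\ref{lem:more_than_echelon} gives $n_a\leq 2^{k_n-1}$, and equality is exactly the echelon partition. When $n_a<2^{k_n-1}$, I would use the identity $d_m=c_m-s_m$ (Theorem~\ref{thm:Dmin}, with $s_m$ the minimum Sackin value on $\bt_m$) for $m\in\{n,n_a,n_b\}$ together with Proposition~\ref{prop:rec} to recast $\Delta_{CS}(T)=d_n$ as
\[
\alpha:=c_{n_a}+c_{n_b}+n_a-n_b-c_n \;=\; s_{n_a}+s_{n_b}+n-s_n=:\beta,
\]
with $\alpha,\beta\geq 0$; here $\alpha=0$ characterizes Colless partitions and $\beta=0$ characterizes Sackin partitions (Section~\ref{sec:prelim_sackin}). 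Splitting according to the two cases of Section~\ref{sec:prelim_gfb}: if $n\geq 3\cdot 2^{k_n-2}$ then $n_a^{gfb}=2^{k_n-1}$, so $n_a<2^{k_n-1}$ forces $n_a-n_b<n_a^{gfb}-n_b^{gfb}$, i.e., $\beta=0$ and therefore $\alpha=0$, a Colless partition; the same argument handles every Sackin partition in the complementary case $n<3\cdot 2^{k_n-2}$, so the only partitions that remain to be excluded are the non-Sackin ones in the narrow band $n-2^{k_n-2}<n_a<2^{k_n-1}$.

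\emph{Main obstacle.} The hard step is excluding these non-Sackin partitions in the band $(n-2^{k_n-2},2^{k_n-1})$, where a quick check gives $2^{k_n-2}<n_a<2^{k_n-1}$ and $n_b<2^{k_n-2}$, hence $k_{n_a}=k_n-1$ and $k_{n_b}\leq k_n-2$. My plan is a swap argument modelled on the proof of Lemma~\ref{lem:more_than_echelon}, applied to the auxiliary tree $T^\ast=(T_{n_a}^{gfb},T_{n_b}^{gfb})$, which is itself $\Delta_{CS}$-minimal by Proposition~\ref{prop:rec}. Under the height asymmetry, $T^\ast$ carries leaves of depth $k_n$ on its $T_{n_a}^{gfb}$ side and leaves of depth at most $k_n-1$ on its $T_{n_b}^{gfb}$ side, a depth gap of at least two. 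Moving a minimal pending piece (a cherry attached at a deepest leaf) from the $T_{n_a}^{gfb}$ side over to a shallow leaf on the $T_{n_b}^{gfb}$ side yields, via a direct computation in the style of Equation~\eqref{eq:deltaCS}, a tree with strictly smaller $\Delta_{CS}$, contradicting $\Delta_{CS}(T^\ast)=d_n$. This forces $\alpha=0$, so $(n_a,n_b)$ is a Colless partition, completing the characterization; the corresponding statements for $N_b$ and $\Delta_{SC}$ follow once more from Lemma~\ref{lem:D=sumnb} and Remark~\ref{rem:SCandNb}.
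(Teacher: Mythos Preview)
Your sufficiency argument and your treatment of the Sackin-partition sub-case via the identity $\alpha=\beta$ (where $\alpha=c_{n_a}+c_{n_b}+n_a-n_b-c_n$ and $\beta=s_{n_a}+s_{n_b}+n-s_n$) are correct; in fact the $\alpha=\beta$ observation is a cleaner replacement for the paper's Case~(b), which builds an explicit Sackin-minimal comparison tree. Your case split in the necessity direction is also set up correctly and matches the paper's three cases.

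The gap is in your handling of the ``main obstacle'', the non-Sackin band $n-2^{k_n-2}<n_a<2^{k_n-1}$. A single leaf move from a deepest cherry of $T_{n_a}^{gfb}$ to a shallow leaf of $T_{n_b}^{gfb}$ is \emph{not} a computation ``in the style of Equation~\eqref{eq:deltaCS}'': that equation works because the swap there alters the tree at exactly \emph{two} inner vertices, so the change in $C$ and $S$ is a closed-form expression in $n_a,n_b,k_n$. A leaf move, by contrast, changes $n_{v_a}$ or $n_{v_b}$ at \emph{every} ancestor along two root-to-leaf paths; the effect on $N_b$ depends on the whole profile of left/right turns along those paths in the specific gfb-trees, and there is no reason the net change is controlled by a single inequality in $n_a,n_b$. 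You also implicitly need the modified subtrees to be (close to) $\Delta_{CS}$-minimal for their new sizes, which deleting or inserting a leaf in a gfb-tree does not guarantee. The paper avoids all of this by first replacing $T_a$ with the echelon tree $T_{n_a}^{be}$ (which, in this band, has $T_{k_n-2}^{fb}$ as its larger maximal pending subtree) and then swapping the entire block $T_{k_n-2}^{fb}$ with $T_b$; this is a two-vertex modification for which the explicit computation goes through exactly as in Lemma~\ref{lem:more_than_echelon}, yielding $\Delta_{CS}(T')-\Delta_{CS}(\widetilde T)=2n_a-2^{k_n}<0$ or $2n_b-2^{k_n-1}<0$ in the two sub-cases. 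If you want to salvage your approach, you should replace the leaf move by such a block swap.
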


\begin{proof}
We show the directions of the \enquote{if and only if}-statement separately.

\begin{enumerate}
\item We first want to show that every $\Delta_{CS}$ minimal tree has the property stated by the theorem. Seeking a contradiction, we consider a smallest $n$ for which such a $T=(T_a,T_b)$ exists which minimizes $\Delta_{CS}$ but which has an inner node that is neither the echelon nor a Colless partition. By the minimality of $T$, this node must be the root, which shows that $(n_a,n_b)$ is neither a Colless nor an echelon partition (where $n_a$, $n_b$ denote the numbers of leaves of $T_a$, $T_b$). Note that $T_a$ and $T_b$ must fulfill the conditions of the theorem as they are also $\Delta_{CS}$ minimal by Corollary~\ref{cor:T_min_then_Ta_Tb_min} and as they have fewer than $n$
 leaves. In the following, we again define $k_n=\lceil \log_2 n\rceil$. 

We consider three cases now based on $n_a$.

\begin{enumerate}
\item Consider the cases  $n_a>2^{k_n-1}=n_a^{be}$ or $n_a^{gfb}<n_a<n_a^{be}=2^{k_n-1}$. Then, by Lemma~\ref{lem:more_than_echelon} or Lemma~\ref{lem:betweenBE&GFB}, respectively, $T$ is not $\Delta_{CS}$ minimal, a contradiction. 

\item Note that the only remaining case is $n_a<n_a^{gfb}$. This is due to the fact that we have excluded all values of $n_a$ larger than $n_a^{gfb}$ except for $n_a^{be}$, but $n_a^{be}$ and $n_a^{gfb}$ are additionally excluded by the assumption that $(n_a,n_b)$ is neither a Colless nor an echelon partition. Thus, the only remaining case we need to analyze is the case $n_a<n_a^{gfb}$.

In this case, we know that $(n_a,n_b)$ is a Sackin partition, cf. Section~\ref{sec:prelim_sackin}. Since we assume $(n_a,n_b)$ is not a Colless partition, it must be true  that $(n_a,n_b)$ is a partition that has a corresponding Sackin minimal tree, but not a Colless minimal tree. We now construct a tree $T'=(T_a',T_b')$ with $n$ leaves as follows: $T_a'$ has $n_a'=n_a$ leaves and $T_b'$ has $n_b'=n_b$ leaves, and $T_a'$ and $T_b'$ are both Colless minima (and thus also Sackin minima). This way, we ensure that $T'$ is a Sackin minimum and that $T_a'$ and $T_b'$ are also $\Delta_{CS}$ minimal by Theorem~\ref{thm:Dmin}. 

As $T$ is $\Delta_{CS}$ minimal by assumption, $T_a$ and $T_b$ are $\Delta_{CS}$ minimal, too, by Corollary~\ref{cor:T_min_then_Ta_Tb_min}. As $n_a=n_a'$ and $n_b=n_b'$, this shows that $\Delta_{CS}(T_a)=\Delta_{CS}(T_a')$ and $\Delta_{CS}(T_b)=\Delta_{CS}(T_b')$.

Using Proposition~\ref{prop:rec}, we can conclude that $\Delta_{CS}(T)=\Delta_{CS}(T')$:

\begin{align*}\Delta_{CS}(T)&\overset{\text{\tiny Prop.~\ref{prop:rec}}}=\Delta_{CS}(T_a) +\Delta_{CS}(T_b) -2n_b\\ &= \Delta_{CS}(T_a') +\Delta_{CS}(T_b') -2n_b'=\Delta_{CS}(T').\end{align*}

Next, let $T''=(T_a'',T_b'')$ with $n=n_a''+n_b''$ leaves and $n_a''\geq n_b''$ be a Colless minimal tree. As by Theorem~\ref{thm:Dmin}, $T''$ is also $\Delta_{CS}$ minimal, we know $\Delta_{CS}(T)=\Delta_{CS}(T')=\Delta_{CS}(T'')$. We now show that the latter equality is a contradiction: 

\begin{align*}\Delta_{CS}(T')&=C(T')-S(T')> C(T'')-S(T')=C(T'')-S(T'')=\Delta_{CS}(T''),\end{align*}

where the inequality stems from the fact that $T''$ is a Colless minimum, whereas $T'$ cannot be one (as $(n_a',n_b')=(n_a,n_b)$ is not a Colless partition of $n$). The first equality sign after the inequality is due to the fact that $T'$ and $T''$ are both Sackin minimal trees, implying $S(T')=S(T'')$. However, this shows that the equality $\Delta_{CS}(T)=\Delta_{CS}(T'')$ as derived above is a contradiction.

\end{enumerate}

As all cases lead to contradictions, the assumption was wrong and such a tree $T$ cannot exist. This completes the first part of the proof. 

\item Next, we want to show that every tree that at each of its inner vertices has a pending subtree inducing a Colless or echelon partition is indeed $\Delta_{CS}$ minimal. We do so by proof of induction on $n$. In case $n=1$, the tree contains no inner vertices and there is nothing to show. There is exactly one tree $T$ with $n=2$ leaves and the only inner vertex $v\in \Vint(T)$ induces the pair $(n_{v_a},n_{v_b})=(1,1)$ which is both a Colless and an echelon partition. As $T$ is the only tree, it is obviously $\CS$ minimal.

Now, let $n>2$ and assume the statement holds for every tree with $n'<n$ leaves. Consider $T=(T_a,T_b)$ with $n$ leaves. Let $n_a$ denote the number of leaves of $T_a$ and $n_b$ the number of leaves of $T_b$ with $n_a\geq n_b$. Then, as $T$ has the property that all its inner vertices induce Colless or echelon partitions, so do $T_a$ and $T_b$. However, as $n_a,n_b<n$ and by the induction hypothesis, we know that $T_a$ and $T_b$ are in fact $\Delta_{CS}$ minimal. 
Now we choose $T_a'$ and $T_b'$ as follows: If $(n_a,n_b)$ is a Colless partition, we choose $T_a'$ and $T'_b$ to be Colless minimal trees with $n_a$ and $n_b$ leaves, respectively. If, on the other hand, $(n_a,n_b)$ is the echelon partition, we choose $T_a'$ and $T'_b$ to be echelon trees with $n_a$ and $n_b$ leaves, respectively. We now consider $T'=(T_a',T_b')$. As $T_a'$ and $T'_b$ are either Colless minima or echelon trees, they are $\Delta_{CS}$ minimal by Theorem~\ref{thm:Dmin} or Proposition~\ref{prop:echelon}, respectively. Hence, by the minimality of $T_a$ and $T_b$, we must have $\Delta_{CS}(T_a)=\Delta_{CS}(T_a')$ and $\Delta_{CS}(T_b)=\Delta_{CS}(T_b')$. This immediately leads to:

\begin{align*} 
\Delta_{CS}(T')&\overset{\text{\tiny Prop.~\ref{prop:rec}}}{=} \Delta_{CS}(T'_a) +\Delta_{CS}(T'_b)-2n_b'=\Delta_{CS}(T_a) +\Delta_{CS}(T_b)-2n_b=\Delta_{CS}(T).
\end{align*}

Thus, we have $\Delta_{CS}(T)=\Delta_{CS}(T')$. However, by construction $T'$ is either a Colless minimum or an echelon tree. Thus, we know that $T'$ is $\Delta_{CS}$ minimal, which in turn shows that $T$ is $\CS$ minimal. This completes the proof. 
\end{enumerate}

\end{proof}

We conclude this section with the following corollary, which gives a recursive formula to count the number of $\Delta_{CS}$ minima.

\begin{corollary}\label{cor:no_of_DeltaCS_min}
For each $n \in \mathbb{N}$, let $a(n)$ denote the number of $\Delta_{CS}$ minima (and $N_b$ as well as $\Delta_{SC}$ maxima). Then, $a(1)=1$ and for all $n\geq 2$, we have:

$$a(n)= \sum\limits_{\overset{(n_a,n_b) \in \mathcal{M}(n)}{n_a>n_b}}a(n_a)\cdot a(n_b) + \delta_{\text{\tiny even}}(n) \cdot \left(\binom{a\left(\frac{n}{2}\right)}{2}+a\left(\frac{n}{2}\right)\right),$$
where $\delta_{\text{\tiny even}}(n)=1$ if $n$ is even and 0 otherwise, and where $\mathcal{M}(n)$ is the set of Colless and echelon partitions $(n_a,n_b)$ of $n$. 
\end{corollary}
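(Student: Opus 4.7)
The plan is to apply the structural characterization established in Theorem~\ref{thm:characterization} to obtain a direct recursion on the set of $\Delta_{CS}$-minimal trees. By that theorem, a tree $T=(T_a,T_b)\in\bt_n$ is $\Delta_{CS}$-minimal (equivalently $N_b$- and $\Delta_{SC}$-maximal) if and only if the root partition $(n_a,n_b)$ lies in $\mathcal{M}(n)$ and both $T_a$ and $T_b$ are themselves $\Delta_{CS}$-minimal on their respective leaf sets. This turns the counting problem into a natural recursion indexed by the root partition.

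For the base case $n=1$, the only tree in $\bt_1$ is the single-vertex tree, which trivially satisfies the characterization (vacuously, as it has no inner vertices), so $a(1)=1$. For $n\geq 2$, I would partition the set of $\Delta_{CS}$-minima of $\bt_n$ according to the root partition $(n_a,n_b)\in\mathcal{M}(n)$. For each $(n_a,n_b)\in\mathcal{M}(n)$ with $n_a>n_b$, the maximal pending subtrees $T_a$ and $T_b$ are distinguishable by size, so any ordered choice of $T_a$ from the $a(n_a)$ $\Delta_{CS}$-minima on $n_a$ leaves and $T_b$ from the $a(n_b)$ minima on $n_b$ leaves produces a distinct (isomorphism class of a) tree, contributing $a(n_a)\cdot a(n_b)$.

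The delicate case is the equal-size split $n_a=n_b=n/2$, which can only occur when $n$ is even. I would first verify that $(n/2,n/2)\in\mathcal{M}(n)$ for every even $n$: since $\mb_n$ is Colless-minimal (cf.\ Section~\ref{sec:prelim_Colless}) and its root decomposition is $(\lceil n/2\rceil,\lfloor n/2\rfloor)=(n/2,n/2)$ when $n$ is even, this pair is indeed a Colless partition of $n$, so the indicator $\delta_{\text{\tiny even}}(n)$ captures exactly when this case contributes. Because the two subtrees now have equal size, swapping them yields the same rooted binary tree up to isomorphism; hence we must count unordered pairs (with repetition) of $\Delta_{CS}$-minima from $\bt_{n/2}$. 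The number of such multisets of size two drawn from $a(n/2)$ elements is $\binom{a(n/2)}{2}+a(n/2)$, where the first summand counts pairs of distinct trees and the second counts pairs with $T_a\cong T_b$.

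Summing the two contributions yields precisely the claimed formula. The main subtle point — and the only place where care is needed beyond bookkeeping — is the unordered-versus-ordered distinction in the equal-split case, which is the reason the formula cannot simply be written as $\sum_{(n_a,n_b)\in\mathcal{M}(n),\,n_a\geq n_b} a(n_a)\,a(n_b)$; the factor for $n_a=n_b$ must be adjusted to avoid double counting isomorphic trees obtained by swapping the two equal-sized maximal pending subtrees. Everything else reduces to the characterization theorem plus a standard stars-and-bars / multiset-coefficient identity.
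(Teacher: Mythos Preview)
Your proposal is correct and follows essentially the same approach as the paper: both reduce the count to Theorem~\ref{thm:characterization}, split by the root partition $(n_a,n_b)\in\mathcal{M}(n)$, and handle the equal-split case via unordered pairs with repetition. Your argument is slightly more detailed in that you explicitly verify $(n/2,n/2)\in\mathcal{M}(n)$ for every even $n$ via the Colless-minimality of $\mb_n$, which the paper leaves implicit but is indeed needed for the $\delta_{\text{\tiny even}}$ term to be correct as stated.
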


\begin{proof} The proof is a direct consequence of Theorem~\ref{thm:characterization} as all minimal trees are composed of two maximal pending subtrees which are also minimal, and these must have sizes that induce Colless or echelon partitions. In the even case, we need to avoid counting a combination of trees twice, which is why the simple sum is only considering the case $n_a>n_b$. In case $n_a=n_b=\frac{n}{2}$, on the other hand, every combination of two distinct trees is taken care of by the binomial coefficient, whereas every pair of equal trees is counted by $a\left(\frac{n}{2}\right)$, which explains the rightmost sum in the equation. This completes the proof. 
\end{proof}

We conclude this section by noting that the sequence $(a_n)_{n\geq 1}$ with $a_n=a(n)$, which starts with the values $\{1, 1, 1, 1, 2, 2, 1, 1, 3, 6, 6, 4, 4, 3, 1, 1, 4, 13, 25, 26\}$ was not contained in the Online Encyclopedia of Integer Sequences OEIS \cite{oeis}, which indicates that it might not have appeared in other contexts before. It has been submitted to the OEIS in the course of this manuscript \cite[Sequence A397642]{oeis}. Moreover, we want to mention that the reader can find more information on the set $\mathcal{M}(n)$ in \cite{coronado2020}, where all Colless partitions are characterized; so $\mathcal{M}(n)$ can simply be characterized by using those results and adding the echelon partition.

\subsubsection{Further properties of \texorpdfstring{$\Delta_{CS}$, $\Delta_{SC}$ and $N_b$}{DeltaCS, DeltaSC and Nb} }

In this subsection,  we discuss some properties of $\CS$, $\SC$ and $N_b$ which are generally considered desirable for tree (im)balance indices (cf. \cite{Fischer2023}). We start with the fact that, like most tree (im)balance indices\cite{Fischer2023}, $\CS$, $\SC$ and $N_b$ can be computed in linear time.

\begin{proposition}\label{prop:cs_linear_time}
    For every $T\in\bt_n$, $\CS(T)$, $\SC(T)$ and $N_b(T)$ can be computed in time $O(n)$.
\end{proposition}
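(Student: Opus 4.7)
The plan is to reduce the claim to known linear-time computability results for $S$ and $C$ together with the algebraic identities we have already established relating $\CS$, $\SC$ and $N_b$ to them. Recall from Sections~\ref{sec:prelim_sackin} and~\ref{sec:prelim_Colless} that for every $T \in \bt_n$, both $S(T)$ and $C(T)$ can be computed in $O(n)$ time (via a single post-order traversal, computing $n_v$ for each $v \in \mathring{V}(T)$ along the way and accumulating the relevant sums).

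Given this, the result is immediate. First I would note that $\CS(T) = C(T) - S(T)$ by definition, so a single subtraction after computing $C(T)$ and $S(T)$ in $O(n)$ yields $\CS(T)$ in $O(n)$ overall. Next, since $\SC(T) = -\CS(T)$ by definition, one additional sign flip gives $\SC(T)$ in $O(n)$. Finally, by Lemma~\ref{lem:D=sumnb}, we have $N_b(T) = -\tfrac{1}{2}\CS(T)$, so dividing the previously computed value of $\CS(T)$ by $-2$ gives $N_b(T)$ in $O(n)$ as well.

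Alternatively, one could prove the statement directly and self-contained by performing a single post-order traversal of $T$: at each inner vertex $v$ with children $v_a, v_b$ (where the sizes $n_{v_a}, n_{v_b}$ are already known from the recursion on the children), add $n_{v_b}$ (under the convention $n_{v_a} \geq n_{v_b}$) to a running sum, yielding $N_b(T)$ in $O(n)$ time; then $\CS(T) = -2 N_b(T)$ and $\SC(T) = 2 N_b(T)$ follow by one further arithmetic operation each. This uses Proposition~\ref{prop:rec} only implicitly and makes the linear bound transparent.

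There is no real obstacle here; the only thing to be careful about is that the arithmetic operations (subtraction, sign flip, division by $2$) are all constant time, so they do not inflate the asymptotic running time beyond the $O(n)$ cost of the traversal. I would therefore keep the proof very short, essentially as a one-line consequence of the known linear-time results for $S$ and $C$ combined with Lemma~\ref{lem:D=sumnb}.
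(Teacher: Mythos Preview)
Your proposal is correct and follows essentially the same approach as the paper: both argue that $\CS$, $\SC$ and $N_b$ are linear combinations of $S$ and $C$ (using the definitions and Lemma~\ref{lem:D=sumnb}), and then invoke the known $O(n)$ computability of $S$ and $C$ from Sections~\ref{sec:prelim_sackin} and~\ref{sec:prelim_Colless}. Your additional direct post-order traversal argument is a nice alternative but is not needed.
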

\begin{proof}
As stated in Sections~\ref{sec:prelim_sackin} and \ref{sec:prelim_Colless}, both the Sackin and the Colless index can be computed in time $O(n)$. By definition, we have $\CS=C-S$ and $\SC=S-C$. Moreover, by Lemma~\ref{lem:D=sumnb}, we have $N_b=-\frac{1}{2}\Delta_{CS}=-\frac{1}{2}(C-S)$.  Thus, all three indices, $\CS$, $\SC$ and $N_b$, are  linear combinations of Sackin and Colless. Therefore, they, too, can be computed in time $O(n)$.
\end{proof}

\begin{proposition}\label{prop:cs_local}
    All three (im)balance indices $\CS$, $\SC$ and $N_b$ are local.
\end{proposition}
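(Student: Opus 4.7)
The plan is to exploit the fact that locality is preserved under linear combinations together with the already-known locality of the Sackin and Colless indices. Since the preliminaries (Sections~\ref{sec:prelim_sackin} and \ref{sec:prelim_Colless}) record that both $S$ and $C$ are local, and since by definition $\Delta_{CS}=C-S$, $\Delta_{SC}=S-C$, and by Lemma~\ref{lem:D=sumnb} we have $N_b=-\tfrac{1}{2}\Delta_{CS}=\tfrac{1}{2}(S-C)$, all three indices we care about are linear combinations of $S$ and $C$.

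The key step is the following one-line observation, which I would state and prove at the beginning: if $t_1,t_2$ are local tree shape statistics and $t=\alpha t_1+\beta t_2$ for constants $\alpha,\beta\in\mathbb{R}$, then $t$ is local as well. To verify this, I would fix $T,T'\in\bt_n$ with $T'$ obtained from $T$ by replacing a pending subtree $T_v$ by some $T_v'$ of the same size, and compute
\begin{align*}
t(T)-t(T') &= \alpha\bigl(t_1(T)-t_1(T')\bigr)+\beta\bigl(t_2(T)-t_2(T')\bigr)\\
&= \alpha\bigl(t_1(T_v)-t_1(T_v')\bigr)+\beta\bigl(t_2(T_v)-t_2(T_v')\bigr)\\
&= t(T_v)-t(T_v'),
\end{align*}
where the middle equality uses locality of $t_1$ and $t_2$.

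Applying this observation with $(\alpha,\beta)=(1,-1)$, $(-1,1)$, and $(-\tfrac{1}{2},\tfrac{1}{2})$ (in the latter case using Lemma~\ref{lem:D=sumnb} to rewrite $N_b$ in terms of $C$ and $S$) immediately yields locality of $\Delta_{CS}$, $\Delta_{SC}$, and $N_b$, respectively. There is no real obstacle here: the proposition is essentially a corollary of the locality of $S$ and $C$ combined with the algebraic identities already established for the three new indices, and the proof should fit in a few lines without any further case analysis.
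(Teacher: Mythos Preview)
Your proposal is correct and takes essentially the same approach as the paper: both arguments rely on the locality of $S$ and $C$ together with the identities $\Delta_{CS}=C-S$, $\Delta_{SC}=S-C$, and $N_b=-\tfrac{1}{2}\Delta_{CS}$. The only cosmetic difference is that the paper writes out the five-line computation directly for $\Delta_{CS}$ and then appeals to Lemma~\ref{lem:D=sumnb} and Remark~\ref{rem:SCandNb} for the other two, whereas you first abstract the ``linear combinations preserve locality'' step into a standalone observation and then apply it three times.
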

\begin{proof}
    Let $T$, $T'\in\bt_n$ with $T'$ obtained from $T$ by replacing the pending subtree $T_v$ by a subtree $T_v'$ of the same size. Using the locality of Sackin and Colless, we obtain
    \begin{align*}
    \CS(T)-\CS(T') &=(C(T)-S(T))-(C(T')-S(T'))\\
    &=(C(T_v)-C(T_v'))-(S(T_v)-S(T_v'))\\
    &=\CS(T_v)-\CS(T_v'). 
    \end{align*} This shows the locality of $\Delta_{CS}$. The respective statements for $\SC$ and $N_b$ follow from Lemma~\ref{lem:D=sumnb} and Remark~\ref{rem:SCandNb}, respectively.
\end{proof}

We now turn our attention to recursiveness. 

\begin{proposition}\label{prop:CS_brtss}
   $\CS$, $\SC$ and $N_b$ are binary recursive tree shape statistics.
\end{proposition}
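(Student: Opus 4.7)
The plan is to express each of the three statistics as a binary recursive tree shape statistic of length $x=2$, using as the first coordinate the leaf count $n(T)$ and as the second coordinate the index value itself. Tracking $n(T)$ alongside the index is necessary because the recursion from Proposition~\ref{prop:rec} refers to $n_b$, which under the convention $n_a\geq n_b$ is the smaller of the two children's leaf counts and therefore not directly a symmetric function of the two subtrees. Once $n_b$ is rewritten as $\min(n(T_a),n(T_b))$, all appearing functions become symmetric in $T_a$ and $T_b$ as required by the definition of a binary recursive tree shape statistic.

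For $\Delta_{CS}$ I would set $\lambda=(1,0)$, reflecting that a single-leaf tree has one leaf and, having no inner vertex, contributes an empty sum to both $C$ and $S$, so that $\Delta_{CS}=0$ on leaves. The two recursions would then be defined on pairs of length-two vectors $(a_1,a_2)$, $(b_1,b_2)$ by $r_1((a_1,a_2),(b_1,b_2))=a_1+b_1$ (the standard recursion for the leaf count) and $r_2((a_1,a_2),(b_1,b_2))=a_2+b_2-2\min(a_1,b_1)$. The validity of the second recursion is an immediate rewriting of Proposition~\ref{prop:rec}, and both $r_1$ and $r_2$ are manifestly symmetric in their two vector arguments thanks to the symmetry of $+$ and $\min$.

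For $\Delta_{SC}$ and $N_b$, I would use the identities $\Delta_{SC}=-\Delta_{CS}$ (by definition) and $N_b=-\tfrac{1}{2}\Delta_{CS}$ (by Lemma~\ref{lem:D=sumnb}) to derive the corresponding symmetric recursions, namely $r_2((a_1,a_2),(b_1,b_2))=a_2+b_2+2\min(a_1,b_1)$ for $\Delta_{SC}$ and $r_2((a_1,a_2),(b_1,b_2))=a_2+b_2+\min(a_1,b_1)$ for $N_b$, in both cases again with $r_1$ tracking the leaf count and $\lambda=(1,0)$. The only substantive point — essentially the whole content of the proof — is the realization that the asymmetric-looking $n_b$ in the recursion for $\Delta_{CS}$ is in fact the symmetric quantity $\min(n(T_a),n(T_b))$; once this is recognized, the verification of the remaining conditions is routine.
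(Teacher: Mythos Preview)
Your proposal is correct and follows essentially the same approach as the paper: a length-$2$ binary recursive tree shape statistic with the leaf count as auxiliary coordinate, and the key observation that the asymmetric-looking $n_b$ from Proposition~\ref{prop:rec} can be written symmetrically. The only cosmetic difference is that the paper expresses $n_b$ via $\tfrac{1}{2}\bigl((n_a+n_b)-|n_a-n_b|\bigr)$ rather than $\min(n_a,n_b)$ (and orders the two coordinates the other way, taking $\lambda=(0,1)$), but these are equivalent choices.
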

\begin{proof}
    By Proposition~\ref{prop:rec}, for every $T\in\bt_n$ with standard decomposition $T=(T_a,T_b)$ and $n\geq 2$, we have: \[\CS(T) = \CS(T_a)+\CS(T_b)-(n_a+n_b)+|n_a-n_b|.\]  Furthermore, $\CS(T)=0$ for $T\in\bt_1$. Hence, $\CS$ can be written as a binary recursive tree shape statistic of length 2 as follows: 
    \begin{itemize}
        \item $\lambda =(0,1)$ as base case for the tree with $n=1$ leaf,
        \item for trees $T=(T_a,T_b)$ we have recursions $r_1(T_a,T_b) = \CS(T_a)+\CS(T_b)+|n_a-n_b|-(n_a+n_b)$ ($\CS$ value) and $r_2(T_a,T_b)=n_a+n_b$ (leaf numbers).
    \end{itemize}
    Obviously, $\lambda\in\mathbb{R}^2$ and $r_i:\mathbb{R}^2\times\mathbb{R}^2\rightarrow \mathbb{R}$, for $i=1,2$. Furthermore, $r_1,r_2$ are independent of the order of subtrees. 
   
    Analogously to Proposition~\ref{prop:rec}, we have \[\SC(T) = \SC(T_a)+\SC(T_b)+(n_a+n_b)-|n_a-n_b|\] and, by Lemma~\ref{lem:D=sumnb} and Proposition~\ref{prop:rec},
    \begin{align*}
            N_b(T) =-\frac{1}{2}\CS(T) &=-\frac{1}{2}\CS(T_a)-\frac{1}{2}\CS(T_b) +\frac{1}{2}((n_a+n_b)-|n_a-n_b|)\\
            &= N_b(T_a)+N_b(T_b)+ \frac{1}{2}((n_a+n_b)-|n_a-n_b|)  
    \end{align*}
    for $T\in\bt_n$ with $n\geq2$ and $\SC(T)=N_b(T)=0$ for $T\in\bt_1$.
    Thus, $\SC$ and $N_b$ can be expressed analogously to $\CS$ as binary recursive tree shape statistics of length 2.
\end{proof}

\subsection{ \texorpdfstring{$N_a$: another new imbalance index on $\bt_n$}{Na: another new imbalance index on BTnstar}}

In order to show that the Sackin and Colless indices are merely compound indices in the sense that they are  linear combinations of \enquote{more basic} (im)balance indices, we still need to analyze $N_a$. It is the main aim of this subsection to prove that $N_a$ is an imbalance index on $\bt_n$. It turns out that this statement can be derived from the respective fact for the Sackin index combined with our knowledge on $N_b$ from the previous section.

\begin{theorem}\label{thm:s-c_balance} Let $n\in \mathbb{N}$. Then, $N_a$ is an imbalance index on $\bt_n$. In particular, we have: \begin{itemize} 
\item $\cat_n$ is the unique maximizer of $N_a$, and we have $N_a(\cat_n)=\frac{n\cdot (n-1)}{2}$. 
\item If $n=2^{k_n}$ for some $k_n\in \mathbb{N}$, $T_{k_n}^{fb}$ is the unique minimizer of $N_a$, and we have $N_a(T_{k_n}^{fb})=2^{k_n-1}\cdot k_n =N_b(T_{k_n}^{fb})$.
\end{itemize}
\end{theorem}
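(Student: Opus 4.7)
The plan is to leverage the identity $N_a = S - N_b$ (which follows directly from the decomposition $S = N_a + N_b$ discussed in Section~\ref{sec:Prelim}) and to reduce the statement for $N_a$ to what we already know about $S$ and $N_b$. The key observation is that $N_a$ is a sum of two imbalance-type tree shape statistics, namely $S$ (an established imbalance index) and $-N_b$ (which is an imbalance-type statistic because $N_b$ is a balance index by Theorem~\ref{thm:c-s_imbalance}). Since both summands are uniquely extremized by the same trees, the same must hold for their sum.

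More concretely, for the maximum at $\cat_n$, I would argue as follows. By the known bound for the Sackin index recalled in Section~\ref{sec:prelim_sackin}, $S(T) \leq S(\cat_n)$ with equality if and only if $T = \cat_n$. By Proposition~\ref{prop:D_cat}, $N_b(T) \geq N_b(\cat_n) = n-1$ with equality if and only if $T = \cat_n$; equivalently, $-N_b(T) \leq -N_b(\cat_n)$ with equality exactly at $\cat_n$. Adding these two inequalities gives
\[
N_a(T) = S(T) - N_b(T) \leq S(\cat_n) - N_b(\cat_n) = N_a(\cat_n),
\]
with equality if and only if $T = \cat_n$. The explicit value $N_a(\cat_n) = n(n-1)/2$ then follows by substituting $S(\cat_n) = n(n+1)/2 - 1$ and $N_b(\cat_n) = n - 1$.

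For the minimum at $T_{k_n}^{fb}$ in the case $n = 2^{k_n}$, the same strategy works in reverse. By the known Sackin minimum recalled in Section~\ref{sec:prelim_sackin}, $\fb_{k_n}$ is the unique minimizer of $S$ on $\bt_{2^{k_n}}$. By Proposition~\ref{prop:D_fb}, $\fb_{k_n}$ is the unique maximizer of $N_b$ on $\bt_{2^{k_n}}$, hence the unique minimizer of $-N_b$. Adding the two inequalities $S(T) \geq S(\fb_{k_n})$ and $-N_b(T) \geq -N_b(\fb_{k_n})$ yields $N_a(T) \geq N_a(\fb_{k_n})$ with equality exactly at $\fb_{k_n}$. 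The explicit value $N_a(\fb_{k_n}) = 2^{k_n-1}\cdot k_n$ follows from $S(\fb_{k_n}) = 2^{k_n}\cdot k_n$ and $N_b(\fb_{k_n}) = 2^{k_n-1}\cdot k_n$, which also immediately exhibits the equality $N_a(\fb_{k_n}) = N_b(\fb_{k_n})$ asserted in the statement (an intuitive consequence of the fact that at every inner vertex of $\fb_{k_n}$ one has $n_{v_a} = n_{v_b}$).

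I do not expect a real obstacle here: both inequalities have already been established in the preceding subsections, and the only subtle point is to verify that \emph{uniqueness} is preserved when summing two shape statistics. This is, however, immediate: if $t_1, t_2$ are real-valued functions on a finite set $\bt_n$ and both attain their maximum (resp.\ minimum) uniquely at the same tree $T^\ast$, then so does $t_1 + t_2$, because for any $T \neq T^\ast$ at least one of the two strict inequalities $t_i(T) < t_i(T^\ast)$ (resp.\ $>$) must hold. Thus the whole argument reduces to citing Theorem~\ref{thm:c-s_imbalance}, Proposition~\ref{prop:D_cat}, Proposition~\ref{prop:D_fb} and the standard Sackin extremal results from Section~\ref{sec:prelim_sackin}.
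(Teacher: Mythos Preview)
Your proof is correct and follows essentially the same approach as the paper: both use the identity $N_a = S - N_b$ to write $N_a$ as the sum of the imbalance index $S$ and the imbalance-type statistic $-N_b$, and then invoke Proposition~\ref{prop:D_cat}, Proposition~\ref{prop:D_fb}, Theorem~\ref{thm:c-s_imbalance}, and the Sackin extremal results from Section~\ref{sec:prelim_sackin} to conclude. Your write-up is in fact slightly more explicit than the paper's in spelling out why uniqueness of the extremizers is preserved under the sum.
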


\begin{proof}
Recall that $S(T)=N_a(T)+N_b(T)$, and thus $N_a(T)=S(T)-N_b(T)=S(T)+(-N_b(T))$ for $T\in\bt_n$. We know that the Sackin index is an imbalance index (cf. Section~\ref{sec:prelim_sackin}), and we know from Theorem~\ref{thm:c-s_imbalance} that $N_b$ is a balance index, which immediately shows that $-N_b$ is an imbalance index. Thus, $N_a$ is merely the sum of two imbalance indices and must therefore also be an imbalance index. Hence, it is clear that $\cat_n$ is the unique maximizer of $N_a$ for all $n$, and $T_{k_n}^{fb}$ is the unique minimizer of $N_a$ for $n=2^{k_n}$. 

In order to determine $N_a(T_n^{cat})$ and $N_a(T_{k_n}^{fb})$, we again use the identity $N_a=S-N_b$, which,  together with the respective values for $S$ from the literature (cf. Section~\ref{sec:prelim_sackin}) and for $N_b$ from Proposition~\ref{prop:D_cat} and Proposition~\ref{prop:D_fb}, leads to the following:

$$N_a(T_n^{cat})=S(T_n^{cat})-N_b(T_n^{cat})=\frac{n\cdot (n+1)}{2}-1-(n-1)=\frac{n\cdot (n-1)}{2},$$ and 

$$N_a(T_{k_n}^{fb})=S(T_{k_n}^{fb})-N_b(T_{k_n}^{fb})=2^{k_n}\cdot k_n-2^{k_n-1}\cdot k_n=2^{k_n-1}\cdot k_n\overset{\mbox{\tiny Prop.~\ref{prop:D_fb}}}{=} N_b(T_{k_n}^{fb}).$$
This completes the proof.
\end{proof}

Before we can continue, we need the following lemma.

\begin{lemma}\label{lem:Narec} Let $n\in \mathbb{N}_{\geq 2}$ and $T \in \bt_n$ with standard decomposition $T=(T_a,T_b)$, where $n_a$, $n_b$ denote the number of leaves of $T_a$ and $T_b$, respectively, such that $n_a\geq n_b$. Then, we have: 
$$N_a(T)=N_a(T_a)+N_a(T_b)+n_a.$$
\end{lemma}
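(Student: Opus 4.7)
The identity to prove is purely structural, so the cleanest route is to split the defining sum of $N_a$ according to where inner vertices lie in $T=(T_a,T_b)$. The plan is the following.

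First, I would observe that since $n\geq 2$, the root $\rho$ of $T$ is an inner vertex, and the inner vertex set partitions as
\[
\mathring{V}(T)=\{\rho\}\,\dot\cup\,\mathring{V}(T_a)\,\dot\cup\,\mathring{V}(T_b).
\]
For any inner vertex $v\neq\rho$, the labels of its two children (and in particular which one is $v_a$ and which is $v_b$) are inherited from whichever maximal pending subtree $T_a$ or $T_b$ contains $v$; for the root itself, by the convention $n_a\geq n_b$, we have $n_{\rho_a}=n_a$. Therefore, splitting the sum defining $N_a(T)$ according to this partition gives
\[
N_a(T)=\sum_{v\in\mathring{V}(T)}n_{v_a}
= n_{\rho_a} + \sum_{v\in\mathring{V}(T_a)}n_{v_a} + \sum_{v\in\mathring{V}(T_b)}n_{v_a}
= n_a + N_a(T_a)+N_a(T_b),
\]
which is exactly the claimed recursion.

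An alternative derivation, purely for a sanity check, uses the identities already at our disposal: $N_a=S-N_b$ together with the known Sackin recursion $S(T)=S(T_a)+S(T_b)+n$ (Section~\ref{sec:prelim_sackin}) and the recursion $N_b(T)=N_b(T_a)+N_b(T_b)+n_b$, which follows from Proposition~\ref{prop:rec} via $N_b=-\tfrac{1}{2}\Delta_{CS}$ (Lemma~\ref{lem:D=sumnb}). Subtracting yields $N_a(T)=N_a(T_a)+N_a(T_b)+n-n_b=N_a(T_a)+N_a(T_b)+n_a$, agreeing with the direct computation.

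There is no real obstacle here; the only point requiring a moment of care is that the convention $n_a\geq n_b$ guarantees $n_{\rho_a}=n_a$ (and not $n_b$), so the root contributes precisely $n_a$ to the sum. Everything else is bookkeeping.
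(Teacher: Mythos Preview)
Your proof is correct and follows essentially the same approach as the paper: both partition $\mathring{V}(T)$ into $\{\rho\}$, $\mathring{V}(T_a)$, and $\mathring{V}(T_b)$ and split the defining sum accordingly, with the root contributing $n_a$ by the convention $n_a\geq n_b$. Your alternative derivation via $N_a=S-N_b$ is a valid extra check not present in the paper.
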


\begin{proof} The statement can be easily seen using the definition of $N_a$ (using the convention that for every inner vertex $v$ of $T$ we have for its children $v_a$ and $v_b$ that $n_{v_a}\geq n_{v_b}$):

\begin{align*}
N_a(T)&=\sum\limits_{v\in\mathring{V}(T)}n_{v_a} = \sum\limits_{v\in\mathring{V}(T_a)}n_{v_a} + \sum\limits_{v\in\mathring{V}(T_b)}n_{v_a} + n_a\\
&=N_a(T_a)+N_a(T_b)+n_a.
\end{align*}
This completes the proof.
\end{proof}

The following corollary is a direct conclusion of Lemma~\ref{lem:Narec}.

\begin{corollary}\label{cor:NaOptsubtrees} Let $n\in \mathbb{N}$. Then, for a tree $T=(T_a,T_b)$ minimizing $N_a$ in $\bt_n$, we also have that $T_a$ and $T_b$ minimize $N_a$ in $\bt_{n_a}$ and $\bt_{n_b}$, respectively, where $n_a$ and $n_b$ are the numbers of leaves of $T_a$ and $T_b$, respectively. 
\end{corollary}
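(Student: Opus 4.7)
The plan is to proceed by contradiction, following the exact template used in the proof of Corollary~\ref{cor:T_min_then_Ta_Tb_min} for $\Delta_{CS}$. The only real ingredient is Lemma~\ref{lem:Narec}, which gives $N_a(T) = N_a(T_a) + N_a(T_b) + n_a$, i.e., expresses $N_a(T)$ additively in terms of the $N_a$-values of the two maximal pending subtrees plus a term that depends only on the leaf count of the larger subtree.

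Concretely, I would assume that $T = (T_a, T_b) \in \bt_n$ minimizes $N_a$ but that $T_a$ does not minimize $N_a$ in $\bt_{n_a}$. Then there would exist some $T_a' \in \bt_{n_a}$ with $N_a(T_a') < N_a(T_a)$. Setting $T' := (T_a', T_b)$, I observe that the leaf counts of the two maximal pending subtrees of $T'$ are still $n_a$ and $n_b$ with $n_a \geq n_b$, so the standard decomposition convention is preserved and Lemma~\ref{lem:Narec} applies to $T'$ with the same trailing term $n_a$. Subtracting the two instances of the recursion yields
\[
N_a(T') - N_a(T) = N_a(T_a') - N_a(T_a) < 0,
\]
contradicting the minimality of $T$. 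The case of $T_b$ is entirely analogous: a hypothetical $T_b' \in \bt_{n_b}$ with $N_a(T_b') < N_a(T_b)$ gives $T' := (T_a, T_b')$ with $N_a(T') < N_a(T)$, again a contradiction.

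I do not expect any real obstacle: because the swap preserves leaf counts, the $n_a$-term in the recursion is the same for $T$ and $T'$, so the improvement on a subtree transfers directly to an improvement on the whole tree. The argument is essentially a verbatim adaptation of Corollary~\ref{cor:T_min_then_Ta_Tb_min}, with $\Delta_{CS}$ replaced by $N_a$ and the additive term $-2n_b$ replaced by $+n_a$.
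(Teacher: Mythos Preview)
Your proposal is correct and follows essentially the same approach as the paper's own proof: both argue by contradiction, replace a non-minimal subtree by a better one with the same leaf count, and invoke Lemma~\ref{lem:Narec} to derive a strictly smaller $N_a$-value for the resulting tree. Your additional remark that the leaf counts $(n_a,n_b)$ are preserved under the swap (so the trailing $+n_a$ term is unchanged) is a helpful clarification but does not differ in substance from the paper's argument.
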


\begin{proof}
The proof of Corollary \ref{cor:NaOptsubtrees} is analogous to the proof of Corollary \ref{cor:T_min_then_Ta_Tb_min}, except that Lemma \ref{lem:Narec} is used instead of Proposition \ref{prop:rec}.
\end{proof}

We want to conclude the examination of extrema for $N_a$ with the following proposition, which shows that $N_a$ shares all its extrema with the Colless index -- not only the maximum $T_n^{cat}$, but also the sets of minimal trees induced by $C$ and $N_a$ coincide. 

\begin{proposition}\label{prop:NaCollessMin}
    Let $n\in\mathbb{N}$ and let $T\in \bt_n$. Then, $T$ minimizes $N_a$ if and only if $T$ minimizes $C$.
\end{proposition}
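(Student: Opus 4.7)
The plan is to exploit the identity $N_a = \tfrac{1}{2}(S + C)$, which follows immediately from $S = N_a + N_b$ and $C = N_a - N_b$ (both recalled in the preliminaries). Using this identity, the minimization of $N_a$ decouples into the simultaneous minimization of Sackin and Colless, and the classical fact (Section~\ref{sec:prelim_Colless}) that every Colless-minimal tree is also Sackin-minimal makes the two conditions coincide.

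More concretely, I would first observe that for every $T \in \bt_n$ we have the lower bound
\[
N_a(T) \;=\; \tfrac{1}{2}\bigl(S(T) + C(T)\bigr) \;\geq\; \tfrac{1}{2}(s_n + c_n),
\]
where $s_n$ and $c_n$ denote the minimum Sackin and Colless values on $\bt_n$, respectively. Since $S(T) \geq s_n$ and $C(T) \geq c_n$ are independent inequalities, equality in the combined bound forces equality in both, i.e.\ $T$ must be simultaneously Sackin-minimal and Colless-minimal. This is the key algebraic observation; it says that any $N_a$-minimum must already minimize both $S$ and $C$.

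For the "only if" direction, let $T$ minimize $N_a$. The bound above shows $N_a(T) = \tfrac{1}{2}(s_n + c_n)$, whence both $S(T) = s_n$ and $C(T) = c_n$, so $T$ is Colless-minimal. For the "if" direction, let $T$ minimize $C$, so $C(T) = c_n$. By the result recalled in Section~\ref{sec:prelim_Colless}, every Colless-minimum is also a Sackin-minimum, hence $S(T) = s_n$ as well. Therefore $N_a(T) = \tfrac{1}{2}(s_n + c_n)$, which equals the lower bound derived above and is in particular attained; so $T$ minimizes $N_a$.

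I do not anticipate a serious obstacle here: the entire argument rests on the linear decomposition $N_a = \tfrac{1}{2}(S+C)$ together with the already-established inclusion of Colless-minima in Sackin-minima. No case analysis, recursion, or extremal construction is needed, which is consistent with the proposition's role as a clean corollary of the decomposition philosophy of the paper.
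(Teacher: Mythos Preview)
Your proof is correct and considerably more direct than the paper's. The key difference is the choice of decomposition: you use $N_a=\tfrac{1}{2}(S+C)$, so minimizing $N_a$ is visibly equivalent to simultaneously minimizing $S$ and $C$, and the known inclusion ``Colless-minimal $\Rightarrow$ Sackin-minimal'' finishes both directions in two lines. The paper instead writes $N_a=S+(-N_b)$ and, for the ``only if'' direction, argues by contradiction via a case split on the root partition $(n_a,n_b)$, invoking Corollary~\ref{cor:NaOptsubtrees}, the full characterization of $N_b$-maxima (Theorem~\ref{thm:characterization}), and the gfb/echelon partition dichotomy. Your route bypasses all of that machinery; what the paper's route buys is that it makes the connection to the $N_b$/$\Delta_{CS}$ theory explicit, but for proving the proposition itself your argument is strictly more elementary.

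One small presentational point: in your ``only if'' paragraph you write that the bound shows $N_a(T)=\tfrac{1}{2}(s_n+c_n)$ before you have verified the bound is attained. Logically this is fine because the ``if'' direction supplies the witness (any Colless-minimum), but it would read more cleanly to establish attainability first and then deduce that every $N_a$-minimizer must hit the bound, hence satisfy $S(T)=s_n$ and $C(T)=c_n$ simultaneously.
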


\begin{proof} First, recall that if $T$ is a Colless minimum, it is also a Sackin minimum (cf. Section~\ref{sec:prelim_Colless}) as well as an $-N_b$ minimum by Theorem~\ref{thm:Dmin}. Therefore, if $T$ is a Colless minimum, both summands in $N_a=S+(-N_b)$ are minimal, which shows that indeed every Colless minimum also minimizes $N_a$.

For the other direction, seeking a contradiction, we assume $T=(T_a,T_b)$ minimizes $N_a$ but is not a Colless minimum. Without loss of generality we may assume $T$ is minimal with this property. Then we distinguish two cases based on $n_a$ and $n_b$ -- the numbers of leaves of $T_a$ and $T_b$ with $n_a\geq n_b$:

\begin{enumerate}
\item First, consider the case $n_a-n_b\leq n_a^{gfb}-n_b^{gfb}$. As $T$ minimizes $N_a$, so do $T_a$ and $T_b$, and by the minimality of $T$, $T_a$ and $T_b$ are Colless minimal. This implies that $(n_a,n_b)$ cannot be a Colless partition (else $T$ would be a Colless minimum), which in particular shows $(n_a,n_b)\neq (n_a^{gfb},n_b^{gfb})$. Thus, it cannot be the gfb partition.

We now argue that it cannot be the echelon partition either. Let $k_n=\lceil\log_2n\rceil$. If $n\geq 3\cdot 2^{k_n-2}$, we have already seen that in this subcase the echelon partition coincides with the gfb-partition, which we have already excluded.

If, however, $n< 3\cdot 2^{k_n-2}$, just as in the second part of the proof of Proposition \ref{prop:echelon} we have $n_a-n_b<n_a^{gfb}-n_b^{gfb}<n_a^{be}-n_b^{be}$, implying that $(n_a,n_b)$ cannot be the echelon partition in this case, either.

As $(n_a,n_b)$ is neither a Colless partition nor the echelon partition, by Theorem~\ref{thm:characterization}, $T$ cannot minimize $-N_b$. Now, let $T'$ be a Colless minimum (which also implies Sackin minimality). Then we know by Theorem~\ref{thm:Dmin}, that $T'$ minimizes $-N_b$. Thus, we get:
\begin{align*}
N_a(T)&=S(T)+(-N_b(T))=S(T')+(-N_b(T))>S(T')+(-N_b(T'))=N_a(T'),
\end{align*}
contradicting the $N_a$ minimality of $T$. 
\item If $n_a-n_b>n_a^{gfb}-n_b^{gfb}$, we know from Section~\ref{sec:prelim_sackin} that $T$ is not Sackin minimal. 

Let $T'$ be a Colless minimum. As above, $T'$ then minimizes $S$ and $-N_b$, too. Thus, we get:
\begin{align*}
N_a(T)&=S(T)+(-N_b(T))>S(T')+(-N_b(T'))=N_a(T'),
\end{align*}
contradicting the $N_a$ minimality of $T$.
\end{enumerate}

As both cases lead to a contradiction, the assumption was wrong. This completes the proof.
\end{proof}

\begin{remark}Before we finish our analysis of $N_a$, we note that both $N_a$ and $N_b$, just like the Sackin index, can be written in terms of a sum of depths of leaves $x \in V^1(T)$ as follows. 

Consider $T$ to be an ordered tree such that for each inner vertex $v$ with children $v_a$ and $v_b$, whenever $n_{v_a}\geq n_{v_b}$, we have that $v_a$ is the \enquote{left} child of $v$ and $v_b$ is the \enquote{right} child of $v$. In this way, each edge in $T$ can be labeled with $L$ or $R$, depending on whether it is leading to the left or the right child of its tail vertex. 

Now, if we consider a leaf $x \in V^1$, this implies that the unique path $P_x$ from the root $\rho$ of $T$ to $x$, which defines the depth $\delta_x$, can be subdivided into $L$-edges and $R$-edges. We denote by $\delta_x^L$ and $\delta_x^R$ the number of $L$- and $R$-edges on this path, respectively.
Then, we get:

$$N_a=\sum\limits_{v\in\mathring{V}(T)}n_{v_a}=\sum\limits_{x\in V^{1}(T)}\delta^L_x,$$ and 

$$N_b=\sum\limits_{v\in\mathring{V}(T)}n_{v_b}=\sum\limits_{x\in V^{1}(T)}\delta^R_x.$$

This is due to the fact that for all vertices $v$ on $P_x$, $x$ is counted in $n_{v_a}$ if and only $x$ is a descendant of the left child of $v$, and it is counted in $n_{v_b}$ if and only if it is a descendant of the right child of $v$. Note that the Sackin index can then be re-written as:

$$S(T)=\sum\limits_{v\in\mathring{V}(T)}n_{v}=\sum\limits_{x\in V^{1}(T)}\delta_x=\sum\limits_{x\in V^{1}(T)}\delta_x^R+\sum\limits_{x\in V^{1}(T)}\delta_x^L.$$
\end{remark}

Finally, we discuss a few more properties of $N_a$. Like the other three proposed indices -- namely $\CS$, $\SC$ and $N_b$ -- $N_a$ can also be computed in linear time.
\begin{proposition}
    For every $T\in\bt_n$, $N_a(T)$ can be computed in time $O(n)$.
\end{proposition}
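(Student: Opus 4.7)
The plan is to mirror the strategy used for Proposition~\ref{prop:cs_linear_time} and reduce the computation of $N_a$ to quantities already known to be computable in linear time. The key observation is the identity $S(T) = N_a(T) + N_b(T)$ established in Section~\ref{sec:Prelim}, which rearranges to $N_a(T) = S(T) - N_b(T)$. First I would invoke the fact from Section~\ref{sec:prelim_sackin} that the Sackin index is computable in time $O(n)$, together with Proposition~\ref{prop:cs_linear_time}, which provides an $O(n)$-time procedure for $N_b$. Since the difference of two $O(n)$-computable quantities is itself computable in $O(n)$, the claim follows immediately.

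An alternative self-contained route would use the recursion from Lemma~\ref{lem:Narec}, namely $N_a(T) = N_a(T_a) + N_a(T_b) + n_a$, together with the base case $N_a(T) = 0$ for $n = 1$. A single post-order traversal of $T$ that simultaneously maintains the leaf count $n_v$ at each inner vertex $v$ and accumulates the $n_{v_a}$ values (using the convention $n_{v_a} \geq n_{v_b}$) visits each vertex exactly once and performs constant work per vertex. As $|V(T)| = 2n - 1$, this procedure also runs in time $O(n)$.

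Neither approach presents a real obstacle: the first relies only on results that have already been collected in the preliminaries and in the preceding proposition, while the second is a straightforward bottom-up computation analogous to the standard algorithms for $S$ and $C$. I would present the shorter reduction-based argument as the main proof, as it fits seamlessly with the style of Proposition~\ref{prop:cs_linear_time}, and briefly note that the direct recursive implementation is available as well.
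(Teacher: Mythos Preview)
Your proposal is correct and follows essentially the same approach as the paper: both argue that $N_a = S - N_b$ is a linear combination of quantities already shown to be computable in $O(n)$ (Sackin via Section~\ref{sec:prelim_sackin} and $N_b$ via Proposition~\ref{prop:cs_linear_time}). Your additional remark about the direct post-order traversal based on Lemma~\ref{lem:Narec} is a valid alternative but is not needed for the result.
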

\begin{proof}
As stated in Sections~\ref{sec:prelim_sackin} and Proposition~\ref{prop:cs_linear_time}, both the Sackin  index and $N_b$ can be computed in time $O(n)$. Since $N_a=S-N_b$ is a linear combination of Sackin and $N_b$, it, too, can be computed in time $O(n)$.
\end{proof}

Next, we state that $N_a$ is local. 

\begin{proposition}\label{prop:Na_local}
   The $N_a$ index is local.
\end{proposition}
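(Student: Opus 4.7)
The plan is to exploit the decomposition $N_a = S - N_b$ together with the fact that both $S$ and $N_b$ are already known to be local. Since locality is clearly preserved under $\mathbb{R}$-linear combinations of tree shape statistics (the defining equation is itself linear in the statistic), the result should follow with essentially no further work.

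Concretely, let $T, T' \in \bt_n$ such that $T'$ arises from $T$ by replacing a pending subtree $T_v$ by a subtree $T_v'$ on the same number of leaves. First, I would invoke the locality of the Sackin index (Section~\ref{sec:prelim_sackin}) to conclude $S(T) - S(T') = S(T_v) - S(T_v')$, and the locality of $N_b$ (Proposition~\ref{prop:cs_local}) to conclude $N_b(T) - N_b(T') = N_b(T_v) - N_b(T_v')$. Then, using $N_a = S - N_b$ (which is a direct consequence of $S = N_a + N_b$, as stated in Section~\ref{sec:Prelim}), a short computation yields
\begin{align*}
N_a(T) - N_a(T') &= \bigl(S(T) - N_b(T)\bigr) - \bigl(S(T') - N_b(T')\bigr) \\
&= \bigl(S(T) - S(T')\bigr) - \bigl(N_b(T) - N_b(T')\bigr) \\
&= \bigl(S(T_v) - S(T_v')\bigr) - \bigl(N_b(T_v) - N_b(T_v')\bigr) \\
&= N_a(T_v) - N_a(T_v'),
\end{align*}
which is exactly the locality condition for $N_a$.

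There is no real obstacle here: the proposition is a one-line corollary of the identity $N_a = S - N_b$ combined with the two previously established locality results. The only thing worth double-checking is that the convention $n_{v_a} \geq n_{v_b}$ used in the definition of $N_a$ is applied consistently inside the replaced subtrees, but this is automatic since $N_a$, $N_b$, and $S$ are all tree shape statistics (invariant under isomorphism), so their values on $T_v$ and $T_v'$ depend only on the shapes of these subtrees and not on how they are embedded in the larger tree.
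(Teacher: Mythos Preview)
Your proof is correct and follows essentially the same approach as the paper: both use the identity $N_a = S - N_b$ together with the already established locality of $S$ and of $N_b$ (the latter from Proposition~\ref{prop:cs_local}), and the algebraic manipulation is line-for-line the same.
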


We omit the proof as it is completely analogous to that of Proposition \ref{prop:cs_local}, except that $N_a$ now plays the role of $\Delta_{CS}$, $S$ plays the role of $C$ and $N_b$ plays the role of $S$.

\par\vspace{0.3cm}
We finish this subsection by considering recursiveness once more. 

\begin{proposition}\label{prop:CS_brtss2}
   $N_a$ is a binary recursive tree shape statistic.
\end{proposition}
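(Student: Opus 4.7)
The plan is to mirror the argument used in Proposition~\ref{prop:CS_brtss} for $\CS$, $\SC$ and $N_b$: I will exhibit $N_a$ as a binary recursive tree shape statistic of length~$2$, where the first coordinate tracks the value of $N_a$ itself and the second coordinate tracks the leaf count. The auxiliary leaf-count coordinate is needed because the recursion for $N_a$ in Lemma~\ref{lem:Narec} uses $n_a$ (the size of the \emph{larger} subtree), so it can only be expressed symmetrically once both subtree sizes are available.

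For the base case, every $T\in\bt_1$ has no inner vertices, so $N_a(T)=0$, while the number of leaves is~$1$. This yields the start vector $\lambda=(0,1)$, matching the same choice made in Proposition~\ref{prop:CS_brtss}.

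For the recursive step, Lemma~\ref{lem:Narec} gives $N_a(T)=N_a(T_a)+N_a(T_b)+n_a$ under the convention $n_a\geq n_b$. To eliminate the dependence on the ordering of $T_a$ and $T_b$, I intend to use the elementary identity $\max(n_a,n_b)=\tfrac{1}{2}\bigl((n_a+n_b)+|n_a-n_b|\bigr)$ and then set
\[
r_1\bigl((N_a(T_a),n_a),(N_a(T_b),n_b)\bigr) \;=\; N_a(T_a)+N_a(T_b)+\tfrac{1}{2}\bigl((n_a+n_b)+|n_a-n_b|\bigr),
\]
\[
r_2\bigl((N_a(T_a),n_a),(N_a(T_b),n_b)\bigr) \;=\; n_a+n_b.
\]
Both functions are manifestly symmetric under swapping their two vector arguments, and the formal typing constraints $\lambda\in\mathbb{R}^2$ and $r_i:\mathbb{R}^2\times\mathbb{R}^2\to\mathbb{R}$ are plainly satisfied.

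The only remaining verification is that this recipe truly reproduces $N_a(T)$ for every $T\in\bt_n$; this follows by a routine induction on~$n$ using Lemma~\ref{lem:Narec}, since the inductive hypothesis guarantees that the first coordinate on each maximal pending subtree is $N_a(T_a)$ and $N_a(T_b)$, respectively, and the second coordinate provides $n_a$ and $n_b$, which together reconstruct $n_a$ via the $\max$-identity. I do not expect any genuine obstacle: the hard work has already been carried out in Lemma~\ref{lem:Narec}, and the only minor subtlety is the symmetrization of $n_a$ via $|n_a-n_b|$, which exactly parallels the treatment of $\CS$, $\SC$ and $N_b$.
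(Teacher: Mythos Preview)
Your proposal is correct and arrives at exactly the same binary recursive tree shape statistic of length~$2$ as the paper, with identical $\lambda=(0,1)$ and identical $r_1,r_2$. The only cosmetic difference is that the paper derives the symmetric recursion formula via the identity $N_a=S-N_b$ and the already-established recursions for $S$ and $N_b$, whereas you start from Lemma~\ref{lem:Narec} and symmetrize $n_a$ with the $\max$-identity; both routes yield $N_a(T)=N_a(T_a)+N_a(T_b)+\tfrac{1}{2}\bigl((n_a+n_b)+|n_a-n_b|\bigr)$.
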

\begin{proof}
Using the recursive property of Sackin and $N_b$, we get
 \begin{align*}
            N_a(T) &=S(T)-N_b(T) \\
            &=S(T_a)-N_b(T_a)+S(T_b)-N_b(T_b) +(n_a+n_b)-\frac{1}{2}((n_a+n_b)-|n_a-n_b|)\\
            &= N_a(T_a)+N_a(T_b)+ \frac{1}{2}((n_a+n_b)+|n_a-n_b|) 
    \end{align*}
    for $T\in\bt_n$ with $n\geq2$ and $N_a(T)=0$ for $T\in\bt_1$.

   Hence, $N_a$ can be written as a binary recursive tree shape statistic of length 2 as follows: 
    \begin{itemize}
        \item $\lambda =(0,1)$ as base case for trees with $n=1$ leaf,
        \item for trees $T=(T_a,T_b)$ we have recursions $r_1(T_a,T_b) = N_a(T_a)+N_a(T_b)+\frac{1}{2}(|n_a-n_b|+(n_a+n_b))$ ($N_a$ value) and $r_2(T_a,T_b)=n_a+n_b$ (leaf numbers).
    \end{itemize}
    Obviously, $\lambda\in\mathbb{R}^2$ and $r_i:\mathbb{R}^2\times\mathbb{R}^2\rightarrow \mathbb{R}$. Furthermore, $r_1,r_2$ are independent of the order of subtrees. 
    \end{proof}

Since the echelon tree plays an important role in our manuscript and as the existing literature on it is very sparse, we will now present several new results on the echelon tree.

    \subsection{ \texorpdfstring{Additional results on echelon trees $T^{be}_n$}{Additional results on echelon trees Tben}}\label{sec:echelon_results} 

    It is the main aim of this subsection to provide an explicit, non-recursive construction for the echelon tree. In order to do so, we need a few more notations. However, before we start, recall that it was already observed in \cite{Currie2024} that the echelon tree is a so-called rooted binary weight tree (which we will define subsequently), but no details were given there.

    Now, let again $k_n=\lceil \log_2(n)\rceil$ and consider the binary expansion of $n$, i.e., $n=\sum\limits_{i=0}^{k_n}\alpha_i\cdot 2^i$, where $\alpha_i$ is the $i+1$-th digit of the binary representation of $n$. Note that $w(n)=\sum\limits_{i=0}^{k_n}\alpha_i$ is the so-called \emph{binary weight} of $n$. Let $f_n(i)$ for $i=1,\ldots,w(n)$ denote the $i^{th}$ power of two in the binary expansion of $n$ when they are sorted in ascending order. For instance, if $n=11=2^0+2^1+2^3$, we have $f_{11}(1)=2^0$, $f_{11}(2)=2^1$ and $f_{11}(3)=2^3$. Note that this implies $n=\sum\limits_{i=1}^{w(n)}f_n(i)$. 
 
    We now construct a tree $T_n^*$ as follows: We start with a caterpillar $T_{w(n)}^{cat}$ on $w(n)$ leaves  -- which we will subsequently refer to as the \enquote{top caterpillar (tree)} --   and replace its leaves by fully balanced trees $T_{\log_2(f_n(i))}^{fb}$ for $i=1,\ldots,w(n)$ in ascending order. More precisely, we replace one of the cherry leaves with $T_j^{fb}$, where $j=\log_2(f_n(1))$ is the smallest index such that $\alpha_j=1$. Then, we take one unvisited leaf of the top caterpillar at a time, starting with the one furthest from the root and ending with the one adjacent to the root, and replace the respective leaf by $T_i^{fb}$, where $i$ is the smallest index with $\alpha_i=1$ which has not been considered yet. Figure~\ref{fig:echelon}  demonstrates this construction.

    We are now in the position to show that $T_n^*$ equals $T_n^{be}$, thus providing an explicit construction for the echelon tree.
    \begin{proposition}\label{prop:explicitechelon} For all $n\in \mathbb{N}_{\geq 1}$ we have $T_n^*=T_n^{be}$. 
    \end{proposition}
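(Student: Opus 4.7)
The plan is to proceed by induction on $n$, matching the recursive definition of $T_n^{be}$ from Section~\ref{sec:prelim_be} against the explicit construction of $T_n^*$ introduced just above the statement.

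For the base case, I would take $n$ with $w(n)=1$, i.e., $n=2^k$ for some $k\in\mathbb{N}$. Then the binary expansion of $n$ contains only $f_n(1)=2^k$, so the construction places $T_k^{fb}$ at the unique available position, yielding $T_n^*=T_k^{fb}$. Since $T_{2^k}^{be}=T_k^{fb}$ (cf.\ Section~\ref{sec:prelim_be}), this handles all powers of two, including $n=1$ with $k=0$.

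For the induction step, assume $T_m^*=T_m^{be}$ for all $m<n$ and suppose $n$ is not a power of two. Then $2^{k_n-1}<n<2^{k_n}$, so the largest power of two appearing in the binary expansion of $n$ is $f_n(w(n))=2^{k_n-1}$. By the construction of $T_n^*$, the top caterpillar has $w(n)\geq 2$ leaves, and the leaf adjacent to the root of this top caterpillar (the last one processed) gets replaced by $T_{k_n-1}^{fb}$. Therefore the root of $T_n^*$ has one child equal to the root of $T_{k_n-1}^{fb}$, and another child which is the root of the subtree obtained by deleting this root-adjacent leaf together with the edge above it. This remaining subtree is itself built from a caterpillar on $w(n)-1$ leaves whose leaves are replaced, in ascending order from the cherry upwards, by fully balanced trees corresponding to $f_n(1),\ldots,f_n(w(n)-1)$. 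These are exactly the powers in the binary expansion of $n':=n-2^{k_n-1}$, so this remaining subtree equals $T_{n'}^*$, and by the induction hypothesis $T_{n'}^*=T_{n'}^{be}$.

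To conclude, I would note that $n'=n-2^{k_n-1}<2^{k_n-1}$, so $T_{k_n-1}^{fb}$ is the larger of the two maximal pending subtrees of $T_n^*$, and $k_{n'}=\lceil\log_2 n'\rceil\leq k_n-1$, so that the echelon partition of $n$ is indeed $(2^{k_n-1},n-2^{k_n-1})$. By the recursive definition of the echelon tree, $T_n^{be}$ has exactly the same standard decomposition $(T_{k_n-1}^{fb},T_{n'}^{be})$, so $T_n^*=T_n^{be}$. The argument is essentially bookkeeping; the only point requiring a little care -- rather than a genuine obstacle -- is verifying that stripping the root-adjacent leaf of the top caterpillar leaves precisely the $T_n^*$-style construction for $n'$, which amounts to observing that the ascending-order placement of fully balanced blocks from the cherry toward the root is preserved under this removal.
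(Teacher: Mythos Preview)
Your proof is correct and follows essentially the same approach as the paper: handle powers of two directly via $T_{2^k}^{be}=T_k^{fb}$, then for non-powers of two use strong induction on $n$ by decomposing $T_n^*$ into $(T_{k_n-1}^{fb},\,T_{n-2^{k_n-1}}^*)$ and applying the hypothesis to the smaller piece. The paper additionally verifies $n=3$ as an explicit base case, but this is redundant under your strong-induction setup.
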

    
  \begin{proof} We start by considering $n=2^m$ for some $m\in\mathbb{N}$.
     From Section~\ref{sec:prelim_be}, we know that in this case, the echelon tree coincides with the fully balanced tree of height $m$. Now, consider $T^*_n$. We have $w(n) = 1$ and $f_n(1) = 2^{\log_2(n)}=2^m$. Therefore, the top caterpillar tree used in the construction of $T^*_n$ has to be $T^{cat}_1$ (i.e., the tree consisting of only one vertex and no edges), and the tree that replaces the only leaf in $T_1^{cat}$ is $T^{fb}_m$, implying $T^*_n = T^{fb}_m = T^{be}_n$. This completes the proof for the case in which $n$ is a power of two.

Next, we consider the case in which $n$ is not a power of two and prove the statement by induction on $n$, starting with $n=3$. In the following, let $k_n=\lceil\log_2(n)\rceil$.
    For the base case $n=3=2^1+2^0$ we have $w(3) =2$ with $f_3(1) = 2^0$ and $f_3(2) = 2^1$. Hence, the top caterpillar of $T^*_3$ is $T^{cat}_2$ and the trees that replace its two leaves are $T^{fb}_{\log_2(f_3(1))} = T^{fb}_0$ and  $T^{fb}_{\log_2(f_3(2))} = T^{fb}_1$, resulting in  $T^*_3=T^{cat}_3$. On the other hand, $k_3 = 2$, and by the recursive definition of the echelon tree, $T_3^{be}$ consists of the maximal pending subtrees $T^{fb}_{k_n-1} = T^{fb}_{2-1}=T_1^{fb}$ and $T^{be}_{n-2^{k_n-1}}=T^{be}_{3-2^{2-1}}=T^{be}_{1}$. Since, by definition, $T^{be}_{1}$ is a single vertex, we have $T^{be}_3 = T^{cat}_3 = T^*_3$, completing the base case.

    Before we continue with the inductive step, note that as we consider the case in which $n$ is not a power of two, we know that $w(n)\geq 2$. Moreover, in this case we know that in the binary expansion of $n$, the largest term is $2^{k_n-1}$, i.e., we have $f_n(w(n))=2^{k_n-1}$. We will use this fact now in the inductive step.
    
    So for the inductive step, we now assume that $T^*_k = T^{be}_k$ for all $k<n$ and consider $T^*_n$ with standard decomposition $T^*_n=((T^*_n)_a,(T^*_n)_b)$. As $w(n)\geq 2$, we have $(T^*_n)_a = T^{fb}_j$ with $j= \log_2(f_n(w(n)))=k_n-1$.  Hence, $(T^*_n)_a = T^{fb}_{k_n-1}$. Note that this already implies that $(T_n^*)_a=T_a^{be}$.
    
    It remains to show that $(T^*_n)_b=T_{n-2^{k_n-1}}^{be}$. In order to do so, we first show that $(T^*_n)_b = T^{*}_{n_b}$, where $n_b=n-2^{k_n-1}$. $(T^*_n)_b$ contains $T^{cat}_{w(n)-1}$ as top caterpillar tree with trees $T^{fb}_{\log_2(f_n(1))},\ldots,T^{fb}_{\log_2(f_n(w(n)-1))}$ replacing its leaves in ascending order, i.e. $T^{fb}_{\log_2(f_n(1))},T^{fb}_{\log_2(f_n(2))}$ replace the leaves of the cherry and $T^{fb}_{\log_2(f_n(w(n)-1))}$ replaces the leaf that is adjacent to the root of $(T^*_n)_b$. On the other hand, as $n_b = n-2^{k_n-1}$, we know that the binary expansion of $n_b$ equals the binary expansion of $n$ minus the term $2^{k_n-1}$. Hence, $w(n_b) = w(n)-1$ and $f_{n_b}(i) = f_n(i)$ for all $i=1,\ldots,w(n)-1$. This not only means that the top caterpillar in $T^*_{n_b}$ has to have $w(n)-1$ leaves, but also that the fully balanced trees replacing its leaves have heights $\log(f_n(1)),\ldots,\log(f_n(w(n)-1))$, proving that, in fact, $(T^*_n)_b = T^{*}_{n_b}$.
    
    We now know that $T^*_n = (T^{fb}_{k_n-1},T^*_{n-2^{k_n-1}})$ and we can apply the induction hypothesis to $T^*_{n-2^{k_n-1}}$ to obtain $T^*_n = (T^{fb}_{k_n-1},T^{be}_{n-2^{k_n-1}}) = T^{be}_n$. This completes the proof.
    \end{proof}
    
     Before we continue, we note that this direct, non-recursive construction immediately shows that $T_n^{be}$ is a special case of so-called \emph{rooted binary weight trees}, which were introduced in \cite{Kersting2021}. These trees consist of an arbitrary tree with $w(n)$ many leaves whose leaves are replaced with fully balanced trees of sizes $f_n(i)$ for all $i=1,\ldots,w(n)$ in an arbitrary order.

\begin{figure}[ht]
  \centering
  \includegraphics[scale=.17]{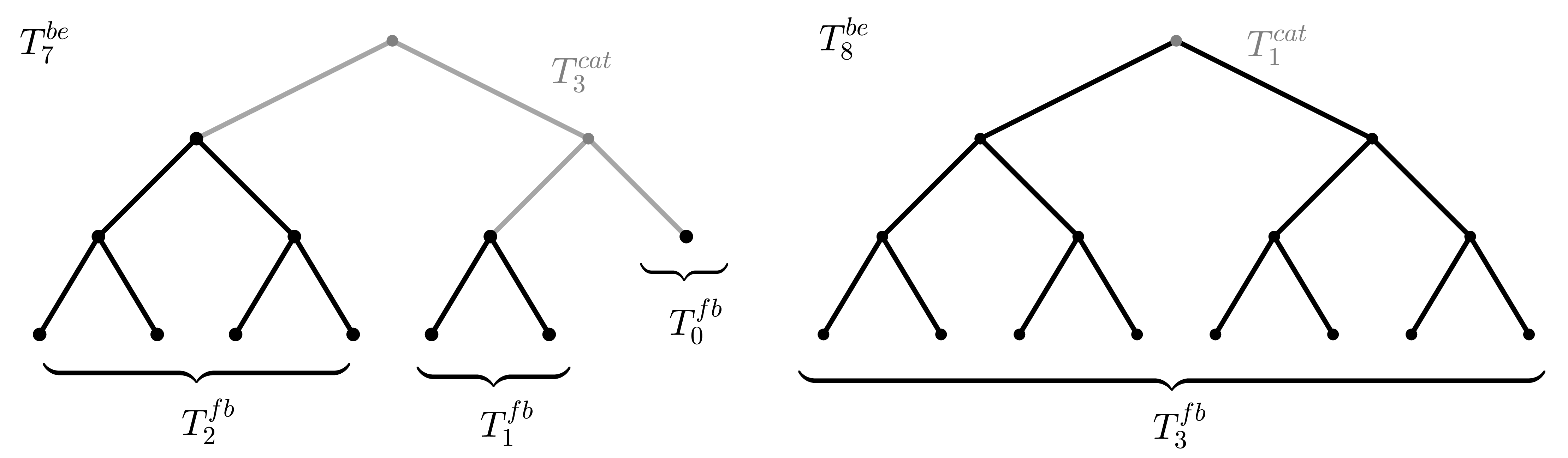}
  \caption{The direct, non-recursive construction of $T_n^{be}$ for $n=7$ and $n=8$. The first step is to calculate the binary weight $w(n)$ of $n$, i.e., to count the numbers of 1's in the binary representation of $n$. In our example we have $n = 7=111_2$ and $n=8 =1000_2$. Hence, $w(7)=3$ and $w(8)=1$. Then, a caterpillar of size $w(n)$ (depicted in gray) is iteratively filled up with fully balanced trees of sizes $2^i$, where the powers of $i$ are those occurring in the binary representation of $n$. The attaching of fully balanced trees starts with the smallest powers of 2 at the cherry and proceeds towards the root. Note that $n=7=2^2+2^1+2^0$ and $n=8=2^3$.}
\label{fig:echelon}
\end{figure}

We conclude this section by providing explicit formulas for the various indices discussed in our manuscript for the echelon tree, which are new to the literature.

\begin{proposition} Let $n\in \mathbb{N}_{\geq 1}$ and let $f_n$ and $w(n)$ be as defined above. Then, we have:

 \begin{align*}S(T_n^{be})&=\sum\limits_{i=1}^{w(n)-1}(f_n(i+1)+f_n(i)\cdot (w(n)-i))+\sum\limits_{i=1}^{w(n)}f_n(i)\cdot\log_2(f_n(i)),\end{align*} 

 \begin{align*}C(T_n^{be})&=\sum\limits_{i=1}^{w(n)-1}(f_n(i+1)-f_n(i)\cdot (w(n)-i)),\end{align*} 

 \begin{align*}\CS(T_n^{be})&=-2\sum\limits_{i=1}^{w(n)-1}f_n(i)\cdot (w(n)-i)-\sum\limits_{i=1}^{w(n)}f_n(i)\cdot\log_2(f_n(i)),\end{align*} 

 \begin{align*}\SC(T_n^{be})&=2\sum\limits_{i=1}^{w(n)-1}f_n(i)\cdot (w(n)-i)+\sum\limits_{i=1}^{w(n)}f_n(i)\cdot\log_2(f_n(i)),\end{align*} 

 \begin{align*}N_a(T_n^{be})&=\sum\limits_{i=1}^{w(n)-1}f_n(i+1)+\frac{1}{2}\sum\limits_{i=1}^{w(n)}f_n(i)\cdot\log_2(f_n(i)),\end{align*} 

 \begin{align*}N_b(T_n^{be})&=\sum\limits_{i=1}^{w(n)-1}f_n(i)\cdot (w(n)-i)+\frac{1}{2}\sum\limits_{i=1}^{w(n)}f_n(i)\cdot\log_2(f_n(i)),\end{align*} 

\end{proposition}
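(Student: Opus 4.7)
The plan is to read the six formulas directly off the explicit, non-recursive construction of $T_n^{be}$ established in Proposition~\ref{prop:explicitechelon}. Once $S(T_n^{be})$ and $C(T_n^{be})$ are in hand, all four remaining expressions follow from the linear identities $\CS=C-S$, $\SC=-\CS$, $N_b=-\tfrac{1}{2}\CS$ (Lemma~\ref{lem:D=sumnb}) and $N_a=S-N_b$, so the real work is to establish the Sackin and Colless formulas.

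First, I would partition the inner vertex set of $T_n^{be}$ into two disjoint groups: the $w(n)-1$ inner vertices $v_1,\dots,v_{w(n)-1}$ of the top caterpillar $T_{w(n)}^{cat}$ (indexed from the root down to the cherry), and the inner vertices lying strictly inside the $w(n)$ pendant fully balanced trees $T^{fb}_{\log_2 f_n(i)}$, $i=1,\dots,w(n)$. Subtree sizes inside a pendant are unchanged by the embedding, so, using $S(\fb_k)=2^{k}k$ (Section~\ref{sec:prelim_sackin}) and $C(\fb_k)=0$ (Section~\ref{sec:prelim_Colless}), the pendant contribution is $\sum_{i=1}^{w(n)}f_n(i)\log_2 f_n(i)$ for $S$ and $0$ for $C$; this already accounts for the second sum in the $S$-formula and for the absence of a logarithmic term in the $C$-formula.

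For the caterpillar part, Proposition~\ref{prop:explicitechelon} shows that at $v_j$ (for $j=1,\dots,w(n)-1$) the two maximal pending subtrees are $T^{fb}_{\log_2 f_n(w(n)-j+1)}$ and the echelon-shaped piece built from the deeper caterpillar leaves, with leaf numbers $f_n(w(n)-j+1)$ and $\sum_{i=1}^{w(n)-j}f_n(i)$, respectively. Because $f_n(w(n)-j+1)$ is a single power of two while $\sum_{i=1}^{w(n)-j}f_n(i)$ is a sum of strictly smaller distinct powers of two, the former is strictly larger, so the convention $n_{v_a}\geq n_{v_b}$ is automatic. Summing $n_{v_j}$ and $n_{v_{j,a}}-n_{v_{j,b}}$ over $j$ and substituting $m=w(n)-j+1$ yields
\[
S_{\text{cat}} \;=\; \sum_{m=2}^{w(n)}\sum_{i=1}^{m} f_n(i), \qquad C_{\text{cat}} \;=\; \sum_{m=2}^{w(n)}\left(f_n(m)-\sum_{i=1}^{m-1} f_n(i)\right).
\]
A Fubini-style interchange of summation order rewrites these exactly as $\sum_{i=1}^{w(n)-1}\bigl(f_n(i+1)+f_n(i)(w(n)-i)\bigr)$ and $\sum_{i=1}^{w(n)-1}\bigl(f_n(i+1)-f_n(i)(w(n)-i)\bigr)$, which are precisely the caterpillar portions of the claimed formulas.

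Combining caterpillar and pendant contributions gives the stated closed forms for $S(T_n^{be})$ and $C(T_n^{be})$; the expressions for $\CS$, $\SC$, $N_a$ and $N_b$ then drop out by direct linear combination, as noted at the start. The only obstacle is notational bookkeeping: keeping the index shift $m=w(n)-j+1$ straight and verifying that the uniform formula $n_{v_{j,b}}=\sum_{i=1}^{w(n)-j}f_n(i)$ correctly specialises to the cherry case $n_{v_{w(n)-1,b}}=f_n(1)$. No new conceptual ingredient beyond Proposition~\ref{prop:explicitechelon} is required.
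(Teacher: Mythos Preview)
Your proposal is correct and follows essentially the same approach as the paper: both partition the inner vertices of $T_n^{be}$ into the $w(n)-1$ top-caterpillar vertices and the inner vertices of the pendant fully balanced trees, evaluate the Sackin and Colless contributions of each block separately (using $S(\fb_k)=2^k k$ and $C(\fb_k)=0$ for the pendants), and then obtain $\CS$, $\SC$, $N_b$, $N_a$ as linear combinations. The only cosmetic difference is that you index the caterpillar vertices from the root downward and then reindex via $m=w(n)-j+1$, whereas the paper indexes from the cherry upward from the start; the resulting summations and rearrangements are identical.
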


\begin{proof}
Before we start our proof, recall that a rooted binary tree with $w(n)$ leaves has $w(n)-1$ inner vertices. This also  holds, in particular, for the caterpillar top tree used to generate $T_n^{be}$ as depicted in Figure~\ref{fig:echelon}. Since all indices under consideration are a summation over all inner vertices of the tree, we can subdivide these sums into two sums: the sum over the inner vertices of the caterpillar top tree and the sum over the inner vertices of the pending fully balanced subtrees (including the leaves of the caterpillar top tree) of $T_n^{be}$. The sums over the inner vertices of the fully balanced subtrees then gives the respective index for these pending subtrees, for which the values are already known. This approach will be used subsequently.

\begin{itemize}
\item We begin with the Sackin index. Each of the fully balanced subtrees $T_i$ with $f_n(i)=2^i$ leaves have Sackin index $S(T_i)=2^i \cdot \log_2(i)$ (as we know that a fully balanced tree with $n=2^{k_n}$ many leaves has Sackin index $2^{k_n}\cdot k_n$, cf. Section~\ref{sec:prelim_sackin}). Together with the fact that we have $w(n)$ many such subtrees in our tree, this explains the second summand. 

Now, for the first summand, we consider all $w(n)-1$ inner vertices $v_1,\ldots,v_{w(n)-1}$ of the top caterpillar. Let $v_1$ be the parent vertex of the cherry and $v_i$ be adjacent to $v_{i-1}$ for $i=2,\ldots, w(n)-1$ such that $v_{w(n)-1}$ is the root of the top caterpillar. By the construction of $T^{be}_n$, we know that the larger pending subtree of vertex $v_i$ is the fully balanced tree with $f_n(i+1)$ leaves. The smaller pending subtree of $v_i$ contains $T^{fb}_{\log_2(f_n(1))},\ldots,T^{fb}_{\log_2(f_n(i))}$, implying that
\begin{align*}
    \sum\limits_{v\in\Vint(T^{cat}_{w(n)})} n_v  &= \sum\limits_{i=1}^{w(n)-1} \underbrace{\left(f_n(i+1)+ \sum\limits_{j=1}^i f_n(j)\right)}_{n_{v_i}}\\
    &=\left(\sum\limits_{i=1}^{w(n)-1} f_n(i+1)\right)+\left(\sum\limits_{i=1}^{w(n)-1} \sum\limits_{j=1}^i f_n(j)\right)\\
    &=\sum\limits_{i=1}^{w(n)-1} \left(f_n(i+1)+f_n(i)\cdot (w(n)-i)\right).
\end{align*}

This explains the first summand of the term and thus completes the proof for the Sackin index.

\item For the Colless index, we proceed as for the Sackin index. We note, however, that each fully balanced subtree contributes a Colless index of 0. Hence, the respective term that for the Sackin index led to the second sum, is 0 in the Colless case and thus does not need to be considered. 

Therefore, we only have to consider the summation over all inner vertices of the top caterpillar tree of $T_n^{be}$, but instead of adding the sizes of both pending subtrees for each of the vertices, we subtract the size of the smaller subtree from the larger one. This explains the formula and completes the proof for the Colless index.

\item The formulas of $\Delta_{CS}=C-S$ and $\Delta_{SC}=S-C$ now simply follow from the definition of these indices using the already derived formulas for $S$ and $C$.
\item The formula for $N_b$ simply follows from $\Delta_{CS}$ by using Lemma~\ref{lem:D=sumnb}, which shows that $N_b=-\frac{1}{2}\Delta_{CS}$.
\item Finally, as we know that $N_a=S-N_b$, the formula for $N_a$ is a direct conclusion of the already derived formulas for $S$ and $N_b$. This completes the proof.
\end{itemize}
\end{proof}

\color{black}

\begin{remark} Note that the integer sequences 0,2,5,8,13,16,20,24,33,36,40 and 0,0,1,0,3,2,2,0,7,6,6,4,6,4,3,0 of the values of $S(T_n^{be})$ and $C(T_n^{be})$, respectively, for $n\geq 1$ had not been contained in the Online Encyclopedia of Integer sequences \cite{oeis} before our study was conducted. This indicates that they might not have occurred in other contexts yet. They have been added to the OEIS in the course of this manuscript as sequences A396769 (Sackin) and A397672 (Colless). Interestingly, the sequence 0,1,3,4,8,9,11,12,20,21,23,24 of $N_a(T_n^{be})$ was, at least without the first term of 0 for the trivial tree consisting of only one vertex, already contained in the OEIS \cite[Sequence A006520]{oeis}.  Also, the sequence 1,2,4,5,7,9,12,13,15,17,20,22,25,28,32,33,35,37,40,42,45 of $N_b(T_n^{be})$ was already contained in the OEIS \cite[Sequence A000788]{oeis}, albeit in a shifted version, namely  $N_b(T_n^{be}) = A000788(n-1)$. Due to the close relationship of $\CS$ and $\SC$ to $N_b$, we also have $\CS(T^{be}_n) = -2\cdot A000788(n-1) $ with sequence terms 0,-2,-4,-8,-10,-14,-18,-24,-26,-30 and $\SC(T^{be}_n)= 2\cdot A000788(n-1)$ with sequence terms 0,2,4,8,10,14,18,24,26,30. These identities can easily be verified  by showing the equivalence of the recursive formulas for $\SC(T^{be}_n)$ and $2\cdot A000788(n-1)$. 
The fact that these sequences have already occurred in other contexts nicely links our findings on the echelon tree and the new indices $N_a$ and $N_b$ to other combinatorial research areas.
\end{remark}
 
\section{Discussion} 
In our manuscript, we have introduced two new imbalance indices, namely $N_a$ and $\Delta_{CS}$ as well as two new balance indices, namely $N_b$ and $\Delta_{SC}$, which are all closely related. We have shown that the established Sackin and Colless indices can be regarded as mere compound indices of two of these indices, namely $N_a$ and $N_b$, as we have $S=N_a+N_b$ and $C=N_a-N_b$. As $N_a$ and $N_b$ take less information of the tree into account than $S$ and $C$, we regard them as more basic than $S$ and $C$ themselves. Interestingly, the Colless index and the $N_a$ index share the same extremal sets, whereas $N_b$ also considers, among others, the echelon tree as maximally balanced, which neither $C$ nor $N_a$ do. Thus, $N_a$ seems to be the dominating term in $C=N_a-N_b$, as concerning the extremal sets, $N_b$ does not change the sets induced by $N_a$. However, we do note that there are pairs of trees $(T_1,T_2)$ such that $N_a(T_1)>N_a(T_2)$ and $C(T_1)<C(T_2)$ (results not shown), implying that the rankings induced by $N_a$ and $C$ are \emph{not} identical, even though the indices agree on their extremal sets. 

Another interesting aspect is that as $N_a$ is an imbalance index and $N_b$ is a balance index, the fact that $C$ is an imbalance index, too, follows trivially, whereas the fact that the Sackin index is an imbalance index is less obvious, because, by $S=N_a+N_b$, it is a sum of an imbalance and a balance index. However, the fact that it is an imbalance index just like $N_a$ seems to indicate that here, too, $N_a$ is the dominating term. But again, we note that there are pairs of trees $(T_1,T_2)$ with $N_a(T_1)>N_a(T_2)$ but $S(T_1)<S(T_2)$ (results not shown). Thus, again, the two indices' rankings are not identical, showing that $N_a$ is a relevant and interesting index in its own right.

Note that $N_b$'s maximum set contains, among others, echelon trees and trees with echelon subtrees, which also links $N_b$ to the stairs2 balance index from the literature \cite{Colijn2014, Fischer2023,Currie2024}, which is known to have the echelon tree as its only maximum and which is defined as follows: 
    \[st2(T) =\frac{1}{n-1}\sum\limits_{v\in\Vint(T)}\frac{n_{v_b}}{n_{v_a}},\] where $v_a$ and $v_b$ are again the children of $v$, and  $n_{v_a}\geq n_{v_b}$.
Figure~\ref{fig:indexcomparison} shows that $\Delta_{CS}$ (and thus by $\Delta_{CS}=-\Delta_{SC}=-2N_b$, also $\SC$ and $N_b$) are generally less resolved than $S$, $C$, or $st2$, but this is not true for $N_a$. In fact, $N_a$ is sometimes even more resolved than, say, the Sackin index, cf. Figure~\ref{fig:indexcomparison}. However, it can also be seen that when for instance $N_b(T_1)=N_b(T_2)$, we often have that the rankings of $S$ and $C$ do not coincide with the one of $st2$. While this is clearly not always the case, equal rankings by $N_b$, for instance, might be used as a first indicator that there is some ambivalence with the considered trees and thus the index to be used should be carefully selected for the purpose at hand. 

We conclude by noting that concerning $\Delta_{CS}$, whenever the rankings of two trees $T_1$ and $T_2$ are different between $S$ and $C$, i.e., if we have $S(T_1)>S(T_2)$ but $C(T_1)<C(T_2)$, then $\Delta_{CS}(T_1)=C(T_1)-S(T_1)<C(T_2)-S(T_2)=\Delta_{CS}(T_2)$. Hence, in such cases, $\Delta_{CS}$ (and thus $N_b$) will agree with the Colless index and not with the Sackin index concerning which tree is more balanced. Thus, it seems that for the difference $\Delta_{CS}=C-S$, indeed the Colless index is more dominant than the Sackin index (which can also be seen by Theorem~\ref{thm:characterization}, as it was shown there that a Sackin minimum which is not Colless minimal cannot minimize $\Delta_{CS}$ or maximize $N_b$, whereas all Colless minima are guaranteed to do so).
An interesting question arising from our manuscript is that of finding a direct, non-recursive formula for $a(n)$, the number of $\Delta_{CS}$ minima (and $N_b$ and $\Delta_{SC}$ maxima). We leave this problem open for future research.
Last but not least, we also added to the literature by investigating the echelon tree more in-depth, which is known particularly from the context of the stairs2 and $B_1$ indices. We hope that the explicit definition presented in this manuscript will facilitate future research concerning this tree.

\begin{figure}[ht]
  \centering
  \includegraphics[scale=.55]{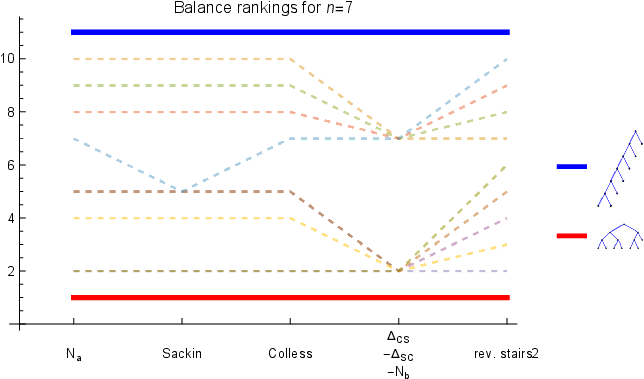}\hspace{0.03\textwidth}
  \includegraphics[scale=.55]{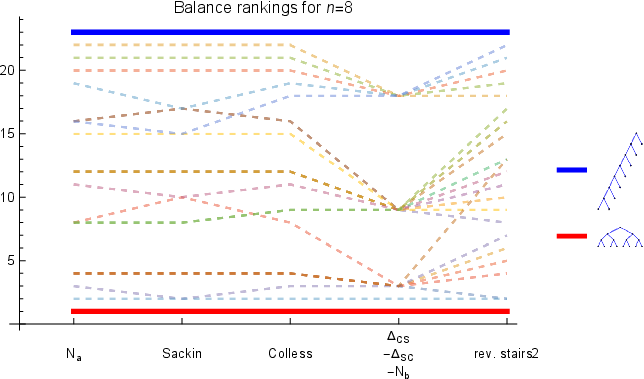}\hspace{0.03\textwidth}
  \includegraphics[scale=.55]{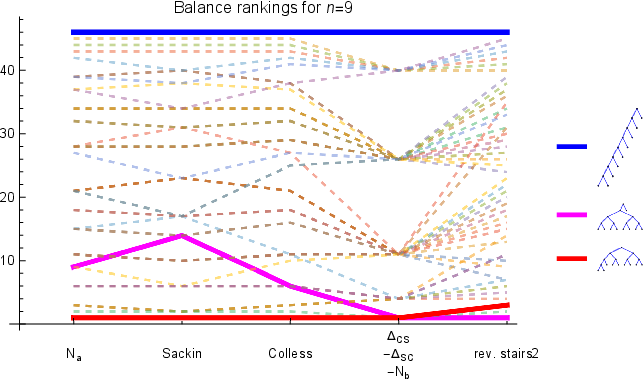}
  \caption{Ranking comparisons of the indices $N_a$, $S$, $C$, $\Delta_{CS}$ (whose induced rankings are the same as those of $-\Delta_{SC}$ and $-N_b$) and stairs2. Note that the depicted plots were generated with the computer algebra system Mathematica \cite{mathematica}. The $y$-axis contains all trees ordered according to the so-called Furnas rank \cite{Furnas1984,Fischer2023,Kirkpatrick1993}, starting with the maximally balanced tree and proceeding to the caterpillar. The blue line corresponds to the caterpillar, the red line corresponds to the maximally balanced tree, which in case $n=7,8,9$ also coincides with the gfb tree. For $n=7,8$ the echelon tree also coincides with the maximally balanced tree, whereas it differs from it -- and is thus depicted with its own magenta line -- for $n=9$. Note that the stairs2 ranking was inverted (by using $-st2$ instead of $st2$) in order to make it an imbalance index as the others in the graphic for comparability. It can clearly be seen that $\Delta_{CS}=C-S$ is the least resolved index. However, it can also be seen that $\CS$ often ranks trees equally which are ranked differently between Colless on the one hand and stairs2 on the other hand -- see, for instance, the light-blue line in the figure for $n=7$ as an extreme example.  }
\label{fig:indexcomparison}
\end{figure}

\backmatter

\section*{Acknowledgment} 
MF and LK were supported by the project ArtIGROW, which is a part of the WIR!-Alliance \enquote{ArtIFARM – Artificial Intelligence in Farming}, and gratefully acknowledge the Bundesministerium für Forschung, Technologie und Raumfahrt (Federal Ministry of Research, Technology and Space, FKZ: 03WIR4805) for financial support.
The authors also wish to thank Kristina Wicke and Sophie Kersting for helpful discussions and various insights. Some parts of the material presented in this manuscript are based upon work supported by the National Science Foundation under Grant No.~DMS-1929284 while MF was in residence at the Institute for Computational and Experimental Research in Mathematics in Providence, RI, during the Theory, Methods, and Applications of Quantitative Phylogenomics semester program. 

\section*{Conflict of interest} The authors herewith certify that they have no affiliations with or involvement in any organization or entity with any financial (such as honoraria; educational grants; participation in speakers’ bureaus; membership, employment, consultancies, stock ownership, or other equity interest; and expert testimony or patent-licensing arrangements) or non-financial (such as personal or professional relationships, affiliations, knowledge or beliefs) interest in the subject matter discussed in this manuscript.

\section*{Data availability statement} 
Data sharing is not applicable to this article as no new data were created or analyzed in this study.

\section*{Authors' contributions} MF formulated the main research question and developed the theoretical framework. All authors contributed equally to the  development and writing of the manuscript. All authors read and approved the final version.
\bibliography{literature}

%% BioMed_Central_Bib_Style_v1.01

\begin{thebibliography}{26}
% BibTex style file: bmc-mathphys.bst (version 2.1), 2014-07-24
\ifx \bisbn   \undefined \def \bisbn  #1{ISBN #1}\fi
\ifx \binits  \undefined \def \binits#1{#1}\fi
\ifx \bauthor  \undefined \def \bauthor#1{#1}\fi
\ifx \batitle  \undefined \def \batitle#1{#1}\fi
\ifx \bjtitle  \undefined \def \bjtitle#1{#1}\fi
\ifx \bvolume  \undefined \def \bvolume#1{\textbf{#1}}\fi
\ifx \byear  \undefined \def \byear#1{#1}\fi
\ifx \bissue  \undefined \def \bissue#1{#1}\fi
\ifx \bfpage  \undefined \def \bfpage#1{#1}\fi
\ifx \blpage  \undefined \def \blpage #1{#1}\fi
\ifx \burl  \undefined \def \burl#1{\textsf{#1}}\fi
\ifx \doiurl  \undefined \def \doiurl#1{\url{https://doi.org/#1}}\fi
\ifx \betal  \undefined \def \betal{\textit{et al.}}\fi
\ifx \binstitute  \undefined \def \binstitute#1{#1}\fi
\ifx \binstitutionaled  \undefined \def \binstitutionaled#1{#1}\fi
\ifx \bctitle  \undefined \def \bctitle#1{#1}\fi
\ifx \beditor  \undefined \def \beditor#1{#1}\fi
\ifx \bpublisher  \undefined \def \bpublisher#1{#1}\fi
\ifx \bbtitle  \undefined \def \bbtitle#1{#1}\fi
\ifx \bedition  \undefined \def \bedition#1{#1}\fi
\ifx \bseriesno  \undefined \def \bseriesno#1{#1}\fi
\ifx \blocation  \undefined \def \blocation#1{#1}\fi
\ifx \bsertitle  \undefined \def \bsertitle#1{#1}\fi
\ifx \bsnm \undefined \def \bsnm#1{#1}\fi
\ifx \bsuffix \undefined \def \bsuffix#1{#1}\fi
\ifx \bparticle \undefined \def \bparticle#1{#1}\fi
\ifx \barticle \undefined \def \barticle#1{#1}\fi
\bibcommenthead
\ifx \bconfdate \undefined \def \bconfdate #1{#1}\fi
\ifx \botherref \undefined \def \botherref #1{#1}\fi
\ifx \url \undefined \def \url#1{\textsf{#1}}\fi
\ifx \bchapter \undefined \def \bchapter#1{#1}\fi
\ifx \bbook \undefined \def \bbook#1{#1}\fi
\ifx \bcomment \undefined \def \bcomment#1{#1}\fi
\ifx \oauthor \undefined \def \oauthor#1{#1}\fi
\ifx \citeauthoryear \undefined \def \citeauthoryear#1{#1}\fi
\ifx \endbibitem  \undefined \def \endbibitem {}\fi
\ifx \bconflocation  \undefined \def \bconflocation#1{#1}\fi
\ifx \arxivurl  \undefined \def \arxivurl#1{\textsf{#1}}\fi
\csname PreBibitemsHook\endcsname

%%% 1
\bibitem[\protect\citeauthoryear{Fischer et~al.}{2023}]{Fischer2023}
\begin{bbook}
\bauthor{\bsnm{Fischer}, \binits{M.}},
\bauthor{\bsnm{Herbst}, \binits{L.}},
\bauthor{\bsnm{Kersting}, \binits{S.}},
\bauthor{\bsnm{K{\"u}hn}, \binits{A.L.}},
\bauthor{\bsnm{Wicke}, \binits{K.}}:
\bbtitle{Tree Balance Indices},
\bedition{2023} edn.
\bpublisher{Springer},
\blocation{Cham, Switzerland}
(\byear{2023})
\end{bbook}
\endbibitem

%%% 2
\bibitem[\protect\citeauthoryear{Kersting et~al.}{2025}]{Kersting2025}
\begin{botherref}
\oauthor{\bsnm{Kersting}, \binits{S.J.}},
\oauthor{\bsnm{Wicke}, \binits{K.}},
\oauthor{\bsnm{Fischer}, \binits{M.}}:
Tree balance in phylogenetic models.
Philosophical Transactions of the Royal Society B: Biological Sciences
\textbf{380}(1919)
(2025)
\doiurl{10.1098/rstb.2023.0303}
\end{botherref}
\endbibitem

%%% 3
\bibitem[\protect\citeauthoryear{Cleary et~al.}{2025}]{cleary2025}
\begin{botherref}
\oauthor{\bsnm{Cleary}, \binits{S.}},
\oauthor{\bsnm{Fischer}, \binits{M.}},
\oauthor{\bsnm{St.John}, \binits{K.}}:
The {GFB} tree and tree imbalance indices
(2025).
\url{https://arxiv.org/abs/2502.12854}
\end{botherref}
\endbibitem

%%% 4
\bibitem[\protect\citeauthoryear{Blum and François}{2005}]{Blum2005}
\begin{barticle}
\bauthor{\bsnm{Blum}, \binits{M.G.B.}},
\bauthor{\bsnm{François}, \binits{O.}}:
\batitle{On statistical tests of phylogenetic tree imbalance: the {S}ackin and other indices revisited}.
\bjtitle{Mathematical Biosciences}
\bvolume{195}(\bissue{2}),
\bfpage{141}--\blpage{153}
(\byear{2005})
\doiurl{10.1016/j.mbs.2005.03.003}
\end{barticle}
\endbibitem

%%% 5
\bibitem[\protect\citeauthoryear{Kersting}{2020}]{kersting2020}
\begin{botherref}
\oauthor{\bsnm{Kersting}, \binits{S.}}:
Genetic programming as a means for generating improved tree balance indices.
Master's thesis,
Universität Greifswald
(2020).
Master thesis
\end{botherref}
\endbibitem

%%% 6
\bibitem[\protect\citeauthoryear{Cardona et~al.}{2013}]{Cardona2013}
\begin{barticle}
\bauthor{\bsnm{Cardona}, \binits{G.}},
\bauthor{\bsnm{Mir}, \binits{A.}},
\bauthor{\bsnm{Rossell{\'o}}, \binits{F.}}:
\batitle{Exact formulas for the variance of several balance indices under the yule model}.
\bjtitle{J. Math. Biol.}
\bvolume{67}(\bissue{6-7}),
\bfpage{1833}--\blpage{1846}
(\byear{2013})
\end{barticle}
\endbibitem

%%% 7
\bibitem[\protect\citeauthoryear{Matsen}{2006}]{Matsen2006}
\begin{barticle}
\bauthor{\bsnm{Matsen}, \binits{F.A.}}:
\batitle{A geometric approach to tree shape statistics}.
\bjtitle{Syst. Biol.}
\bvolume{55}(\bissue{4}),
\bfpage{652}--\blpage{661}
(\byear{2006})
\end{barticle}
\endbibitem

%%% 8
\bibitem[\protect\citeauthoryear{Sackin}{1972}]{Sackin1972}
\begin{barticle}
\bauthor{\bsnm{Sackin}, \binits{M.J.}}:
\batitle{\enquote{Good} and \enquote{bad} phenograms}.
\bjtitle{Systematic Biology}
\bvolume{21}(\bissue{2}),
\bfpage{225}--\blpage{226}
(\byear{1972})
\doiurl{10.1093/sysbio/21.2.225}
\end{barticle}
\endbibitem

%%% 9
\bibitem[\protect\citeauthoryear{Fischer}{2021}]{Fischer2021}
\begin{barticle}
\bauthor{\bsnm{Fischer}, \binits{M.}}:
\batitle{Extremal values of the {S}ackin tree balance index}.
\bjtitle{Annals of {C}ombinatorics}
\bvolume{25}(\bissue{2}),
\bfpage{515}--\blpage{541}
(\byear{2021})
\doiurl{10.1007/s00026-021-00539-2}
\end{barticle}
\endbibitem

%%% 10
\bibitem[\protect\citeauthoryear{Colless}{1982}]{Colless1982}
\begin{barticle}
\bauthor{\bsnm{Colless}, \binits{D.}}:
\batitle{Review of \enquote{{P}hylogenetics: {T}he theory and practice of phylogenetic systematics}}.
\bjtitle{Systematic Zoology}
\bvolume{31}(\bissue{1}),
\bfpage{100}--\blpage{104}
(\byear{1982})
\end{barticle}
\endbibitem

%%% 11
\bibitem[\protect\citeauthoryear{Coronado et~al.}{2020}]{coronado2020}
\begin{botherref}
\oauthor{\bsnm{Coronado}, \binits{T.M.}},
\oauthor{\bsnm{Fischer}, \binits{M.}},
\oauthor{\bsnm{Herbst}, \binits{L.e.a.}}:
On the minimum value of the {C}olless index and the bifurcating trees that achieve it.
Journal of Mathematical Biology
\textbf{80}
(2020)
\doiurl{10.1007/s00285-020-01488-9}
\end{botherref}
\endbibitem

%%% 12
\bibitem[\protect\citeauthoryear{Blum et~al.}{2006}]{Blum2006}
\begin{barticle}
\bauthor{\bsnm{Blum}, \binits{M.G.B.}},
\bauthor{\bsnm{François}, \binits{O.}},
\bauthor{\bsnm{Janson}, \binits{S.}}:
\batitle{The mean, variance and limiting distribution of two statistics sensitive to phylogenetic tree balance}.
\bjtitle{Annals of Applied Probability}
\bvolume{16}(\bissue{4}),
\bfpage{2195}--\blpage{2214}
(\byear{2006})
\doiurl{10.1214/105051606000000547}
\end{barticle}
\endbibitem

%%% 13
\bibitem[\protect\citeauthoryear{Wicke and Fischer}{2020}]{Wicke2020}
\begin{barticle}
\bauthor{\bsnm{Wicke}, \binits{K.}},
\bauthor{\bsnm{Fischer}, \binits{M.}}:
\batitle{Combinatorial views on persistent characters in phylogenetics}.
\bjtitle{Adv. Appl. Math.}
\bvolume{119}(\bissue{102046}),
\bfpage{102046}
(\byear{2020})
\end{barticle}
\endbibitem

%%% 14
\bibitem[\protect\citeauthoryear{Kayondo et~al.}{2021}]{Kayondo2021}
\begin{barticle}
\bauthor{\bsnm{Kayondo}, \binits{H.W.}},
\bauthor{\bsnm{Ssekagiri}, \binits{A.}},
\bauthor{\bsnm{Nabakooza}, \binits{G.}},
\bauthor{\bsnm{Bbosa}, \binits{N.}},
\bauthor{\bsnm{Ssemwanga}, \binits{D.}},
\bauthor{\bsnm{Kaleebu}, \binits{P.}},
\bauthor{\bsnm{Mwalili}, \binits{S.}},
\bauthor{\bsnm{Mango}, \binits{J.M.}},
\bauthor{\bsnm{Leigh~Brown}, \binits{A.J.}},
\bauthor{\bsnm{Saenz}, \binits{R.A.}},
\bauthor{\bsnm{Galiwango}, \binits{R.}},
\bauthor{\bsnm{Kitayimbwa}, \binits{J.M.}}:
\batitle{Employing phylogenetic tree shape statistics to resolve the underlying host population structure}.
\bjtitle{BMC Bioinformatics}
\bvolume{22}(\bissue{1}),
\bfpage{546}
(\byear{2021})
\end{barticle}
\endbibitem

%%% 15
\bibitem[\protect\citeauthoryear{Currie and Wicke}{2024}]{Currie2024}
\begin{barticle}
\bauthor{\bsnm{Currie}, \binits{B.}},
\bauthor{\bsnm{Wicke}, \binits{K.}}:
\batitle{On the maximum value of the stairs2 index}.
\bjtitle{Adv. Appl. Math.}
\bvolume{159}(\bissue{102732}),
\bfpage{102732}
(\byear{2024})
\end{barticle}
\endbibitem

%%% 16
\bibitem[\protect\citeauthoryear{Fischer et~al.}{2026}]{fischer2026}
\begin{botherref}
\oauthor{\bsnm{Fischer}, \binits{M.}},
\oauthor{\bsnm{Hamann}, \binits{T.N.}},
\oauthor{\bsnm{Wicke}, \binits{K.}}:
A height-based metaconcept for rooted tree balance and its implications for the $B_1$ index
(2026).
\url{https://arxiv.org/abs/2601.15219}
\end{botherref}
\endbibitem

%%% 17
\bibitem[\protect\citeauthoryear{Mir et~al.}{2013}]{Mir2013}
\begin{barticle}
\bauthor{\bsnm{Mir}, \binits{A.}},
\bauthor{\bsnm{Rossell{\'o}}, \binits{F.}},
\bauthor{\bsnm{Rotger}, \binits{L.}}:
\batitle{A new balance index for phylogenetic trees}.
\bjtitle{Math. Biosci.}
\bvolume{241}(\bissue{1}),
\bfpage{125}--\blpage{136}
(\byear{2013})
\end{barticle}
\endbibitem

%%% 18
\bibitem[\protect\citeauthoryear{Matsen}{2007}]{Matsen2007}
\begin{barticle}
\bauthor{\bsnm{Matsen}, \binits{F.}}:
\batitle{Optimization over a class of tree shape statistics}.
\bjtitle{IEEE/ACM Transactions on Computational Biology and Bioinformatics}
\bvolume{4}(\bissue{3}),
\bfpage{506}--\blpage{512}
(\byear{2007})
\doiurl{10.1109/tcbb.2007.1020}
\end{barticle}
\endbibitem

%%% 19
\bibitem[\protect\citeauthoryear{Shao and Sokal}{1990}]{Colless1990}
\begin{barticle}
\bauthor{\bsnm{Shao}, \binits{K.-T.}},
\bauthor{\bsnm{Sokal}, \binits{R.R.}}:
\batitle{Tree balance}.
\bjtitle{Systematic Biology}
\bvolume{39}(\bissue{3}),
\bfpage{266}--\blpage{276}
(\byear{1990})
\doiurl{10.2307/2992186}
\end{barticle}
\endbibitem

%%% 20
\bibitem[\protect\citeauthoryear{Kersting and Fischer}{2021}]{Kersting2021}
\begin{barticle}
\bauthor{\bsnm{Kersting}, \binits{S.J.}},
\bauthor{\bsnm{Fischer}, \binits{M.}}:
\batitle{Measuring tree balance using symmetry nodes - a new balance index and its extremal properties}.
\bjtitle{Math. Biosci.}
\bvolume{341}(\bissue{108690}),
\bfpage{108690}
(\byear{2021})
\end{barticle}
\endbibitem

%%% 21
\bibitem[\protect\citeauthoryear{Hamoudi et~al.}{2017}]{hamoudi2017}
\begin{botherref}
\oauthor{\bsnm{Hamoudi}, \binits{Y.}},
\oauthor{\bsnm{Laplante}, \binits{S.}},
\oauthor{\bsnm{Mantaci}, \binits{R.}}:
Balanced mobiles with applications to phylogenetic trees and {H}uffman-like problems.
Preprint on webpage \url{https://www.irif.fr/~hamoudi/files/publications/BalancedMobiles.pdf}
(2017)
\end{botherref}
\endbibitem

%%% 22
\bibitem[\protect\citeauthoryear{{The OEIS Foundation Inc.}}{2024}]{oeis}
\begin{botherref}
\oauthor{\bsnm{{The OEIS Foundation Inc.}}}:
{The On-Line Encyclopedia of Integer Sequences}.
\url{https://oeis.org}
(2024)
\end{botherref}
\endbibitem

%%% 23
\bibitem[\protect\citeauthoryear{Colijn and Gardy}{2014}]{Colijn2014}
\begin{barticle}
\bauthor{\bsnm{Colijn}, \binits{C.}},
\bauthor{\bsnm{Gardy}, \binits{J.}}:
\batitle{Phylogenetic tree shapes resolve disease transmission patterns}.
\bjtitle{Evol. Med. Public Health}
\bvolume{2014}(\bissue{1}),
\bfpage{96}--\blpage{108}
(\byear{2014})
\end{barticle}
\endbibitem

%%% 24
\bibitem[\protect\citeauthoryear{Inc.}{2023}]{mathematica}
\begin{botherref}
\oauthor{\bsnm{Inc.}, \binits{W.R.}}:
Mathematica, {V}ersion 13.3.
Champaign, IL, 2023
(2023)
\end{botherref}
\endbibitem

%%% 25
\bibitem[\protect\citeauthoryear{Furnas}{1984}]{Furnas1984}
\begin{barticle}
\bauthor{\bsnm{Furnas}, \binits{G.W.}}:
\batitle{The generation of random, binary unordered trees}.
\bjtitle{J. Classif.}
\bvolume{1}(\bissue{1}),
\bfpage{187}--\blpage{233}
(\byear{1984})
\end{barticle}
\endbibitem

%%% 26
\bibitem[\protect\citeauthoryear{Kirkpatrick and Slatkin}{1993}]{Kirkpatrick1993}
\begin{barticle}
\bauthor{\bsnm{Kirkpatrick}, \binits{M.}},
\bauthor{\bsnm{Slatkin}, \binits{M.}}:
\batitle{Searching for evolutionary patterns in the shape of a phylogenetic tree}.
\bjtitle{Evolution}
\bvolume{47}(\bissue{4}),
\bfpage{1171}--\blpage{1181}
(\byear{1993})
\end{barticle}
\endbibitem

\end{thebibliography}
\end{document}